\documentclass[a4paper,UKenglish,cleveref,autoref,thm-restate]{lipics-v2021}

\hideLIPIcs  


\usepackage{graphicx} 
\usepackage{tikz}
\usepackage{mathtools}
\usepackage{amssymb,amsmath}
\usepackage{xspace}

\renewcommand{\S}{\mathcal{S}}

\usetikzlibrary{shapes.geometric,shapes.misc}
\usetikzlibrary{calc,snakes,fit,backgrounds}
\newcommand{\LCP}{\mathsf{LCP}}
\newcommand{\LCPR}{\mathsf{LCP}_{\mathsf{R}}}
\newcommand{\IPM}{\mathsf{IPM}}
\newcommand{\bird}{{layer}\xspace}
\newcommand{\birds}{{layers}\xspace}

\newcommand{\Pyr}{\mathbf{P}\xspace}
\newcommand{\Sync}{\mathbf{Sync}\xspace}
\newcommand{\flocks}{pyramids\xspace}

\newcommand{\RF}{\mathbf{RegP}\xspace}

\newcommand{\ov}{\mathsf{ov}\xspace}
\newcommand{\type}{\mathsf{type}}

\title{Counting Distinct Square Substrings in Sublinear Time}

\ccsdesc[500]{Theory of computation~Pattern matching}

\keywords{square in a string, packed model, run (maximal repetition), Lyndon word} 

\category{} 

\relatedversion{} 




\nolinenumbers 

\EventEditors{Pawe\l{} Gawrychowski, Filip Mazowiecki, and Micha\l{} Skrzypczak}
\EventNoEds{3}
\EventLongTitle{50th International Symposium on Mathematical Foundations of Computer Science (MFCS 2025)}
\EventShortTitle{MFCS 2025}
\EventAcronym{MFCS}
\EventYear{2025}
\EventDate{August 25--29, 2025}
\EventLocation{Warsaw, Poland}
\EventLogo{}
\SeriesVolume{345}
\ArticleNo{59}

\newcommand{\cO}{\mathcal{O}}
\newcommand{\Oh}{\mathcal{O}}

\newcommand{\Ohtilde}{\mathcal{\tilde{O}}}

\newcommand{\RUNS}{\mathsf{Runs}}

\newcommand{\rot}{\textsf{rot}}
\newcommand{\per}{\textsf{per}}
\newcommand{\floor}[1]{\left\lfloor #1 \right\rfloor}
\newcommand{\ceil}[1]{\left\lceil #1 \right\rceil}

\newtheorem{fact}[theorem]{Fact}

\newcommand{\defproblem}[3]{
\vspace{2mm}
\noindent\fbox{
   \begin{minipage}{0.96\textwidth}
   \textsc{#1}\\
   {\bf{Input:}} #2  \\
   {\bf{Output:}} #3
   \end{minipage}
   }
   \vspace{2mm}
}

\def\pillar{{\tt PILLAR}\xspace}
\def\dd{\mathinner{.\,.}}

\newcommand{\Lroot}{\textsf{Lroot}}
\newcommand{\Lrepr}{\textsf{Lrepr}}
\newcommand{\squares}{\textsf{squares}}
\newcommand{\sLroot}{\textsf{sLroot}}
\newcommand{\subper}{\textsf{subper}}
\newcommand{\fragsquares}{\mathrm {frag\mbox{-}squares}}

\author{Panagiotis Charalampopoulos}{King's College London, UK}{p.charalampopoulos@kcl.ac.uk}{https://orcid.org/0000-0002-6024-1557}{}
\author{Manal Mohamed}{Birkbeck, University of London, UK}{manalabd@gmail.com}{https://orcid.org/0000-0002-1435-5051}{}
\author{Jakub Radoszewski}{University of Warsaw, Poland}{jrad@mimuw.edu.pl}{https://orcid.org/0000-0002-0067-6401}{Supported by the Polish National Science Center, grant no.\ 2022/46/E/{\allowbreak}ST6/\allowbreak00463.}
\author{Wojciech Rytter}{University of Warsaw, Poland}{rytter@mimuw.edu.pl}{https://orcid.org/0000-0002-9162-6724}{}
\author{Tomasz Wale\'n}{University of Warsaw, Poland}{walen@mimuw.edu.pl}{https://orcid.org/0000-0002-7369-3309}{}
\author{Wiktor Zuba}{University of Warsaw, Poland}{w.zuba@mimuw.edu.pl}{https://orcid.org/0000-0002-1988-3507}{}

\authorrunning{P. Charalampopoulos et al.} 
\Copyright{Panagiotis Charalampopoulos, Manal Mohamed, Jakub Radoszewski, Wojciech Rytter, Tomasz Waleń, Wiktor Zuba} 

\begin{document}
\maketitle
\begin{abstract}
We show that the number of distinct squares in a packed string of length $n$ over an alphabet of size $\sigma$ can be computed in $\Oh(n/\log_\sigma n)$ time in the word-RAM model. This paper is the first to introduce a sublinear-time algorithm for counting squares in the packed setting. The packed representation of a string of length $n$ over an alphabet of size $\sigma$ is given as a sequence of $\Oh(n/\log_\sigma n)$ machine words in the word-RAM model (a machine word consists of $\omega \ge \log_2 n$ bits). Previously, it was known how to count distinct squares in $\Oh(n)$ time [Gusfield and Stoye, JCSS 2004], even for a string over an integer alphabet [Crochemore et al., TCS 2014; Bannai et al., CPM 2017; Charalampopoulos et al., SPIRE 2020]. We use the techniques for extracting squares from runs described by Crochemore et al. [TCS 2014]. However, the packed model requires novel approaches.

We need an $\Oh(n/\log_\sigma n)$-sized representation of all long-period runs (runs with period $\Omega(\log_\sigma n)$) which allows for a sublinear-time counting of the---potentially linearly-many---implied squares. The long-period runs with a string period that is periodic itself (called layer runs) are an obstacle, since their number can be $\Omega(n)$. The number of all other long-period runs is $\Oh(n/\log_\sigma n)$ and we can construct an implicit representation of all long-period runs in $\Oh(n/\log_\sigma n)$ time by leveraging the insights of Amir et al. [ESA 2019]. We count squares in layer runs by exploiting combinatorial properties of pyramidally-shaped groups of layer runs. Another difficulty lies in computing the locations of Lyndon roots of runs in packed strings, which is needed for grouping runs that may generate equal squares. To overcome this difficulty, we introduce sparse-Lyndon roots which are based on string synchronizers [Kempa and Kociumaka, STOC 2019].
\end{abstract}

\section{Introduction}
We consider a problem of counting distinct squares (and more generally, powers) in a string.
Such problems are important not
only from a purely theoretical point of view, but are also 
relevant in some applications in bioinformatics (see the book~\cite{DBLP:books/cu/Gusfield1997}).
Strings of the form $X^2=XX$, for a non-empty string $X$, called \emph{squares} (or tandem repeats),  are  the most
natural type of repetition.

A fundamental algorithmic problem related to squares is checking
if a given string of length~$n$ is \emph{square-free}, that is, if it avoids square substrings.
Thue's construction of an infinite ternary square-free string~\cite{Thue} can be viewed as the beginning of combinatorics on words. 
The first $\Oh(n \log n)$-time algorithm for checking square-freeness was given by Main and Lorentz~\cite{DBLP:journals/jal/MainL84}. An $\Oh(n)$-time algorithm for this problem, for the case of a constant-sized alphabet, was proposed by Crochemore~\cite{DBLP:journals/tcs/Crochemore86}. Subsequently, $\Oh(n)$-time algorithms for square-freeness over an integer alphabet and over a general ordered alphabet follow from Kolpakov and Kucherov's~\cite{DBLP:conf/focs/KolpakovK99} and Ellert and Fischer's~\cite{DBLP:conf/icalp/Ellert021} algorithms for computing runs under these assumptions, respectively.
Most recently, Ellert, Gawrychowski, and Gourdel~\cite{DBLP:conf/soda/EllertGG23} obtained an $\Oh(n \log \sigma)$-time algorithm for testing square-freeness of a string over a general unordered alphabet of size $\sigma$; they also showed that the algorithm is optimal under these assumptions. Square-freeness was also studied in on-line~\cite{DBLP:journals/tcs/HongC08,DBLP:conf/cpm/Kosolobov15}, parallel~\cite{DBLP:journals/algorithmica/Apostolico92,DBLP:journals/siamcomp/ApostolicoB96,DBLP:journals/ipl/CrochemoreR91} and dynamic~\cite{DBLP:conf/esa/AmirBCK19} settings.

A  much more challenging problem than testing square-freeness, that has received significant attention, is computing the number of \emph{distinct} square substrings of a given string.
Fraenkel and Simpson were the first to show that a string of length $n$ contains $\Oh(n)$ dictinct squares~\cite{DBLP:journals/jct/FraenkelS98}. Brlek and Li very recently, using arguments from linear algebra and graph theory, improved the $2n$ upper bound of Fraenkel and Simpson to just $n$; see~\cite{brlek2022numbersquaresfiniteword,DBLP:conf/cwords/BrlekL23}.

Linear-time algorithms for counting distinct square substrings were proposed by Gusfield and Stoye~\cite{DBLP:journals/jcss/GusfieldS04}, Crochemore et al.~\cite{DBLP:journals/tcs/CrochemoreIKRRW14}, Bannai, Inenaga, and K{\"{o}}ppl~\cite{DBLP:conf/cpm/BannaiIK17}, and Charalampopoulos et al.~\cite{DBLP:conf/spire/Charalampopoulos20}; notably, the last three results work for a string over an integer (generally, linearly sortable) alphabet.
As already mentioned, testing square-freeness is a simpler problem than counting distinct squares. In particular,
for a general (ordered) alphabet, element distinctness can be reduced in linear time to counting squares\footnote{As for the reduction, a sequence $a_1,\ldots,a_n$ contains a repeating element, if and only if, string $a_1^2b_1a_2^2b_2\cdots a_n^2b_n$, for a square-free string $b_1b_2\dots b_n$ (say, a prefix of the infinite ternary square free string~\cite{Thue}), contains less than $n$ distinct square substrings.}, and the latter problem is hard: it requires $\Omega(n\log n)$ time in the comparison model \cite{DBLP:conf/stoc/Ben-Or83}.

In the word-RAM model of computation with word size $\Theta(\log n)$, we may store up to
$\Omega(\log_{\sigma} n)$ string characters in a single machine word, where $\sigma$ is the size of the alphabet.
The \emph{packed representation} of a length-$n$ string $S$ over an integer alphabet
$[0 \dd \sigma)$ is a sequence of $\Oh(n/\log_{\sigma} n)$ integers, each encoding a fragment of $S$ of length $\Oh(\log_{\sigma} n)$.

A recent line of work has yielded $o(n)$-time solutions for several basic stringology problems in the setting where the input string(s) is/are given in packed form (the \emph{packed setting}).
These include pattern matching \cite{DBLP:journals/tcs/Ben-KikiBBGGW14} and indexing~\cite{DBLP:conf/stoc/KempaK19,DBLP:conf/cpm/MunroNN20}, computing the LZ factorization and BWT~\cite{DBLP:conf/spire/Ellert23,DBLP:conf/soda/Kempa19,DBLP:conf/stoc/KempaK19,DBLP:conf/focs/KempaK24}, the longest common substring~\cite{DBLP:conf/esa/Charalampopoulos21}, the longest palindromic substring~\cite{DBLP:conf/cpm/Charalampopoulos22}, the Lyndon array \cite{DBLP:conf/esa/BannaiE23}, and covers of a string~\cite{DBLP:conf/spire/RadoszewskiZ24}.
While for some of the discussed problems, such as pattern matching, optimal $\cO(n/ \log_\sigma n)$-time algorithms exist, for several others, such as BWT construction, the best known algorithms run in time $\cO(n\sqrt{\log n} / \log_\sigma n)$.
A~recent work of Kempa and Kociumaka~\cite{kk2025packed} shed light on the source of difficulty of several stringology problems
for which the state-of-the-art solutions take $\cO(n\sqrt{\log n} / \log_\sigma n)$ time.

We are the first to study the problem of counting squares in the packed setting. The problem is formally defined as follows.

\defproblem{Packed Counting Of Distinct Squares}{A string $T$ of length $n$ over alphabet $[0\dd \sigma)$ given in a packed representation.}{$|\squares(T)|$, where $\squares(T)$ denotes the set of all squares that are equal to some substring of $T$.}

\begin{example}\label{ex:triv'}
Consider string $T = (\texttt{ab})^{1000} (\texttt{ba})^{1000}$; 
see \cref{fig:enter-label} for a comparison.  This string contains 2000 distinct squares. We have
\[\squares(T)\,=\,\{\texttt{b} (\texttt{ab})^i \texttt{b} (\texttt{ab})^i:\, 0\le i\le 999\}\cup 
\{(\texttt{ab})^{2i}:\, 1\le i\le 500\}\cup \{(\texttt{ba})^{2i}:\, 1\le i\le 500\}.\] 
\end{example}

We settle the time complexity of the \textsc{Packed Counting Of Distinct Squares} problem, classifiying it as one of the elementary stringology problems that admit an $\cO(n/ \log_\sigma n)$-time solution in the packed setting.

\begin{restatable}{theorem}{mainthm}\label{maintheorem}
The \textsc{Packed Counting Of Distinct Squares} problem can be solved in $\cO(n / \log_\sigma n)$ time.
\end{restatable}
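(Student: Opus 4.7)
My plan is to follow the standard reduction of Crochemore et al.\ from distinct-square counting to analyzing \emph{runs}: every distinct square can be attributed to a unique run, and once runs are grouped by their Lyndon root the number of distinct squares each group contributes is a simple arithmetic function of the run endpoints. Executing this reduction within $\Oh(n/\log_\sigma n)$ time needs three ingredients: an implicit $\Oh(n/\log_\sigma n)$-size representation of the runs; a way to group runs by Lyndon root without computing the root naively; and a way to sum per-group square counts even when a group secretly contains $\Omega(n)$ runs.

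I fix a threshold $\tau = \Theta(\log_\sigma n)$ and split runs into \emph{short-period} (period $\le \tau$) and \emph{long-period} ones. Short-period squares have length $\Oh(\log_\sigma n)$, so each fits in $\Oh(1)$ machine words; I can enumerate the $\Oh(n/\log_\sigma n)$ distinct short windows of $T$ via packed sorting/fingerprinting, filter out the ones that are squares, and add them to the count. For long-period runs I build an \emph{implicit} representation in $\Oh(n/\log_\sigma n)$ time by lifting Amir et al.'s [ESA 2019] framework to the packed setting together with packed internal-pattern-matching primitives. Among these runs, the \emph{non-layer} ones (with aperiodic Lyndon root) number only $\Oh(n/\log_\sigma n)$ by a standard counting argument on primitive periods, so each can be processed explicitly and contributes $\Oh(1)$ work.

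The \emph{layer} long-period runs, whose Lyndon root is itself periodic, can number $\Omega(n)$, so explicit per-run processing is impossible. The key structural fact is that such runs cluster into \flocks, each parametrised by its \transporter and a constant number of integers, with only $\Oh(n/\log_\sigma n)$ \flocks in total. Within a single \flock, the periods, Lyndon roots, and endpoints of the constituent layer runs form highly structured (essentially arithmetic) sequences, which should let me count the squares they jointly generate by a closed-form expression in $\Oh(1)$ time per \flock. I expect this aggregation -- and in particular the deduplication of squares that are also generated by a non-layer run or by a neighbouring \flock -- to be the main technical obstacle, because it demands a precise combinatorial description of which squares each \flock ``owns''.

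For the Lyndon-root grouping I sidestep naive computation via a \emph{sparse-Lyndon root} built on top of a Kempa--Kociumaka string synchronizer: the synchronizer induces $\Oh(n/\log_\sigma n)$ canonical positions and yields a conjugacy-class representative that can be extracted from the packed representation in $\Oh(1)$ time. Two runs with equal Lyndon roots share a sparse-Lyndon root, so grouping by the latter refines the partition we actually need, and any residual over-refinement can be reconciled locally within each group. Summing contributions from short-period runs, non-layer long-period runs, and \flocks of layer runs -- unified through the sparse-Lyndon-root partition -- yields $|\squares(T)|$ within $\Oh(n/\log_\sigma n)$ time, establishing \cref{maintheorem}.
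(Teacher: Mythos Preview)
Your overall architecture---an implicit representation of long-period runs via the Amir et al.\ machinery, \flocks for layer runs, and sparse-Lyndon roots built on a Kempa--Kociumaka synchronizing set for grouping---matches the paper's closely. There is, however, a genuine gap in your treatment of short-period runs.

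You assert that ``short-period squares have length $\Oh(\log_\sigma n)$'' and propose to enumerate distinct short windows. This is false: a run with period $p\le\tau$ can have length up to $n$ and hence generates squares of arbitrary length. For instance, the single run $(\mathtt{ab})^{m}$ has period $2$ but generates every np-square $(\mathtt{ab})^{2k}$ for $k\le m/2$; none of these (for $k=\omega(\tau)$) is visible in any $\Oh(\log_\sigma n)$-length window. More generally, \emph{all} non-primitively-rooted squares whose primitive root is short are produced by short-period runs and are missed entirely by your window enumeration. The paper does not try to enumerate short squares directly; instead it computes all short-period \emph{runs}---$\Oh(n/\log_\sigma n)$ $\tau$-runs plus a tabulation for the short-period short-length runs---groups them by their actual Lyndon root (which fits in $\Oh(1)$ machine words in this regime), and then applies the Crochemore et al.\ interval-union count (\cref{lem:Extracting}) to extract both the p-squares and the arbitrarily long np-squares from each group.

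A secondary point: you correctly flag the \flock deduplication as the crux but leave it as an ``expectation''. The paper's concrete device is the \emph{pyramid type} $(\ov,X,Y)$ of \cref{lem:key,lem:spare_key}: pyramids of different types generate disjoint sets of special squares, and among pyramids of the same type the squares of one are contained in those of any taller one, so it suffices to keep a single maximal pyramid per type and read off a closed-form count. Without this invariant (or an equivalent one), the $\Oh(1)$-per-\flock aggregation you rely on does not follow, since the same special square can appear in many \flocks.
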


Moreover, our algorithm can report $k$ distinct squares, for any $k$ between 0 and the actual number of distinct squares in the string, in $\Oh(n/\log_{\sigma}n+k)$ time.
Our algorithm generalizes readily to powers with higher exponent: for any integer $t \ge 2$, a string of length $n$ contains at most $n/(t-1)$ powers with exponent $t$~\cite{DBLP:journals/combinatorics/LiPR24} and we can compute the actual number of those in a packed string in $\Oh(n/\log_{\sigma}n)$ time.

\subparagraph*{Other related work.}

A string of length $n$ contains $\Oh(n \log n)$ substrings that are primitively rooted squares and they can all be computed in $\Oh(n \log n)$ time; see~\cite{DBLP:journals/ipl/Crochemore81,DBLP:journals/algorithmica/CrochemoreR95,DBLP:journals/tcs/StoyeG02}. The same representation is computed by Apostolico and Breslauer's parallel algorithm in \cite{DBLP:journals/siamcomp/ApostolicoB96}.
We note that these algorithms compute \emph{all occurrences} of primitively rooted squares and are not concerned with whether any two computed substrings correspond to the same square.

\subsection*{Technical Overview}
Let $T$ be a string of length $n$ over alphabet $[0 \dd \sigma)$. 
It was already observed by Crochemore et al.~\cite{DBLP:journals/tcs/CrochemoreIKRRW14} that distinct squares in $T$ can be computed from runs (maximal periodic fragments). This is because a square $U^2$ can be extended to a unique run with the same period as the primitive root of $U^2$ and a string of length $n$ contains $\Oh(n)$ runs that can be computed in $\Oh(n)$ time~\cite{DBLP:conf/focs/KolpakovK99,DBLP:journals/siamcomp/BannaiIINTT17,DBLP:conf/icalp/Ellert021}. Amir et al.~\cite{DBLP:conf/esa/AmirBCK19} (see also \cite{Panos}) proposed an algorithm that efficiently maintains a representation of runs in a dynamic string.
We note that their algorithm works in the \pillar model of Charalampopoulos, Kociumaka, and Wellnitz~\cite{DBLP:conf/focs/Charalampopoulos20} and it can be used to compute a representation of all runs in $T$ with period at least $\log_\sigma n$ in $\Oh(n/\log_\sigma n)$ time.
All the remaining runs in $T$ either fit in a machine word or are so-called $\tau$-runs (see~\cite{DBLP:conf/stoc/KempaK19}); the number of the latter is $\Oh(n/\log_\sigma n)$. As a result, a representation of all runs in $T$ can be computed in $\Oh(n/\log_\sigma n)$ time and space. This representation can be used to compute the longest square in a string in $\Oh(n/\log_\sigma n)$ time in a straightforward way.

Computing the number of distinct squares  from this representation is much more challenging.
The first obstacle  towards achieving this goal is the difficulty in grouping the runs with respect to their Lyndon roots in packed strings.
We overcome this obstacle by introducing a version of Lyndon roots more suitable for the packed model, called here \emph{sparse} Lyndon roots. Positions of such 
nonstandard roots are based on synchronizing sets of
positions (see Kempa and Kociumaka \cite{DBLP:conf/stoc/KempaK19}).

Let us call squares $U^2$ whose root $U$ is both primitive and highly periodic (here: containing at least 4 occurrences of the period) \emph{special}, while remaining squares are called {\it plain}.
Plain squares can be efficiently counted by combining tabulation with the approach of \cite{DBLP:journals/tcs/CrochemoreIKRRW14} applied on a selected $\cO(n / \log_\sigma n)$-sized subset of the runs of $T$, after grouping these runs by their Lyndon roots (for small-period runs) or sparse Lyndon roots (for long-period runs).
Counting special squares is significantly more challenging.
We tackle this problem by processing certain families of runs.
The runs corresponding to special squares with 
large period (larger than $\log_\sigma n$) are called here {\it layer-runs}. 
The crucial point is that these layer-runs
can be grouped in $\cO(n / \log_\sigma n)$ groups called
``pyramids'', though they can contain together $\Omega(n)$ layer-runs. 
These ``pyramids'' are very regular and counting squares in them can be done 
in batches in sublinear time.
Let us provide a trivial yet illustrative example.

\begin{example}\label{ex:triv}
Consider string $S = (\texttt{ab})^{m} (\texttt{ba})^{m}$.
For each $i \in [0 \dd m)$, we have a square of the form $\texttt{b} (\texttt{ab})^i \texttt{b} (\texttt{ab})^i$ that occurs at (0-based) position $2m - 1 - 2i$; this square is primitively rooted and, in fact, $S[2m - 1 - 2i \dd 2m  + 2i]$ is a run.
These are the only primitively rooted squares in $S$ other than primitively rooted squares $\texttt{abab}$ and $\texttt{baba}$; 
see \cref{fig:enter-label}. 

\begin{figure}[h]
    \centering
    \begin{tikzpicture}
\pgfmathsetmacro{\textWidth}{0.3}
\pgfmathsetmacro{\H}{0.3}
\foreach \i in {0, ..., 31} {
  \pgfmathsetmacro{\character}{%
    ifthenelse(\i<16,
      ifthenelse(mod(\i,2)==0,"a","b"),
      ifthenelse(mod(\i,2)==0,"b","a")
    )
  }
  \pgfmathsetmacro{\col}{ifthenelse(\i<16,"red!80!black","green!50!black")}
  \node (v\i) at (\i*\textWidth,0) [above,\col] {\small \tt \character};
}
\foreach \x [count=\i] in {1, 3, 5, 7, 9, 11} {
  \pgfmathsetmacro{\y}{31-\x}
  \def\left{(\x*\textWidth-0.4*\textWidth,0.3+\i*\H)}
  \def\right{(\y*\textWidth+0.4*\textWidth,0.3+\i*\H)}
  \draw \left--\right;
  \draw \left--+(0,-\H*0.7);
  \draw \right--+(0,-\H*0.7);
}
\end{tikzpicture}
    \caption{The pyramidal-shaped structure of six special squares contained in $(\mathtt{ab})^8(\mathtt{ba})^8$.}
    \label{fig:enter-label}
\end{figure}

Using the approach of Crochemore et al.~\cite{DBLP:journals/tcs/CrochemoreIKRRW14}, we can count all plain squares in the string from \cref{ex:triv} by processing a constant number of runs: $(\texttt{ab})^{m}$, $(\texttt{ba})^{m}$, $\texttt{bb}$, and $(\texttt{b}(\texttt{ab})^i)^{2}$ for $i \in \{1,2,3\}$.
However, there are $\Theta(n)$ runs of the form $\texttt{b} (\texttt{ab})^i \texttt{b} (\texttt{ab})^i$ for $i \geq 4$,
each corresponding to a special square, and we clearly cannot afford to iterate over these runs in $o(n)$ time.
All such special squares in our example have their first half in run $(\texttt{ab})^{m}$ and their second half in run $(\texttt{ba})^{m}$, and that these two runs have the same Lyndon root (\texttt{ab}).
\end{example}

We employ an $\cO(n/\log_\sigma n)$-sized representation of all runs that generate special squares.
As presented intuitively in \cref{ex:triv}, the representation relies on $\cO(n/\log_\sigma n)$ pairs $T[a \dd c]$ and $T[b \dd d]$ of runs that have the same Lyndon root $\lambda$ and satisfy $b \in (c-|\lambda|+1 \dd c+1]$.
Counting special squares from said representation is reduced to appropriately grouping the computed pairs of runs.

\section{Preliminaries}
For a string $S$, its positions are numbered as $S[0],\ldots,S[|S|-1]$. If $|S|=0$, $S$ is called the empty string. A string consisting of characters $S[i],S[i+1],\ldots,S[j]$ is a substring of $S$.  A fragment of $S$ is a positioned substring; that is, a fragment $S[i \dd j]$ represents the substring $S[i],S[i+1],\ldots,S[j]$. We also denote $S[i \dd j]=S[i \dd j+1)$.
Where possible, we treat fragments and substrings as equivalent.
For two fragments $F=S[a \dd b]$ and $F'=S[a' \dd b']$, we write $F \subseteq F'$ if $[a \dd b] \subseteq [a' \dd b']$. Two fragments $S[a \dd b]$ and $S[a' \dd b']$ are called \emph{neighboring} if $[a-1 \dd b+1] \cap [a' \dd b'] \ne \emptyset$. For two neighboring fragments $F=S[a \dd b]$ and $F'=S[a' \dd b']$, by $F \cup F'$ we denote the fragment $S[\min(a,a') \dd \max(b,b')]$ and by $F \cap F'$ we denote the fragment $S[\max(a,a') \dd \min(b,b')]$.

We say that a positive integer $p$ is a period of string $S$ if $p \le |S|$ and $S[i]=S[i+p]$ for all $i \in [0 \dd |S|-p)$; equivalently, $S[0 \dd |S|-p)=S[p \dd |S|)$. Fine and Wilf's periodicity lemma~\cite{fine1965uniqueness} asserts that if a string of length~$n$ has periods $p$ and $q$ such that $p+q \le n$, then the string has period $\gcd(p,q)$.
By $\per(S)$ we denote the smallest period of $S$ to which we refer as \emph{the period} of $S$.

A non-empty string $S$ is called primitive if the equality $S=U^t$ for a positive integer $t$ implies that $t=1$. By $\rot_c(S)$ we denote a cyclic rotation of the string $S$, obtained by moving the $c$ first characters of $S$ to its end. A string is called a \emph{Lyndon string} if it is primitive and lexicographically minimal in the class of its cyclic rotations.

We say that a string $S$ is \emph{periodic} if $\per(S) \le \frac12 |S|$ and \emph{highly periodic} if $\per(S) \le \frac14 |S|$.
The \emph{Lyndon root} of a periodic string~$S$, denoted by $\Lroot(S)$, is the lexicographically smallest rotation of $S[0\dd \per(S))$.
For example,
    $\Lroot((\mathtt{abaaa})^3\, \mathtt{aba})\, =\,\mathtt{aaaab}$.

\begin{definition}\label{def:lrep}
The \emph{Lyndon representation} of a periodic string $U$ is a quadruple $\Lrepr(U)=(\lambda,e,\alpha,\beta)$ such that:

\begin{itemize}
\item $\lambda=\Lroot(U)$, and
\item $U=P\lambda^e S$ with $|P|= \alpha < |\lambda|$ and $|S|=\beta<|\lambda|$. ($P$ and/or $S$ can be the empty string.)
\end{itemize}
\end{definition}

\begin{example} We have $\Lrepr(U)\,=\,(\mathtt{aaaab},3,2,1)$ for
 \[U\,=\,
 (\mathtt{abaaa})^3\,\mathtt{aba}
 \,=\,\mathtt{abaaa\, abaaa\, abaaa\, aba}
 \,=\, \mathtt{ab}\, (\mathtt{aaaab})^3\, \mathtt{a}. \] 
\end{example}

A \emph{run} in a string $T$ is a fragment $F=T[a \dd b]$ that is periodic, that is, $p:=\per(F) \le \frac12 |F|$, and inclusion-maximal, that is,
\begin{itemize}
\item $a=0$ or $T[a-1] \ne T[a-1+p]$ and
\item $b=|T|-1$ or $T[b+1] \ne T[b+1-p]$.
\end{itemize}
We denote the set of runs in $T$ by $\RUNS(T)$.
A string of length $n$ contains at most $n$ runs and they can be computed in $\Oh(n)$ time~\cite{DBLP:journals/siamcomp/BannaiIINTT17}, even if the string is over an arbitrary ordered alphabet~\cite{DBLP:conf/icalp/Ellert021}. 

The \emph{Lyndon position} of a run $R=T[a \dd b]$ with Lyndon root $\lambda$ is the unique position $i \in [a \dd a+\per(R))$ such that $T[i \dd i+\per(R)) = \lambda$.

A square is a string of the form $X^2=XX$ for a non-empty string $X$. A square $X^2$ is called \emph{primitively rooted} if $X$ is primitive and \emph{non-primitively-rooted} otherwise.

We say that a square $X^2$ is \emph{generated} by a periodic string $U$ if $X^2$ is contained in $U$ and $\per(X^2)=\per(U)$. By the periodicity lemma, in this case $|X|$ is a multiple of $\per(U)$.
We denote by 
$\fragsquares(U)$
the set of squares generated by a periodic string $U$.

\begin{example}
For the underlined run $R=T[1\dd 13]$ with period 2 in $T=\mathtt{c\underline{abababababab}d}$, we have
$\fragsquares(R)=\{(\mathtt{ab})^2,(\mathtt{ba})^2,(\mathtt{abab})^2,(\mathtt{baba})^2,(\mathtt{ababab})^2\}$.
\end{example}
For a set $\mathcal{X}$ of periodic fragments of $T$ we denote \[\fragsquares(\mathcal{X})\,=\, \bigcup_{U\in \mathcal{X}}\, \fragsquares(U).\]
The following observation states that each square in a string is generated by a run.

\begin{observation}[\cite{DBLP:journals/tcs/CrochemoreIKRRW14}]\label{obs:square_run}
$\squares(T)\,=\,  \fragsquares(\RUNS(T)).$
\end{observation}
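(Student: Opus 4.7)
The plan is to verify the equality by proving the two inclusions separately. The inclusion $\fragsquares(\RUNS(T)) \subseteq \squares(T)$ is immediate from the definitions: if $X^2$ is generated by some run $U \in \RUNS(T)$, then in particular $X^2 \subseteq U$, so $X^2$ occurs as a substring of $T$ and hence belongs to $\squares(T)$. So the work is entirely in the reverse inclusion.

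For $\squares(T) \subseteq \fragsquares(\RUNS(T))$, I would fix an arbitrary $X^2 \in \squares(T)$ and pick any occurrence, say $F := T[i\dd i+2|X|-1]$. Let $p := \per(X^2) = \per(F)$, so $p \le |X| = |F|/2$, i.e.\ $F$ is periodic. I would then extend $F$ to the unique inclusion-maximal fragment $F' = T[a\dd b]$ having $p$ as a period, i.e.\ take $a$ minimal and $b$ maximal such that $T[a\dd b]$ still has period $p$. By construction the boundary conditions are satisfied: either $a=0$ or $T[a-1]\neq T[a-1+p]$, and symmetrically on the right.

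The crucial step is to show that $F'$ is actually a run, which amounts to checking $\per(F') = p$ (not some smaller divisor). Suppose toward a contradiction that $p' := \per(F') < p$. Since $F \subseteq F'$, the fragment $F$ also has period $p'$, in addition to its period $p$. Because $|F| = 2|X| \ge 2p \ge p + p'$, Fine and Wilf's periodicity lemma applies to $F$ and yields that $F$ has period $\gcd(p,p') \le p' < p$, contradicting $p = \per(F)$. Hence $\per(F') = p \le |X| \le |F'|/2$, so $F'$ is periodic and inclusion-maximal with that period, matching the definition of a run. Finally, $X^2 \subseteq F'$ and $\per(X^2) = p = \per(F')$, so $X^2 \in \fragsquares(F') \subseteq \fragsquares(\RUNS(T))$, completing the proof.

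The only non-trivial step is the Fine–Wilf argument showing that maximal extension preserves the smallest period; the rest is bookkeeping with the definitions of runs, periods, and $\fragsquares$.
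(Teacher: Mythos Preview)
Your proof is correct and is the standard argument. Note that the paper does not actually prove this observation; it is stated with a citation to Crochemore et al.~\cite{DBLP:journals/tcs/CrochemoreIKRRW14} and no proof is given, so there is nothing in the paper to compare your approach against. Your argument---extend an occurrence of the square to the maximal fragment with period $p=\per(X^2)$ and use Fine--Wilf to rule out a smaller period appearing in the extension---is exactly the expected justification for this folklore fact.
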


Unfortunately, from the point of view of counting, 
the same square string can be generated by many runs.
Dealing with squares whose first half is both primitive and (highly) periodic is the most challenging.
The next section is devoted to runs that generate such squares.

The next fact follows by using radix sort (with bucket sort).

\begin{fact}\label{fact:bsort}
A list of $\Oh(n/\log_\sigma n)$ $k$-tuples of integers in $[0 \dd n)$, for any constant $k>0$, can be sorted lexicographically in a stable manner in $\Oh(n/\log_\sigma n)$ time.
\end{fact}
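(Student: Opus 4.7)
The plan is to use standard least-significant-digit (LSD) radix sort with a carefully chosen large base. Let $N = \Oh(n/\log_\sigma n)$ be the number of input tuples. A naive bucket sort that allocates one bucket per key in $[0 \dd n)$ would cost $\Oh(N+n) = \Oh(n)$, which is too slow; the only real issue is to decompose each integer into just $\Oh(1)$ digits in a base small enough that allocating the bucket array still fits inside the target budget.

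Concretely, I would set the base $b := \lceil n / \log_\sigma n \rceil = \Theta(N)$. Since $\sigma \ge 2$, we have $b \ge n/\log n$, so $\log b \ge \log n - \log \log n \ge \tfrac{1}{2}\log n$ for $n$ sufficiently large, and hence $\lceil \log_b n \rceil \le 2$. Every integer in $[0 \dd n)$ can therefore be written as an ordered pair of base-$b$ digits, and each $k$-tuple becomes a sequence of at most $2k = \Oh(1)$ digits, each drawn from $[0 \dd b)$. This rewriting of the input takes $\Oh(N)$ time in the word-RAM model, since extracting a base-$b$ digit is an $\Oh(1)$ arithmetic operation.

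The algorithm then performs $2k$ passes of stable bucket sort in LSD order. One pass works in the usual way: allocate an array of $b$ empty buckets (singly linked lists), scan the $N$ items appending each to the bucket indexed by its current digit, and then concatenate the buckets in digit order. Appending in input order keeps each individual pass stable, and the LSD composition preserves stability across passes, yielding a stable lexicographic sort of the original tuples. Each pass runs in $\Oh(N+b) = \Oh(n/\log_\sigma n)$ time, and with $\Oh(1)$ passes the overall running time is $\Oh(n/\log_\sigma n)$, as required.

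There is no deep obstacle here; the only subtlety is making sure that initializing and traversing the bucket array is charged against $b = \Theta(N)$ rather than $n$, which is immediate from the choice of base, and that the passes are composed in LSD order so stability is not lost. The constant hidden in $\Oh(\cdot)$ depends linearly on $k$, but $k$ is a constant by hypothesis, so this is harmless.
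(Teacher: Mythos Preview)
Your proof is correct and follows essentially the same approach as the paper: LSD radix sort after expanding each integer in $[0\dd n)$ into two digits over a base small enough that the bucket array fits in the $\Oh(n/\log_\sigma n)$ budget. The only cosmetic difference is the choice of base---you take $b=\lceil n/\log_\sigma n\rceil$ whereas the paper takes $m=\lceil\sqrt{n}\rceil$---but both yield two digits per integer and $2k=\Oh(1)$ stable bucket-sort passes.
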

\begin{proof}
We treat each number $x$ from $[1\dd n]$ in the base-$m$ number system for $m=\lceil\sqrt{n}\rceil$, that is, $x$ is a 2-tuple of integers $(u,v)$ in $[0\dd m)$ such that $x=u\cdot m+v$. Thus each $k$-tuple is transformed into a $2k$-tuple with components in $[0 \dd m)$. We then apply radix sort to the resulting tuples (using bucket sort on $m$ buckets in a stable way).
\end{proof}

\section{Pyramids of Runs}
In this section we describe the structure of runs that generate special squares.
\begin{definition}[Subperiodic strings]\label{def:subper}
For a periodic string $U$ we define 
\[\subper(U)\,=\, \min \,\{\, \per(X)\,:\, X^2  \in \fragsquares(U)\,\}.\] 
A periodic string $U$ is called \emph{subperiodic} if $\subper(U)\le \per(U)/4$.
\end{definition}

\begin{example}
The string $U\,=\, \mathtt{ab}\,(\mathtt{babababab})^5\,
\mathtt{ba}$ with period 5 generates a subperiodic square $(\mathtt{babababab})^2$ of length $2\cdot \per(U)$. Thus $U$ is subperiodic, and $\subper(U)=2$. Observe that all the remaining squares generated by $U$, in particular, $(\mathtt{ababababb})^2$ and $(\mathtt{babababab})^4$, are not subperiodic.
\end{example}

A special square is a square which is primitively rooted and subperiodic. All other squares are called plain.

For two neighboring runs $F,F'$ with equal period $p$ in $T$, we have $|F\cap F'| \leq p-1$~\cite{DBLP:conf/focs/KolpakovK99}.
Two such runs can induce a collection of runs with subperiod $p$.
We formalize this structure in the following definition and provide illustrations in \cref{fig:flock,fig:flock2}.

\begin{definition}\label{pyramid} Let $F$ and $F'$ be neighboring runs in $T$ with period $p$ and equal Lyndon roots. A \emph{pyramid} $\Pyr(F,F')$ of runs is the set
\[\{R\,:\, R \ \mbox{is a subperiodic run},\  \subper(R)=p,\
R \cap (F \cup F') \text{ is periodic with period } \per(R) \}.\]
If $R\in \Pyr(F,F')$, run $R$ is called a \emph{layer-run} (or a \emph{layer} for brevity). 
\end{definition}

\begin{remark}\label{rem:nonemptyfs}
We show that all layers in $\Pyr(F,F')$ are contained in $F \cup F'$
except possibly the longest layer (for example the red layer in \cref{fig:flock2}).
\end{remark}

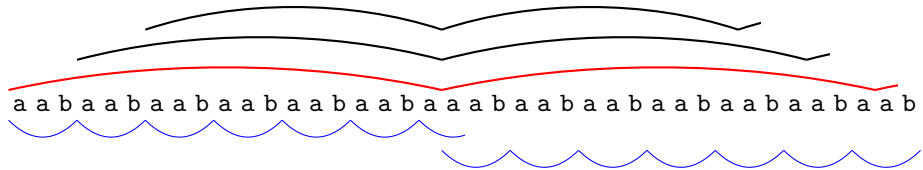
\begin{figure}[h]
\centering
\begin{tikzpicture}
\foreach \c [count=\i] in {a,a,b,a,a,b,a,a,b,a,a,b,a,a,b,a,a,b,a,a,a,b,a,a,b,a,a,b,a,a,b,a,a,b,a,a,b,a,a,b}{
\draw (\i*0.3,0) node[above] {\texttt{\c}};
}
\begin{scope}[xshift=1.65cm,yshift=0.8cm]
\clip (0,0.3) rectangle (8.4,1);
\foreach \dx in {0.3,4.2,8.1}{
\draw[black,thick,xshift=\dx cm] (0,0.4) .. controls (4*0.3,0.8) and (9*0.3,0.8) .. (13*0.3,0.4);
}
\end{scope}
\begin{scope}[xshift=1.65cm,yshift=0.4cm]
\clip (-1,0.3) rectangle (9.3,1);
\foreach \dx in {-0.6,4.2,9.0}{
\draw[black,thick,xshift=\dx cm] (0,0.4) .. controls (5*0.3,0.8) and (11*0.3,0.8) .. (16*0.3,0.4);
}
\end{scope}
\begin{scope}[xshift=1.65cm,yshift=0.0cm]
\clip (-2,0.3) rectangle (10.2,1);
\foreach \dx in {-1.5,4.2,9.9}{
\draw[red,thick,xshift=\dx cm] (0,0.4) .. controls (6*0.3,0.8) and (13*0.3,0.8) .. (19*0.3,0.4);
}
\end{scope}
\begin{scope}[xshift=-0.15cm]
\clip (0,-0.5) rectangle (6.3,0.1);
\foreach \dx in {0.3,1.2,2.1,3.0,3.9,4.8,5.7}{
\draw[blue,xshift=\dx cm] (0,0) .. controls (0.3,-0.3) and (0.6,-0.3) .. (0.9,0);
}
\end{scope}
\begin{scope}[xshift=5.55cm,yshift=-0.4cm]
\clip (0,-0.5) rectangle (6.6,0.1);
\foreach \dx in {0.3,1.2,2.1,3.0,3.9,4.8,5.7}{
\draw[blue,xshift=\dx cm] (0,0) .. controls (0.3,-0.3) and (0.6,-0.3) .. (0.9,0);
}
\end{scope}
\end{tikzpicture}
    \caption{
    The runs $F$ and $F'$ with period 3 (at the bottom, in blue) imply a pyramid $\Pyr(F,F')$ containing three layer-runs     $(\texttt{aab})^i\texttt{a}\, (\texttt{aab})^i\texttt{aa}$, for $i \in \{4,5,6\}$ (above). 
    }\label{fig:flock}
\end{figure}

\begin{figure}[h]
\centering
\begin{tikzpicture}
\foreach \c [count=\i] in {\,,a,b,a,b,a,b,a,b,a,b,a,b,a,b,a,a,b,a,b,a,b,a,b,a,b,a,b,a,b,a,a,b,a,b,a,b,a,b,a,b,a,b,a,b,a}{
\draw (\i*0.3,0) node[above] {\texttt{\c}};
}
\begin{scope}[xshift=0.15cm,yshift=0.1cm]
\begin{scope}[yshift=0.3cm]
\foreach \dx in {2.1,4.8}{
\draw[black,thick,xshift=\dx cm,yshift=1.05cm] (0,0.4) .. controls (3*0.3,0.6) and (6*0.3,0.6) .. (9*0.3,0.4);
}
\foreach \dx in {1.5,4.8}{
\draw[black,thick,xshift=\dx cm,yshift=0.65cm] (0,0.4) .. controls (4*0.3,0.6) and (7*0.3,0.6) .. (11*0.3,0.4);
}
\foreach \dx in {0.9,4.8}{
\draw[black,thick,xshift=\dx cm,yshift=0.25cm] (0,0.4) .. controls (5*0.3,0.6) and (8*0.3,0.6) .. (13*0.3,0.4);
}

\foreach \dx in {6.6,9.3}{
\draw[violet,thick,xshift=\dx cm,yshift=0.84cm] (0,0.4) .. controls (3*0.3,0.6) and (6*0.3,0.6) .. (9*0.3,0.4);
}
\foreach \dx in {6.0,9.3}{
\draw[violet,thick,xshift=\dx cm,yshift=0.44cm] (0,0.4) .. controls (4*0.3,0.6) and (7*0.3,0.6) .. (11*0.3,0.4);
}
\foreach \dx in {5.4,9.3}{
\draw[violet,thick,xshift=\dx cm,yshift=0.04cm] (0,0.4) .. controls (5*0.3,0.6) and (8*0.3,0.6) .. (13*0.3,0.4);
}
\end{scope}
\begin{scope}
\foreach \dx in {0.3,4.8,9.3}{
\draw[red,thick,xshift=\dx cm,yshift=-0.05cm] (0,0.4) .. controls (5*0.3,0.7) and (10*0.3,0.7) .. (15*0.3,0.4);
}
\end{scope}
\end{scope}
\begin{scope}[xshift=0.3cm]
\begin{scope}[xshift=-0.15cm]
\clip (0,-0.5) rectangle (4.8,0.1);
\foreach \dx in {0.3,0.9,1.5,2.1,2.7,3.3,3.9,4.5}{
\draw[blue,xshift=\dx cm] (0,0) .. controls (0.2,-0.2) and (0.4,-0.2) .. (0.6,0);
}
\end{scope}
\begin{scope}[xshift=4.35cm]
\clip (0,-0.5) rectangle (4.8,0.1);
\foreach \dx in {0.3,0.9,1.5,2.1,2.7,3.3,3.9,4.5}{
\draw[blue,xshift=\dx cm] (0,0) .. controls (0.2,-0.2) and (0.4,-0.2) .. (0.6,0);
}
\end{scope}
\begin{scope}[xshift=8.85cm]
\clip (0,-0.5) rectangle (4.8,0.1);
\foreach \dx in {0.3,0.9,1.5,2.1,2.7,3.3,3.9,4.5}{
\draw[blue,xshift=\dx cm] (0,0) .. controls (0.2,-0.2) and (0.4,-0.2) .. (0.6,0);
}
\end{scope}
\end{scope}
\end{tikzpicture}
    \caption{
    The subsequent runs $F,F',F''$ with period 2 
    each corresponding to string 
    $(\mathtt{ab})^7\mathtt{a}$ 
    (at the bottom, blue) imply pyramids $\Pyr(F,F')$ and $\Pyr(F',F'')$ (above).
    The longest
     layer (in red) corresponding to string 
    $((\mathtt{ab})^7\mathtt{a})^3$ is common to both pyramids.
    }\label{fig:flock2}
\end{figure}
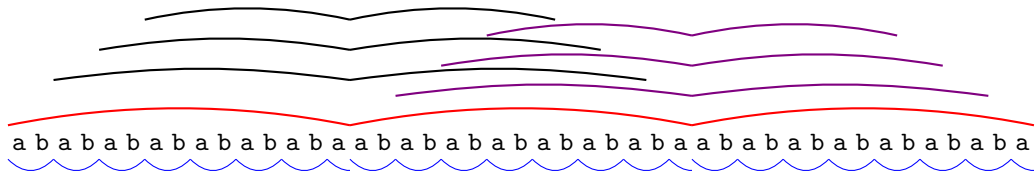

\begin{lemma}\label{lem:util}
    If $U^2 \in \fragsquares(R \cap (F \cup F'))$ for some layer $R$ in pyramid $\Pyr(F,F')$, then the first half of $U^2$ is contained in $F$ and the other 
    half in $F'$.
\end{lemma}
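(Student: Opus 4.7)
The plan is to rule out, by a short contradiction argument, every placement of $U^2=T[a\dd a+2|U|-1]$ inside $F\cup F'$ other than the desired ``first half in $F$, second half in $F'$''.

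First I would collect basic length and period facts. From the definition of $\Pyr(F,F')$, $R\cap(F\cup F')$ has primitive period $\per(R)$, so $\per(U^2)=\per(R)$; hence $|U|$ is a multiple of $\per(R)$ and, using $R$'s subperiodicity, satisfies $|U|\geq\per(R)>4\subper(R)=4p$. Next I would rule out $U^2\subseteq F$ via Fine and Wilf's periodicity lemma: $U^2$ would have period $p$ (from $F$) and primitive period $\per(R)$ on a string of length $2|U|\geq 2\per(R)>p+\per(R)$, forcing a smaller primitive period $\gcd(p,\per(R))<\per(R)$, which is absurd. Symmetrically $U^2\not\subseteq F'$. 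Writing $F=T[f_1\dd f_2]$ and $F'=T[f_1'\dd f_2']$, this yields $a<f_1'$ and $a+2|U|-1>f_2$.

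Assume for contradiction that $a+|U|>f_2+1$; the case $a+|U|<f_1'$ will be symmetric. Then $U_2:=T[a+|U|\dd a+2|U|-1]$ starts at $a+|U|\geq f_2+2\geq f_1'+1$ and so lies entirely in $F'$, hence inherits period $p$. Since $U_1:=T[a\dd a+|U|-1]$ equals $U_2$ as a string, $U_1$ too has period $p$. Set $L_1:=f_1'-a$, $L_0:=f_2-f_1'+1$ (the overlap length, in $[0,p-1]$), and $L_2:=a+|U|-1-f_2$, so that $L_1+L_0+L_2=|U|>4p$; consequently at least one of $L_1+L_0$ and $L_0+L_2$ exceeds~$p$.

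In the first subcase $L_1+L_0\geq p$, both $f_2+1-p$ and $f_2+1$ lie in $U_1$ -- the former in $F\setminus F'$ (using $L_0\leq p-1$ and $|F|\geq 2p$), the latter in $F'\setminus F$ -- and period $p$ of $U_1$ forces $T[f_2+1-p]=T[f_2+1]$. But $F$ ends at $f_2$, so by the very definition of a run $T[f_2+1]\neq T[f_2+1-p]$: contradiction. The dual subcase $L_0+L_2\geq p$ will be handled by the analogous argument at the left boundary of $F'$: positions $f_1'-1$ and $f_1'-1+p$ both lie in $U_1$, period $p$ forces $T[f_1'-1]=T[f_1'-1+p]$, which contradicts that $F'$ begins at $f_1'$. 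The only delicate step is to verify that the witness positions actually sit in $U_1$, which follows from $|U|>4p$ together with $L_0\leq p-1$.
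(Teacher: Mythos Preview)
Your proof is correct and follows essentially the same two-step strategy as the paper: first rule out $U^2\subseteq F$ (or $F'$) via Fine--Wilf, then derive a contradiction from run maximality when a half of $U^2$ straddles the boundary. The paper's second step is phrased more concisely (one half inherits period $p$, the straddling half cannot have period $p$ by maximality of $F,F'$, yet the two halves are equal strings), whereas you unroll this into explicit mismatch positions; note also that subperiodicity gives only $\per(R)\ge 4p$, not the strict inequality you wrote, though your pigeonhole step works just as well with $|U|\ge 4p$.
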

\begin{proof}
Let $p=\per(F)=\per(F')$. Layer $R$ is subperiodic with $\subper(R)=p$ and thus must have period at least $4p$, so $|U| \ge 4p$.

First we show that $U^2$ cannot be a fragment of $F$ (or of $F'$). Indeed, this would mean that $U^2$ has period $p$ as well as period $|U|$. By the periodicity lemma applied to $U^2$, $p$ would divide~$|U|$. Consequently, $p$ would be a period of $R$, a contradiction.

Let us now show, by contradiction, that each half of $U^2$ is contained in $F$ or in $F'$. 
Suppose that this is not the case.
One of the two halves is fully contained in one of $F$ and $F'$ and hence has period $p$, while the other half contains a position that is in $F$ but not in~$F'$ and a position that is in $F'$ but not in $F$, and hence has period greater than $p$ due to the maximality of runs $F$ and $F'$. We have thus obtained a contradiction.
\end{proof}

Next we obtain a combinatorial characterization of all runs in a pyramid.

\begin{definition} 
A layer with a maximal period in a pyramid is called a \emph{max-layer}.
We denote by ${\RF}(F,F')$ the set of layer-runs in $\Pyr(F,F')$ without the max-layer.
The elements of ${\RF}(F,F')$ are called \emph{regular} \birds.
\end{definition}

\begin{example}
In \cref{fig:flock}, there are two regular layers and one max-layer. In \cref{fig:flock2}, the first pyramid contains three regular layers while the second pyramid contains two; there is one max-layer that is common to all the pyramids.
\end{example}

Consider a pyramid $\Pyr(F,F')$ and let $p = \per(F)$.
A \emph{canonical representation} of pyramid $\Pyr(F,F')$ consists of the (endpoints of) runs $F$ and $F'$, its max-layer, and
sequences specifying the starting positions, ending positions, and periods of its regular layers. In a canonical representation, the ending positions of regular layers form an arithmetic progression with difference $p$, whereas the starting positions form an arithmetic progression with difference~$-p$.
Moreover, the periods of all regular layers form an arithmetic progression with difference $p$. The lemma below is proved in the following \cref{lem:contained,lem:maxlayer}.

\begin{lemma}\label{cor:canonrep}
Any non-empty pyramid admits a canonical representation.
\end{lemma}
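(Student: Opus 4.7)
My plan is to derive \cref{cor:canonrep} by combining the structural results of the forthcoming \cref{lem:contained,lem:maxlayer}. Throughout, I fix the pyramid $\Pyr(F,F')$, set $p = \per(F) = \per(F')$, and, after swapping the roles of $F$ and $F'$ if necessary, write $F = T[a \dd b]$, $F' = T[c \dd d]$ with $a < c \le b+1$ and $b < d$.

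First, I would invoke \cref{lem:contained} to ensure that every regular layer lies inside $F \cup F' = T[a \dd d]$, so that only the max-layer (at most one, by \cref{lem:maxlayer}) may extend outside this window. This reduces the problem to enumerating the regular layers inside a fixed $p$-periodic window with a single ``join'' between $F$ and $F'$, while the data for the max-layer is supplied directly by \cref{lem:maxlayer}.

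Second, for each regular layer $R$ with $q = \per(R) \ge 4p$, I would show that $q$ is a multiple of $p$. By definition $R$ is subperiodic with $\subper(R) = p$, so it contains a square $X^2$ with $\per(X^2) = p$; the periodicity lemma applied to $X^2$, which carries periods $p$ and $|X|$ summing to at most $|X^2|$, forces $p \mid |X|$. Combining this with \cref{lem:util}, which places the two halves of such a square in $F$ and $F'$ respectively, I would conclude that $q = |X|$, so $q = kp$ for some integer $k \ge 4$.

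Third, to locate the unique regular layer of a given admissible period $q = kp$, I would invoke the run-maximality conditions $T[s_q - 1] \ne T[s_q - 1 + q]$ and $T[e_q + 1] \ne T[e_q + 1 - q]$ for $R = T[s_q \dd e_q]$. Because $F$ and $F'$ share a Lyndon root, walking $q$-periodically from within $F$ to within $F'$ encounters mismatches only at the outer boundaries $a-1$ and $d+1$ of $F \cup F'$. A short case analysis then forces $s_q = c - q$ and $e_q = b + q$, producing arithmetic progressions in $s_q$ (with difference $-p$), in $e_q$ (with difference $+p$), and in $q$ (with difference $+p$) as $k$ ranges over the admissible values below the max-layer's parameter. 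Combining these progressions with the explicit max-layer data from \cref{lem:maxlayer} and the stored fragments $F, F'$ yields exactly the canonical representation required.

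The hard part will be this last step: one must rule out ``rogue'' regular layers whose endpoints deviate from the prescribed arithmetic progressions. This requires exploiting both the shared Lyndon root of $F$ and $F'$ and the subperiodicity of regular layers to force the $q$-periodic mismatches to occur only at the outer boundaries of $F \cup F'$, which is precisely the combinatorial content I expect \cref{lem:contained} to encapsulate on the containment side, leaving \cref{lem:maxlayer} to handle the exceptional max-layer that may protrude beyond $F \cup F'$.
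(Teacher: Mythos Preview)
Your high-level plan---delegate the regular layers to \cref{lem:contained} and the exceptional layer to \cref{lem:maxlayer}---matches the paper exactly, and your endpoint formulas $s_q=c-q$, $e_q=b+q$ are correct. But Step~2 contains a genuine error that would derail an independent proof.

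You claim that every regular layer has period $q=kp$ for an integer $k\ge 4$, arguing that ``$R$ contains a square $X^2$ with $\per(X^2)=p$'' and then applying the periodicity lemma to the periods $p$ and $|X|$ of $X^2$. This misreads the definition of $\subper$: what $\subper(R)=p$ gives you is $\per(X)=p$, not $\per(X^2)=p$. Since $X^2\in\fragsquares(R)$, in fact $\per(X^2)=\per(R)=q\ge 4p$, so $X^2$ does \emph{not} have period $p$, and your divisibility argument collapses. Concretely, in the paper's example with $p=3$ (\cref{fig:flock}) the layer periods are $13,16,19$, none of them multiples of~$3$. The correct statement, proved in \cref{lem:contained}, is that $q\equiv\delta\pmod p$ where $\delta=(\ell'-\ell)\bmod p$ is determined by the Lyndon positions $\ell,\ell'$ of $F,F'$; the periods are $kp+\delta$, not $kp$. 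The reason is that a length-$p$ primitive substring of $F$ matches a fragment of $F'$ only at positions shifted by $\delta\pmod p$, so the period $q$ of any layer must respect this offset. Once you have $q=kp+\delta$, the arithmetic progressions with difference $p$ follow as you describe, and your endpoint formulas remain valid since $s_q=a'-q$ and $e_q=b+q$ regardless of whether $\delta=0$.
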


    We use the following notation for a fixed pyramid $\Pyr(F,F')$.
    Let $F=T[a\dd b]$, $F'=T[a'\dd b']$, assuming without loss of generality that $a < a'$, and $p= \per(F) = \per(F')$.
    Let the Lyndon positions of $F$ and $F'$ be $\ell$ and $\ell'$, respectively, and define $\delta := (\ell' - \ell) \bmod p$.
    For each $k\in \mathbb{Z}$, we denote $a'_k := a' - k\cdot p - \delta$ and $b_k = b+k\cdot p +\delta$.

    \begin{lemma}\label{lem:contained}
    The set $\mathcal{R} := \{T[x \dd y] \in \Pyr(F,F'): x,y \in (a \dd b')\}$ is equal to
    \[\mathcal{K} := \{T[a'_k \dd b_k] : k \in K \}, \text{ where }K = \{ k \in \mathbb{Z}\ :\ k \geq 4,\ a'_k > a,\ b_k <b' \}.\]
    For each $k \in K$, the period of run $T[a'_k \dd b_k]$ is $k\cdot p + \delta$.
    \end{lemma}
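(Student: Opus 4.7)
The plan is to establish $\mathcal{R}=\mathcal{K}$ by proving both inclusions, leveraging the periodicity structures of $F, F'$ and the primitivity of $\lambda$. A preliminary observation is crucial throughout: $\delta\neq 0$. Indeed, if $\delta=0$ then $F$ and $F'$ share the same $\lambda$-pattern modulo $p$; being neighboring and both maximal with period $p$, they would merge into a single run, contradicting $F\neq F'$.

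For $\mathcal{R}\subseteq\mathcal{K}$, take $R=T[x\dd y]\in\mathcal{R}$ and set $\pi:=\per(R)\geq 4p$. Subperiodicity yields a square $X^2\subseteq R$ with $\per(X)=p$ and $\per(X^2)=\pi$, and \cref{lem:util} places the first copy of $X$ in $F$ and the second in $F'$. Writing $T[i]=\lambda[(i-\ell)\bmod p]$ inside $F$ and $T[i]=\lambda[(i-\ell')\bmod p]$ inside $F'$, the square equation $T[s+j]=T[s+|X|+j]$ for $j\in[0,|X|)$ (a range of length $\geq p$) forces $\lambda$ to be rotation-invariant by $(|X|-\delta)\bmod p$; primitivity of $\lambda$ then yields $|X|\equiv\delta\pmod{p}$. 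The identity $\per(X^2)=\pi$ entails $\pi\mid|X|$ and $|X|\geq\pi$, so $|X|=c\pi$ for some $c\geq 1$. If $c\geq 2$, Fine--Wilf on $X$ with periods $\pi,p$ (applicable since $\pi+p\leq 2\pi\leq|X|$) would give $p\mid\pi$, whence $\delta\equiv c\pi\equiv 0\pmod{p}$, a contradiction. Hence $c=1$ and $\pi=kp+\delta$ with $k\geq 4$. Once $\pi$ is known, the same $\lambda$-alignment shows $T[i]=T[i+\pi]$ for all $i\in F$ with $i+\pi\in F'$, i.e., $i\in[a'-\pi,b]$ under the constraints $a'_k>a$ and $b_k<b'$; thus $T[a'-\pi\dd b+\pi]$ has period $\pi$. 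The left-maximality of $F'$ (namely $T[a'-1]\neq T[a'-1+p]$) translates, via $\pi\equiv\delta\pmod p$, into $T[a'-\pi-1]\neq T[a'-1]$, blocking left extension, and symmetrically on the right. Therefore $R=T[a'_k\dd b_k]$ with $k\in K$.

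For $\mathcal{K}\subseteq\mathcal{R}$, given $k\in K$, set $\pi=kp+\delta$ and $R=T[a'_k\dd b_k]$. Periodicity of $R$ with period $\pi$ and maximality at both endpoints follow from the identical calculations. To show $\per(R)=\pi$ is exact, suppose $\pi'<\pi$ were a period of $R$; Fine--Wilf on $R$ (valid since $\pi+\pi'<2\pi\leq|R|$) gives $\pi'\mid\pi$, hence $\pi'\leq\pi/2$, so Fine--Wilf on $R\cap F$ (valid since $p+\pi'<\pi\leq|R\cap F|$) combined with the primitivity of $\lambda$ yields $p\mid\pi'$; writing $\pi'=mp$ with $m\geq 1$, $mp\mid kp+\delta$ forces $mp\mid\delta$, impossible since $0<\delta<p\leq mp$. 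Finally, $\subper(R)=p$ is witnessed on one side by $X:=T[a'-\pi\dd a')$, for which $\per(X)=p$ and $\per(X^2)=\pi$; on the other side, any $Y^2\subseteq R$ with $\per(Y)<p$ and $\per(Y^2)=\pi$ must have $|Y|=\pi\geq 4p$, yet $|Y\cap F|,|Y\cap F'|<2p$ (else the corresponding part would be a substring of $\lambda^\infty$ of length $\geq p+\per(Y)$ with period $<p$, violating primitivity), so $|Y|<4p$, a contradiction. The main obstacle is the careful index bookkeeping modulo $p$ and repeated use of $\delta\neq 0$ together with $\lambda$'s primitivity to exclude phantom smaller periods.
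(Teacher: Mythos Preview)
Your proof follows essentially the same two-inclusion strategy as the paper, leaning on \cref{lem:util}, Fine--Wilf, and the primitivity of $\lambda$, and it is largely correct. Two points deserve attention.

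\emph{A logical gap in $\mathcal{R}\subseteq\mathcal{K}$.} After deriving $\pi=kp+\delta$ with $k\ge 4$, you write ``$i\in[a'-\pi,b]$ under the constraints $a'_k>a$ and $b_k<b'$'' and then use $a'_k-1\in F$ for the maximality argument. But $a'_k>a$ and $b_k<b'$ are part of what you must prove (namely $k\in K$); you have only assumed $x>a$ and $y<b'$ for the endpoints of $R$. The missing step is: the $\lambda$-alignment gives period $\pi$ on $T[\max(a,a'-\pi)\dd\min(b+\pi,b')]$, this fragment contains the square $X^2$, hence it lies inside the run $R$; thus $x\le\max(a,a'-\pi)$, and since $x>a$ this forces $a'-\pi>a$. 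Symmetrically $b+\pi<b'$. Only then are your endpoint and maximality computations valid. (The paper is equally terse here, writing merely ``By the definition of $\mathcal{R}$, $k\in K$''.)

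\emph{A faulty divisibility step.} In $\mathcal{K}\subseteq\mathcal{R}$ you write ``$mp\mid kp+\delta$ forces $mp\mid\delta$''. This implication is false in general. What you actually need is the weaker (and immediate) consequence $p\mid mp\mid kp+\delta$, hence $p\mid\delta$, which already contradicts $0<\delta<p$. The conclusion survives, but the stated reasoning does not.

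For comparison: where you invoke Fine--Wilf twice to rule out a smaller period of $T[a'_k\dd b_k]$, the paper gives a more direct argument---the prefix $T[a'_k\dd b]$ has period $p$, and any smaller period of $R$ would shift this prefix to an occurrence starting before $a'$ and ending after $b$, which is impossible by the maximality of $F,F'$. Your Fine--Wilf route is a valid alternative once the divisibility slip is fixed.
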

    \begin{proof}
    First, let us argue that $\per(T[a'_k \dd b_k])=k\cdot p + \delta$ for each $k \geq 2$.
    We note that $T[a'_k \dd b] = T[a' \dd b_k]$ (by the definition of $\delta$ and the fact that the two strings are contained in $F$ and $F'$, respectively), so $a'_k-a'$ is a period of $T[a'_k \dd b_k]$ by definition,
    and $a'_k - a' = k\cdot p + \delta$.
    Observe that $T[a'_k \dd b]$ has period $p$ and hence it cannot have an occurrence starting before position $a'$ and ending after position $b$---as all fragments satisfying these conditions do not have period $p$ by the maximality of $F$ and~$F'$.
    Thus, $T[a'_k \dd b_k]$ does not have any period smaller than $k\cdot p + \delta$.
    
    \textbf{$\mathcal{K} \subseteq \mathcal{R}$:}
        Let $R=T[a'_k \dd b_k]$.
        We have
        \[|R|\,=\,|R \cap (F \cup F')|\,=\,b_k-a'_k+1\,=\,b-a'+1+2kp+2\delta\,\ge\, 2kp+2\delta\,=\,2\cdot\per(R)\]
        as $F$ and $F'$ are neighboring.
        In particular, $R$ is periodic.
        Let us show that $R$ is a maximal fragment with period $\per(R)$. As $a'_k>a$, the periodicities of $F$ and $R$ and left maximality of~$F'$ imply that
        \[T[a'_k-1] = T[a'_k-1+p] = T[a'_k-1+p+\per(R)] \ne T[a'_k-1+\per(R)] = T[a'-1],\]
        which shows the left maximality of $R$.
        A symmetric argument yields right maximality.
        Hence,~$R$ is indeed a run.
        
        The prefix $T[a'_k \dd b]$ of $R$ has length $|R|-\per(R) \ge \per(R)=kp+\delta \ge 2p$ and period $p$, so $R$ is subperiodic and $\subper(R)\le p$. Moreover, $\subper(R)$ cannot be smaller than $p$, as then there would be a run with period smaller than $p$ overlapping one of runs $F,F'$ on at least $kp+\delta \ge 4p$ positions, which is impossible due to the periodicity lemma. 
        
    \textbf{$\mathcal{R} \subseteq \mathcal{K}$:}
        Let us fix some $R = T[\alpha \dd \beta] \in \mathcal{R}$.
        As $R$ is subperiodic, $\per(R)\geq 4p$, and by the definition of $\mathcal{R}$, $\alpha>a$ and $\beta<b'$.
        Consider a square $T[x \dd x + 2\cdot\per(R)) \in \fragsquares(R)$.
        By \cref{lem:util},
        $T[x \dd x + \per(R)) \subseteq F$
        and $T[x+\per(R) \dd x + 2\cdot\per(R)) \subseteq F'$.
        Since primitive strings do not match non-trivial rotations of themselves,
        we have that $T[x \dd x+p)$ occurs only at positions $y$ of $T$ contained in $F'$ such that $x - \ell \equiv y - \ell' \pmod p$.
        This implies that $\per(R)$ has to be equivalent to $y-x \equiv \ell' - \ell \equiv \delta \pmod p$.
        Then, for some $k\in \mathbb{Z}$, $\per(R) = k\cdot p + \delta$. By the definition of $\mathcal{R}$, $k \in K$.
        Finally, there can be at most one run with period $\per(R)$ that contains at least $\per(R)$ positions of $F$ and $F'$ and we have shown the existence of such a run in $\mathcal{K}$, so $R \in \mathcal{K}$.
    \end{proof}

    \begin{lemma}\label{lem:maxlayer}
        For the set $\mathcal{R}$ defined in \cref{lem:contained}, there is at most one run $R \in \Pyr(F,F') \setminus \mathcal{R}$.
    \end{lemma}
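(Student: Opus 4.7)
My plan is to argue in two steps: first confine any $R \in \Pyr(F,F') \setminus \mathcal{R}$ to the interval $F \cup F' = [a, b']$, and then derive a contradiction from the existence of two such runs via a Fine--Wilf argument.

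\textbf{Step 1 (Confinement).} I will show every $R \in \Pyr(F,F') \setminus \mathcal{R}$ satisfies $R \subseteq [a, b']$. Suppose for contradiction that $R$'s left endpoint is strictly less than $a$, with $\pi := \per(R) = kp + \delta$, $k \geq 4$. Since $R$ contains position $a-1$ and has period $\pi$, we have $T[a-1] = T[a-1+\pi]$. In the principal sub-case where $a-1+\pi$ lies inside $F'$ (which is forced when $\pi \geq a' - a + 1$), using $\pi \equiv \delta \equiv \ell' - \ell \pmod{p}$ and the common Lyndon root $\Lroot$ of $F$ and $F'$, one computes $T[a-1+\pi] = \Lroot[(a-1-\ell) \bmod p] = T[a+p-1]$, contradicting $F$'s left maximality $T[a-1] \neq T[a+p-1]$. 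The remaining sub-cases (where $a-1+\pi$ lies in $F$ or the gap between $F$ and $F'$) are handled by applying Fine and Wilf's lemma to $R \cap F$, which inherits the two periods $p$ and $\pi$: for $\delta \ne 0$ this forces a period $\gcd(p,\pi) = \gcd(p,\delta) < p$ on $R \cap F$, contradicting $\per(F) = p$; the $\delta = 0$ sub-case is handled by a short gap-based argument exploiting the break between $F$ and $F'$. A symmetric argument with $F'$'s right maximality bounds $R$'s right endpoint by $b'$, so $R \subseteq [a, b']$.

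\textbf{Step 2 (Uniqueness).} Since $R \subseteq [a, b']$ and $R \notin \mathcal{R}$, $R$ has left endpoint equal to $a$ or right endpoint equal to $b'$. Assume for contradiction two distinct such runs $R_1, R_2$ exist, with $\pi_i = \per(R_i) = k_i p + \delta$ and $k_1 < k_2$. A case analysis on which of $\{a, b'\}$ each $R_i$ touches produces a common fragment $J \subseteq R_1 \cap R_2 \subseteq [a,b']$ of length at least $\pi_1 + \pi_2 - \gcd(\pi_1, \pi_2)$ that carries both periods (the length bound follows from $|R_i| \geq 2\pi_i$ together with the first-half-in-$F$/second-half-in-$F'$ anchoring provided by \cref{lem:util}). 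Fine and Wilf's lemma then gives $J$ period $\gcd(\pi_1, \pi_2)$, which is strictly less than $\pi_1$ since the congruences $\pi_1, \pi_2 \equiv \delta \pmod p$ with $0 \leq \delta < p$, together with $\per(F) = p$, rule out $\pi_1 \mid \pi_2$. Propagating this smaller period into $R_1$ contradicts $\pi_1 = \per(R_1)$.

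The main obstacle is the case-by-case length bookkeeping in Step 2; the crucial leverage is the subperiodicity bound $\pi_i \geq 4p$, which via $|R_i| \geq 8p$ combined with the neighboring bound $|F \cap F'| \leq p-1$ provides a sufficiently long overlap $J$ to justify Fine and Wilf in all four configurations of boundary endpoints (left-left, right-right, and the two mixed cases).
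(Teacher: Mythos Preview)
Your Step~1 claim---that every $R \in \Pyr(F,F') \setminus \mathcal{R}$ satisfies $R \subseteq T[a\dd b']$---is false. The paper's \cref{rem:nonemptyfs} flags exactly this, and \cref{fig:flock2} supplies the counterexample: for $T = ((\texttt{ab})^7\texttt{a})^3$ with $F = T[0\dd 14]$, $F' = T[15 \dd 29]$ (so $p=2$, $\delta=1$), the run $R = T[0 \dd 44]$ with $\per(R) = 15$ lies in $\Pyr(F,F') \setminus \mathcal{R}$ yet extends well past $b' = 29$. Your confinement argument breaks concretely on the mirror pyramid $\Pyr(F',F'')$, where $R$ overhangs on the left: there $a = 15$, $\pi = 15$, and $a-1+\pi = 29$ lands in the \emph{first} run rather than the second, putting you in your ``remaining sub-case''; but $R$ intersected with that first run is $T[15\dd 29]$, of length exactly $\pi = 15$, one short of the $p + \pi - \gcd(p,\pi) = 16$ that Fine and Wilf would need---and indeed $(\texttt{ab})^7\texttt{a}$ does not have period $\gcd(2,15)=1$. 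Since the conclusion of Step~1 is outright false, no repair of the sub-case analysis can work, and Step~2, which presupposes $R \subseteq T[a\dd b']$, never gets off the ground.

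The paper's argument avoids confinement entirely. It only uses that any $R \in \Pyr(F,F') \setminus \mathcal{R}$ contains $a$ or $b'$ (immediate from the definition of $\mathcal{R}$; nothing is claimed about $R$'s other endpoint). If $R$ contains $a$, then by \cref{lem:util} the square $T[a \dd a + 2\per(R))$ has its first half in $F$ and its second in $F'$, which forces $\per(R)$ into an interval of length at most~$p$; combined with $\per(R) \equiv \delta \pmod p$ this determines $\per(R)$ uniquely, so at most one such run exists. A length comparison between $|F|$ and $|F'|$ then rules out the coexistence of a second run through $b'$ but not $a$.
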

    \begin{proof}
        Consider the case when there exists a run $R \in \Pyr(F,F')$ containing position $a$.
        We then know that $R \cap (F \cup F')$ has subperiod $p$ and is periodic with period $\per(R)$.
        Consider the square $T[a \dd a+2\cdot \per(R)) \in \fragsquares(R \cap (F \cup F'))$.
        By \cref{lem:util}, we know that the first half of this square is contained in $F$, while the second half is contained in $F'$.
        Hence, the square has subperiod $p$.
        This means that $\per(R) \in [a'-a \dd b-a]$, which is an interval of length at most $p-1$.
        Now, $\per(R)$ also has to be equivalent to $\delta \pmod p$, so there is a single possible value for it, say $y$.
        
        Observe that if $b'$ is not contained in $R$, then $F$ is a common prefix of $T[a \dd |T|)$ and $T[a+y \dd |T|)$, which means that
        $a'+|F|-1 \leq a+ y+|F|-1 < b'$ and hence $|F'| > |F|$.
        Now, a run $R' \in \Pyr(F,F')$ containing position $b'$ would have period in $[b'-b \dd b'-a']$ by symmetric arguments to those above.
        Then, we would have $\per(R') \geq b'-b > |F|$, a contradiction to the fact that a square with period $\per(R')$ generated by $R'$ would have its first half in $F$.

        Finally, if our attempt to compute a run containing position $a$ fails, we perform symmetric computations to find a run $R\in \Pyr(F,F')$ that contains position $b'$, if one exists.
    \end{proof}

\section{Computing a Representation of All Runs}
In this section we show that a representation of all runs in a string of length $n$ over an alphabet $[0 \dd \sigma)$ can be computed in $\Oh(n/\log_\sigma n)$ time.

First we show how to compute runs with large periods.
Some of these runs are grouped in pyramids.

Amir et al.~\cite{DBLP:conf/esa/AmirBCK19} showed how to compute squares and runs in a dynamic string.
Their techniques can be interpreted in the so-called \pillar model, introduced by Charalampopoulos, Kociumaka, and Wellnitz~\cite{DBLP:conf/focs/Charalampopoulos20}.
Recent optimal data structures for $\LCP$ queries \cite{DBLP:conf/stoc/KempaK19} and $\IPM$ queries \cite{DBLP:journals/siamcomp/KociumakaRRW24} in the packed setting
imply that any problem on strings of total length $n$ that can be solved in $\cO(f(n))$ time in the \pillar model, can be solved in $\cO(n/ \log_\sigma n + f(n))$ time in the packed setting.
All in all, we obtain the following fact whose proof closely follows~\cite{DBLP:conf/esa/AmirBCK19,Panos}; it is provided in~\cref{app:fctESA}.

\newcommand{\X}{\mathcal{X}}
\newcommand{\Y}{\mathcal{Y}}
\newcommand{\Z}{\mathcal{Z}}
\begin{restatable}[see {\cite{DBLP:conf/esa/AmirBCK19,Panos}}]{fact}{fctESA}\label{fct:ESA19}
Let $T \in [0\dd \sigma)^n$ be a string given in packed form.
For any constant $c>0$, in time $\Oh(n/\log_{\sigma} n)$, we can compute a set $\X$ of runs such that none of them is a regular layer of any pyramid of $T$ and a set $\Y$ of pyramids given by their canonical representations, such that $|\X|,|\Y|=\Oh(n/\log_{\sigma} n)$, and, for $\Z\,:=\,\bigcup_{(F,F')\in \Y}\, \RF(F,F')$, we have that
\begin{itemize}
\item $\X \cup \Z$
is a superset of all runs in $T$ of period at least $c \log_\sigma n$, and
\item $\X \cap \Z = \emptyset$.
\end{itemize}
\end{restatable}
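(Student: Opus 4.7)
The plan is to implement the dynamic-string run-computation machinery of Amir et al.\ \cite{DBLP:conf/esa/AmirBCK19} (see also \cite{Panos}) in the PILLAR model and then invoke the packed reduction stated just before the fact: $\LCP$ and $\IPM$ queries on packed strings admit $\cO(1)$-time answering after $\cO(n/\log_\sigma n)$-time preprocessing \cite{DBLP:conf/stoc/KempaK19,DBLP:journals/siamcomp/KociumakaRRW24}, so a PILLAR-model algorithm performing $\cO(n/\log_\sigma n)$ operations will translate into the claimed packed time bound.

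I would proceed in three phases. First, compute all $\tau$-runs of $T$ (runs of period at most $\tau := c\log_\sigma n$) using the packed machinery of \cite{DBLP:conf/stoc/KempaK19}; there are $\cO(n/\tau)=\cO(n/\log_\sigma n)$ of them. Second, build a $\tau$-synchronizing set of $\cO(n/\log_\sigma n)$ anchor positions \cite{DBLP:conf/stoc/KempaK19}; for each anchor $i$, use a constant number of $\IPM$ and $\LCP$ queries to determine whether $i$ lies inside the Lyndon occurrence of some run of period greater than $\tau$, and if so recover that run together with its Lyndon root. Every long-period run has length at least $2\tau$ and therefore contains an anchor, so this phase returns a candidate list of size $\cO(n/\log_\sigma n)$ that is guaranteed to include every non-layer long-period run.

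Third, sort all runs discovered so far by (canonical identifier of Lyndon root, starting position) via \cref{fact:bsort}, where each Lyndon root is canonicalized through a single $\IPM$ query against an arbitrarily fixed occurrence of that root in $T$. A linear scan over the sorted list identifies every pair of neighboring runs $(F,F')$ sharing a Lyndon root $\lambda$ of period $p=|\lambda|$; for each such pair, the canonical representation of $\Pyr(F,F')$ is emitted in $\cO(1)$ time using the closed-form arithmetic progressions of \cref{lem:contained,lem:maxlayer}: with $\delta=(\ell'-\ell)\bmod p$, the regular layers are exactly $T[a'_k\dd b_k]$ of period $kp+\delta$ for $k\in K$, and the max-layer (if any) is obtained separately from \cref{lem:maxlayer}. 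All resulting pyramids go into $\Y$. Finally, $\X$ is formed from the remaining discovered long-period runs plus the max-layers, after removing (via an $\cO(1)$-time endpoint test against the emitted pyramids) every run that is a regular layer of some pyramid; this enforces $\X\cap\Z=\emptyset$ by construction.

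The main obstacle I expect is showing that the anchor pass misses no long-period non-layer run while keeping $|\X|$ and $|\Y|$ bounded by $\cO(n/\log_\sigma n)$ without ever materializing the $\Omega(n)$ regular layers that a single pyramid can contain. The synchronizer property gives an anchor inside every fragment of length at least $2\tau$, hence inside every long-period run $R$. If $R$ is a regular layer of some pyramid $\Pyr(F,F')$, then by \cref{rem:nonemptyfs} we have $R\subseteq F\cup F'$, so the anchors inside $R$ are already anchors of $F$ or $F'$ and trigger the discovery of the base pair rather than of $R$; consequently $R$ is reconstructed only implicitly through the canonical representation. Otherwise, the anchors inside $R$ are first-class witnesses and Phase~2 recovers $R$ explicitly. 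Combined with the sorted scan of Phase~3, which emits each pyramid from a distinct neighboring base pair, both $|\X|$ and $|\Y|$ are bounded by the size of the candidate list, namely $\cO(n/\log_\sigma n)$.
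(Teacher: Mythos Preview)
Your proposal has genuine gaps that differ substantially from the paper's route.

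First, Phase~2 is underspecified at the crucial step: given a synchronizing position~$i$, how do you determine the period of a long-period run containing it? A single anchor carries no period information, and a position can lie in several runs with different periods. The paper does not use synchronizing sets here at all; instead it runs a Main--Lorentz style doubling argument (\cref{at last}): for each scale $x$, it samples every $k=2^{x-1}$-th position~$i$, issues $\IPM(P_i,W_i)$ with $P_i=T[i\dd i+k)$ and $W_i=T[i\dd i+5k)$, and reads candidate periods $j-i$ directly from the returned occurrences. Pyramids arise precisely when the sample $P_i$ is itself periodic, in which case the arithmetic progression of occurrences yields the base runs $F,F'$ via \cref{fact:findrun}; no global pairing step is needed.

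Second, your pyramid-construction scheme (enumerate base runs, sort by Lyndon root, pair neighbors) is problematic on two counts. The base runs $F,F'$ of a pyramid whose layers have period $\ge c\log_\sigma n$ can themselves have arbitrarily small period---e.g.\ period~2 as in \cref{ex:triv}---and there can be $\Theta(n)$ short small-period runs (your ``$\tau$-runs'' terminology conflates period $\le\tau$ with the technical definition that also requires length $\ge 3\tau-1$; only the latter are $\cO(n/\tau)$ many). And sorting runs by Lyndon root is exactly the difficulty the paper postpones to \cref{sec:grouping}, where it is solved via sparse-Lyndon roots; a single $\IPM$ against a fixed occurrence tests cyclic equivalence but does not give a sortable key (the paper notes this explicitly). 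The proof of \cref{fct:ESA19} in the paper needs neither Lyndon roots nor synchronizers: it discovers each pyramid locally from the periodic sample at the relevant scale, then deduplicates with \cref{fact:bsort}.
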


We do not include max-layers in set $\Z$ as they can be common to many pyramids (cf.\ \cref{fig:flock2}).

\begin{remark}
As shown in the proof of \cref{fct:ESA19} (given in the full version) and implicitly in \cite{DBLP:conf/esa/AmirBCK19,Panos}, for any parameter~$q$, the
number of both max-layers with period at least $q$ and non-layer-runs with period at least~$q$ in a length-$n$ string is $\Oh(n/q)$. We note that the number of \emph{all} layer runs with period at least $q$ can be $\Omega(n)$: for any $q\geq 3$,
the string $S$ from \cref{ex:triv} has at least $(n/2 - q - 1)/2$
layer runs with period at least $q$.
\end{remark}

\newcommand{\R}{\mathcal{R}}
\newcommand{\cluster}{\mathbf{Cluster}}
\begin{definition}[Clusters of runs]
For a set of runs $\X$ in $T$ and a set of integers $D$, we define a \emph{cluster of runs}:
\[\cluster(\X,D)\,=\,\{T[a+d \dd b+d]\,:\,T[a \dd b] \in \X,\,d \in D,\,T[a \dd b]=T[a+d \dd b+d]\}.\]
The \emph{size} of a cluster of runs is defined as $|\X|+|D|$.
\end{definition}

In this work, in all considered clusters of runs, we have $0 \in D$.

\begin{example}
The string $S=\mathtt{\#ababaabaab\$}$ contains runs $S[1 \dd 5]=\mathtt{ababa}$, $S[5 \dd 6]=S[8 \dd 9]=\mathtt{aa}$, $S[3 \dd 10]=\mathtt{abaabaab}$. 
Thus the following string 
\[T=\mathtt{\#\textcolor{violet}{ababaabaab}\$\,\#\textcolor{violet}{ababaabaab}\$\,\#\textcolor{violet}{edcbaedcba}\$\,\#\textcolor{violet}{ababaabaab}\$\,}\]
contains a cluster of runs
$\cluster(\,\{T[1 \dd 5],T[5 \dd 6],T[8 \dd 9],T[3 \dd 10]\},\ \{0,12,36\}\,)$.
\end{example}

A \emph{$\tau$-run} $R$ is a run of length at least $3\tau-1$ with period at most $\frac13\tau$.

\begin{lemma}[{\cite[Section 6.1.2]{DBLP:conf/stoc/KempaK19},\cite[Lemma 10]{DBLP:conf/esa/Charalampopoulos21}}]\label{lem:tauruns}
For a positive integer $\tau$, a string $T \in [0\dd \sigma)^n$ contains $\Oh(n/\tau)$ $\tau$-runs. 
Moreover, if $\tau\leq \frac{1}{9} \log_{\sigma} n$, given a packed representation of $T$, we can compute all $\tau$-runs in $T$
in $\Oh(n/\tau)$ time.
Within the same complexity, we can compute the Lyndon position of each $\tau$-run.
\end{lemma}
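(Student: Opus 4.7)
The plan is to invoke the string-synchronizer framework of Kempa and Kociumaka~\cite{DBLP:conf/stoc/KempaK19}. Recall that a $\tau$-synchronizing set $S\subseteq[0\dd n)$ is a set of size $\Oh(n/\tau)$ with the property that every length-$(2\tau+1)$ window of $T$ whose shortest period exceeds $\tau/3$ contains at least one position of $S$ in its first $\tau+1$ positions. Consequently, the long gaps of $S$ correspond precisely to $\Oh(n/\tau)$ maximal \emph{highly periodic regions}---fragments of $T$ with period at most $\tau/3$---and the total number of such regions is $\Oh(n/\tau)$.

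For the counting bound, observe that a $\tau$-run $R$ has $|R|\ge 3\tau-1$ and $\per(R)\le\tau/3$, so every length-$(2\tau+1)$ window that fits inside $R$ has period at most $\tau/3$, and hence no element of $S$ lies in $R$ at distance $\ge \tau+1$ from either endpoint. Thus $R$ is confined to a single highly periodic region, extended by $\Oh(\tau)$ on each side. Within one such region of period $p_H\le\tau/3$, Fine--Wilf and the maximality of runs restrict the sub-runs of period $\le\tau/3$ and length $\ge 3\tau-1$ to $\Oh(1)$ candidates, giving $\Oh(n/\tau)$ $\tau$-runs in total.

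For the algorithmic statement, assume $\tau \le \tfrac{1}{9}\log_\sigma n$, so length-$\Oh(\tau)$ windows fit in $\Oh(1)$ machine words. In $n^{o(1)}$ time I would precompute two lookup tables: one that reports, for any packed length-$(2\tau+1)$ window, its shortest period; and another that reports, for any packed string of length $\le\tau/3$, the offset of its lex-smallest cyclic rotation. Using the first table I construct a $\tau$-synchronizing set in $\Oh(n/\tau)$ time following~\cite{DBLP:conf/stoc/KempaK19}, isolate the $\Oh(n/\tau)$ highly periodic regions, and enumerate the $\Oh(1)$ candidate $\tau$-runs per region with their exact endpoints using packed character comparison near the boundary. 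Using the second table I read off each Lyndon position in $\Oh(1)$ time, all within the $\Oh(n/\tau)$ budget. The main obstacle will be making the informal ``one region per $\tau$-run'' picture exact near the boundary of a highly periodic region: a $\tau$-run may spill slightly outside the synchronizer-identified region, and one must verify that enlarging each region's search interval by $\Oh(\tau)$ still leaves only $\Oh(1)$ candidate runs per region, after which sorting via \cref{fact:bsort} removes duplicates within the total budget.
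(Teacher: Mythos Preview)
The paper does not supply its own proof of this lemma: it is stated with bracketed citations to \cite[Section~6.1.2]{DBLP:conf/stoc/KempaK19} and \cite[Lemma~10]{DBLP:conf/esa/Charalampopoulos21} and used as a black box, so there is nothing in the paper to compare your argument against.

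That said, your sketch is in line with what those references actually do, with two caveats. First, your synchronizing-set parameters are off: in the paper's \cref{def:sync_sets} the density condition concerns windows of length $3\tau-1$ (not $2\tau+1$) and guarantees a synchronizing position in the first $\tau$ (not $\tau+1$) positions whenever the window's period exceeds $\tau/3$. Second, the ``$\Oh(1)$ candidates per region'' step is cleaner than you suggest: consecutive length-$(3\tau-1)$ windows with period at most $\tau/3$ overlap in $3\tau-2$ characters, so Fine--Wilf forces a single common period on the whole maximal low-period region, and that region (extended maximally) \emph{is} the unique $\tau$-run associated with it; no boundary ``spill'' and no duplicate removal via \cref{fact:bsort} is needed. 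With these corrections your outline is a faithful reconstruction of the cited arguments; the Lyndon-position part via a rotation-minimum lookup table on the length-$p$ string period (which fits in $\Oh(1)$ words since $p\le\tau/3$) is exactly how \cite[Lemma~10]{DBLP:conf/esa/Charalampopoulos21} handles it.
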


The next lemma is proved using tabulation.

\begin{restatable}{lemma}{nottauruns}\label{lem:not_tauruns}
Given a string $T$ in packed form and an integer $\tau \leq \frac{1}{9} \log_{\sigma} n$,
we can compute all runs of length smaller than $3\tau-1$ and period at most $\frac13 \tau$, represented as $\Oh(n/\log_\sigma n)$ clusters of runs, in $\Oh(n/\log_\sigma n)$ time.
The sum of lengths of lists $\X$ across all clusters of runs is~$\Ohtilde(n^{7/18})$.
\end{restatable}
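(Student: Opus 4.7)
The plan is to exploit that every short run in the lemma---having length $<3\tau-1\le\tfrac{1}{3}\log_\sigma n$---fits inside a window of length $L=\Theta(\log_\sigma n)$, so that short runs of all identical-content windows can be represented by a single cluster, yielding the required exponent $7/18$.

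I would set $L=\lfloor\tfrac{7}{18}\log_\sigma n\rfloor$ and $s=L-3\tau+1$. From $\tau\le\tfrac{1}{9}\log_\sigma n$ we get $3\tau\le\tfrac{6}{18}\log_\sigma n<L$, so $s=\Theta(\log_\sigma n)$ for $n$ sufficiently large (tiny $n$ can be handled by a universal lookup table). Define windows $W_i=T[i\cdot s\dd i\cdot s+L-1]$ for $i=0,1,\ldots,\lfloor(n-L)/s\rfloor$; there are $\Oh(n/\log_\sigma n)$ of them. For any run $T[a\dd b]$ of period $\le\tau/3$ and length $\ell<3\tau-1$ with $a>0$ and $b<n-1$, setting $i=\lfloor(a-1)/s\rfloor$ gives $i\cdot s\le a-1<a$ and $b\le a+\ell-1\le i\cdot s+s+3\tau-3=i\cdot s+L-2<i\cdot s+L-1$, so the run is strictly contained in $W_i$. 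The crucial observation is that $T[a-1]$ and $T[b+1]$ then lie inside $W_i$, making maximality of $T[a\dd b]$ in $T$ equivalent to maximality in the standalone string $W_i$. The $\Oh(\tau)$ short runs touching position $0$ or $n-1$ can be handled by direct brute-force examination of the $\Oh(\tau)$-length prefix and suffix of $T$.

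Next I would extract the content of each $W_i$ as a single machine-word integer in $\Oh(1)$ time, since $L\cdot\log_2\sigma\le\tfrac{7}{18}\log_2 n<\log_2 n$; this produces $\Oh(n/\log_\sigma n)$ words. Grouping windows by content via \cref{fact:bsort} takes $\Oh(n/\log_\sigma n)$ time and yields distinct contents $c_1,\ldots,c_m$ with $m\le\sigma^L\le n^{7/18}$. For each $c_j$, reconstructed as a length-$L$ string in $\Oh(L)$ time, I would run a brute-force $\Oh(L^2)$ algorithm to enumerate all runs of period $\le\tau/3$ and length $<3\tau-1$ that are strictly interior to $c_j$. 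Total cost of this step is $\Oh(m\cdot L^2)=\Ohtilde(n^{7/18})=o(n/\log_\sigma n)$.

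Finally, to form clusters, for each $c_j$ I would pick a reference window $W_{i_j}$ of content $c_j$, let $\X_j$ be the computed runs mapped onto $W_{i_j}$ filtered to those whose starting offset within $W_{i_j}$ lies in $[1,s]$, and set $D_j=\{(i-i_j)\cdot s:W_i\text{ has content }c_j\}$. The filter ensures each short run of $T$ is assigned to the unique window $W_{\lfloor(a-1)/s\rfloor}$, so the resulting cluster decomposition is exact and duplicate-free; identical window contents guarantee that each shifted fragment in $\cluster(\X_j,D_j)$ remains a maximal run of $T$. This yields $m=\Oh(n^{7/18})=\Oh(n/\log_\sigma n)$ clusters, and since a length-$L$ string contains at most $L$ runs, $\sum_j|\X_j|\le m L=\Ohtilde(n^{7/18})$. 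The main technical obstacle is picking $L$ so that $\sigma^L$ (which controls both the number of clusters and $\sum_j|\X_j|$) and the window stride $s=L-3\tau+1$ balance simultaneously; the exponent $7/18$ emerges precisely from this balance, being slightly above the $6/18$ threshold forced by $L\ge 3\tau$.
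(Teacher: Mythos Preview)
Your proposal is correct and takes essentially the same tabulation approach as the paper: slide overlapping windows of length $\approx 3.5\tau$ so that every target run is strictly interior to some window, group windows by their at most $\sigma^{3.5\tau}\le n^{7/18}$ distinct contents, and output one cluster per content with the filter on starting offset to avoid duplicates. The only differences are cosmetic---the paper sets the window length and stride to $3.5\tau$ and $\lfloor\tau/2\rfloor$ directly (rather than $\tfrac{7}{18}\log_\sigma n$ and $L-3\tau+1$), precomputes runs for \emph{all} strings of length $\le 3.5\tau$ rather than only for the window contents actually occurring in $T$, and handles the right boundary by allowing trailing windows to be shorter (your treatment as written misses runs with $\lfloor(a-1)/s\rfloor>\lfloor(n-L)/s\rfloor$, a one-line fix).
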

\begin{proof}
In a preprocessing phase, for each string $U$ of length at most $3.5\tau$ over alphabet $[0 \dd \sigma)$, we compute all runs with the required properties starting within $U[1 \dd \floor{\frac12\tau}]$ and ending strictly before the last position of~$U$.
They are stored as a list of runs denoted by $\X(U)$.
The number of such distinct strings is $\Oh(n^{7/18})$ and the total length of the computed lists is $\Ohtilde(n^{7/18})$.
The runs in each string can be computed using a linear-time algorithm~\cite{DBLP:conf/icalp/Ellert021}.

For each string $U$ of length at most $3.5\tau$ over alphabet $[0 \dd \sigma)$, we populate an initially empty list $D(U)$ as follows:
for each integer $i \in \{a\cdot \floor{\frac12\tau}\,:\,a \in \mathbb{Z}\} \cap [0 \dd n)$, we insert $i$ to the list $D(T[i \dd \min(i+\floor{3.5\tau},n))$.

Thus the total size of all lists $D(U)$ is $\Oh(n/\tau)$.
In the end, for each string $U$ of length at most $3.5\tau$ over alphabet $[0 \dd \sigma)$, we report a cluster of runs $\cluster(\X(U), D(U))$.
(If any of $\X(U), D(U)$ is empty, the cluster need not be reported.)
The total size of the computed clusters is $\Oh(n/\log_\sigma n)$ and they are computed in $\Oh(n/\log_\sigma n)$ time.
\end{proof}

Putting everything together, we obtain the following proposition. 

\begin{proposition}\label{prp:runs}
A representation of all runs in a string $T \in [0\dd \sigma)^n$ consisting of a disjoint union of $\Oh(n/\log_\sigma n)$ runs, regular layers of \flocks, and clusters of runs, can be computed in $\Oh(n/\log_\sigma n)$ time.
\end{proposition}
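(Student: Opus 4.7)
The plan is to combine the three tools from this section---Fact~\ref{fct:ESA19}, Lemma~\ref{lem:tauruns}, and Lemma~\ref{lem:not_tauruns}---each of which handles a different regime of runs. I will set $\tau := \lfloor \frac{1}{9}\log_\sigma n \rfloor$, the parameter range allowed by Lemmas~\ref{lem:tauruns} and~\ref{lem:not_tauruns}, and invoke Fact~\ref{fct:ESA19} with a constant $c$ chosen so that the threshold $c\log_\sigma n$ sits just above $\tau/3$; this choice is meant to make the period ranges handled by the fact and by the two lemmas cover all integer periods without overlap.

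First, I apply Fact~\ref{fct:ESA19} to obtain a set $\X$ of $\Oh(n/\log_\sigma n)$ runs---none being a regular layer of any pyramid---and a set $\Y$ of $\Oh(n/\log_\sigma n)$ pyramids given by canonical representations (so that each max-layer is recorded implicitly inside its pyramid). Second, I apply Lemma~\ref{lem:tauruns} to enumerate the $\Oh(n/\tau) = \Oh(n/\log_\sigma n)$ $\tau$-runs, i.e., runs of length at least $3\tau - 1$ and period at most $\tau/3$. Third, I apply Lemma~\ref{lem:not_tauruns} to obtain $\Oh(n/\log_\sigma n)$ clusters representing all runs of length smaller than $3\tau - 1$ and period at most $\tau/3$. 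Each step runs in $\Oh(n/\log_\sigma n)$ time, so taken together they produce $\Oh(n/\log_\sigma n)$ objects of the three declared kinds in $\Oh(n/\log_\sigma n)$ total time.

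The coverage argument will proceed by case analysis on the period of a run $R$. If $\per(R) > \tau/3$, the choice of $c$ gives $\per(R) \ge c\log_\sigma n$, so by Fact~\ref{fct:ESA19} the run $R$ lies in $\X$ or appears as a regular layer of some pyramid in $\Y$ (and if $R$ is a max-layer, it is recorded inside the canonical representation of such a pyramid). If $\per(R) \le \tau/3$ and $|R| \ge 3\tau - 1$, then $R$ is a $\tau$-run, delivered by Lemma~\ref{lem:tauruns}. Otherwise $\per(R) \le \tau/3$ and $|R| < 3\tau - 1$, and $R$ is captured by exactly one cluster thanks to the windowing construction underlying Lemma~\ref{lem:not_tauruns}. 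Pairwise disjointness between the three sources follows from the complementary period/length thresholds, while $\X \cap \Z = \emptyset$ is built into Fact~\ref{fct:ESA19}.

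The main subtlety I anticipate lies in setting up the interface between Fact~\ref{fct:ESA19} and the two lemmas: $c$ and $\tau$ must be tuned so that $c\log_\sigma n$ and $\tau/3$ meet exactly at an integer boundary, leaving neither a period gap nor a duplicate. This is a constant-factor adjustment enabled by the freedom to pick any constant $c > 0$ in Fact~\ref{fct:ESA19}; the coarse choice $c \approx 1/27$ already puts both thresholds within a constant of $\log_\sigma(n)/27$, and the remaining mismatch is absorbed by choosing $c$ so that $\lceil c\log_\sigma n \rceil = \lfloor \tau/3 \rfloor + 1$.
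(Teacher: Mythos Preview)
Your overall strategy matches the paper's: set $\tau=\lfloor\tfrac19\log_\sigma n\rfloor$, cover small-period runs by \cref{lem:tauruns} and \cref{lem:not_tauruns}, and cover large-period runs by \cref{fct:ESA19}. The coverage argument is fine. The gap is in your disjointness argument.

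\cref{fct:ESA19} only promises that $\X\cup\Z$ is a \emph{superset} of the runs with period at least $c\log_\sigma n$; it does not say that every run in $\X$ or every regular layer in $\Z$ has period at least $c\log_\sigma n$. So even if you could align the thresholds perfectly, $\X$ (and the arithmetic progressions of layer periods inside the pyramids of $\Y$) may still contain runs of period $\le\tfrac13\tau$, and those would be double-reported against the $\tau$-runs or the clusters. Your sentence ``pairwise disjointness between the three sources follows from the complementary period/length thresholds'' is therefore not justified by the statement of \cref{fct:ESA19}. Separately, the attempt to force $\lceil c\log_\sigma n\rceil=\lfloor\tau/3\rfloor+1$ is not really a choice of a \emph{constant} $c$; the fact is stated for a fixed constant, not one tuned to $n$.

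The paper handles this the simple way: invoke \cref{fct:ESA19} with any sufficiently small constant $c$ (so that $c\log_\sigma n\le\tfrac13\tau$), and then \emph{post-filter}: discard from $\X$ all runs with period at most $\tfrac13\tau$, and for each pyramid in $\Y$ trim the arithmetic progression of regular-layer periods to remove those below the threshold. This is $\Oh(1)$ work per object and restores disjointness without any delicate threshold alignment. Adding this filtering step fixes your argument.
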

\begin{proof}
Let $\tau:=\floor{\frac{1}{9} \log_{\sigma} n}$.
We compute three disjoint sets of runs whose union equals $\RUNS(T)$.
We assume that $T$ starts and ends with a sentinel character so that the characters $T[0]$, $T[|T|-1]$ are unique in $T$.
\begin{itemize}
\item \textbf{Runs of length smaller than $3\tau-1$ and period at most $\frac13 \tau$.}
We compute those runs in $\Oh(n/\log_\sigma n)$ time using \cref{lem:not_tauruns}.
\item \textbf{$\tau$-runs (that is, 
runs of length at least $3\tau-1$ and period at most $\frac13 \tau$).}
The $\tau$-runs are computed in $\Oh(n/\log_\sigma n)$ time using \cref{lem:tauruns}.
\item \textbf{Runs with period greater than $\frac13 \tau$.}
A representation of all runs with period greater than $\tau$, composed of single runs and \flocks, can be computed in $\Oh(n/\log_\sigma n)$ time using~\cref{fct:ESA19}. The representation may return runs with periods at most $\tau$, which need to be removed to avoid double reporting (by trimming arithmetic progressions).\qedhere
\end{itemize}
\end{proof}

\section{Grouping Runs via Lyndon Roots and Sparse-Lyndon Roots}\label{sec:grouping}
The next observation is crucial in counting distinct square substrings of a string.
Let us denote by $\RUNS(T,\lambda)$ and $\squares(T,\lambda)$ the sets of runs and squares in $T$ with Lyndon root $\lambda$.

\begin{observation}[\cite{DBLP:journals/tcs/CrochemoreIKRRW14}]\label{obs:sq_from_runs}
Consider two runs $R$ and $R'$ in a string $T$. 
Then, $\fragsquares(R)\cap \fragsquares(R')\ne \emptyset$ implies that $\Lroot(R)=\Lroot(R')$.
In particular, for any Lyndon string~$\lambda$, we have
\[\squares(T,\lambda)=\bigcup_{R\in \RUNS(T,\lambda)}\, \fragsquares(R).\]
\end{observation}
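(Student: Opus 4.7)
The plan is to derive both assertions from the very definition of ``generated by'' together with the standard fact that the Lyndon root of a periodic string with period $p$ is determined by any length-$p$ factor.

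First I would prove the implication. Let $X^2 \in \fragsquares(R)\cap \fragsquares(R')$ and set $p:=\per(X^2)$. By the definition of $\fragsquares$ (a square generated by a periodic fragment shares its period), we have $\per(R)=\per(R')=p$, and $X^2$ is a fragment of both $R$ and $R'$. Because $|X^2|\ge 2p$, the length-$p$ prefix $X^2[0\dd p)$ is primitive (otherwise $X^2$ would admit a smaller period, contradicting $\per(X^2)=p$). Since $R$ has period $p$, every length-$p$ factor of $R$, and in particular $X^2[0\dd p)$, is a cyclic rotation of $R[0\dd p)$; by the definition of $\Lroot$, the lexicographically smallest rotation of $X^2[0\dd p)$ therefore coincides with $\Lroot(R)$. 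The identical argument applied to $R'$ yields $\Lroot(R')=\Lroot(X^2)$ as well, so $\Lroot(R)=\Lroot(R')$, as required.

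Next I would derive the displayed identity. The inclusion ``$\supseteq$'' is immediate: if $R\in \RUNS(T,\lambda)$ and $X^2\in \fragsquares(R)$, then by the argument above $\Lroot(X^2)=\Lroot(R)=\lambda$, and $X^2$ is a substring of $T$, so $X^2\in \squares(T,\lambda)$. For ``$\subseteq$'', take any $X^2\in \squares(T,\lambda)$. By \cref{obs:square_run}, there exists a run $R\in \RUNS(T)$ with $X^2\in \fragsquares(R)$. Applying the implication already proved (with the ``two runs'' being $R$ and any run having Lyndon root $\lambda$ that generates $X^2$, or more directly by noting again that $\Lroot(R)=\Lroot(X^2)=\lambda$), we conclude that $R\in \RUNS(T,\lambda)$.

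The only mildly delicate point, and therefore what I regard as the main obstacle, is justifying that $X^2[0\dd p)$ is primitive and that its lex-smallest rotation is well-defined and agrees with $\Lroot(R)$; once this is pinned down via the periodicity lemma (to exclude a smaller period in $X^2$) and the definition of $\Lroot$, everything else is bookkeeping. No new lemmas beyond \cref{obs:square_run}, the definition of $\fragsquares$, and the definition of $\Lroot$ are needed.
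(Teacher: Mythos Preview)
Your argument is correct. The paper does not supply its own proof of this observation; it is simply cited from~\cite{DBLP:journals/tcs/CrochemoreIKRRW14} and used as a black box. So there is no ``paper's proof'' to compare against, and your self-contained derivation is an appropriate way to fill the gap.

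A couple of minor remarks on presentation. First, when you say ``$X^2$ is a fragment of both $R$ and $R'$'', be aware that $\fragsquares$ returns a set of \emph{strings}, so membership in the intersection only means $X^2$ occurs as a substring of each run, not that some particular fragment is shared; your argument does not actually need more than this, so nothing breaks. Second, your justification that $X^2[0\dd p)$ is primitive could be made a touch more explicit: since $p\mid |X^2|$ (as the paper notes just before \cref{obs:square_run}), we have $X^2=(X^2[0\dd p))^{|X^2|/p}$, and any nontrivial power structure in $X^2[0\dd p)$ would immediately give $X^2$ a period smaller than $p$. This is what you allude to with the periodicity lemma, but Fine--Wilf is not really needed here. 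Finally, in the ``$\subseteq$'' direction the parenthetical about ``the two runs being $R$ and any run having Lyndon root $\lambda$'' is unnecessary and slightly confusing; the direct observation $\Lroot(R)=\Lroot(X^2)=\lambda$ that you also give is the clean way to finish.
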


Crochemore et al.~\cite{DBLP:journals/tcs/CrochemoreIKRRW14} considered all runs in the string in groups consisting of runs with equal Lyndon root. The algorithm for grouping of runs that they used consists of the following three steps:

\begin{enumerate}
\item\label{it1} Computing a Lyndon position for each run.
\item\label{it2} Sorting runs with equal periods in the order of the suffixes starting at Lyndon positions. It is guaranteed that runs from the same group are listed consecutively.
\item\label{it3} Partitioning the sorted list of runs obtained for each period into groups by issuing an $\LCP$-query for each pair of subsequent runs in the list.
\end{enumerate}

We use \cref{prp:runs} to compute an $\Oh(n/\log_\sigma n)$-sized representation of all the runs in~$T$. For runs with small periods, we use the aforementioned approach combined with tabulation (for runs that are not $\tau$-runs) and the following fact for $\tau$-runs.

\begin{fact}[{\cite[Section 6.1.2]{DBLP:conf/stoc/KempaK19},\cite[Lemma 10]{DBLP:conf/esa/Charalampopoulos21}}]\label{fct:tauruns2}
If $\tau\leq \frac{1}{9} \log_{\sigma} n$, given a packed representation of $T$, all $\tau$-runs in $T$ can be sorted by their Lyndon roots in $\Oh(n/\tau)$ time.
\end{fact}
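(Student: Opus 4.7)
The plan is to reduce the sorting task to radix-sorting $\Oh(n/\tau)$ short integer keys, where each key canonically encodes the Lyndon root of a $\tau$-run. First, I would invoke \cref{lem:tauruns} to obtain, in $\Oh(n/\tau)$ time, all $\tau$-runs together with their Lyndon positions. For a $\tau$-run $R$ with Lyndon position $\ell$ and period $p:=\per(R)\le \tau/3$, the Lyndon root equals the fragment $T[\ell \dd \ell + p)$.

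Next, since $\tau\le \frac{1}{9}\log_\sigma n$, this fragment occupies at most $\frac{\tau}{3}\log_2\sigma\le \frac{1}{27}\log_2 n$ bits. In the packed representation, a machine word stores $\Theta(\log_\sigma n)$ characters, so the Lyndon root of any $\tau$-run fits entirely inside at most two consecutive words and can be extracted as an integer in $\Oh(1)$ time via a constant number of shifts and masks. Call this integer $\kappa_R \in [0 \dd n)$ and associate each run $R$ with the pair $(\per(R), \kappa_R) \in [0 \dd n)^2$: two $\tau$-runs share a Lyndon root if and only if they have identical pairs, since runs with the same root have the same period $p$ and the same $p \cdot \log_2 \sigma$ content bits.

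Finally, I would stably radix-sort the $\Oh(n/\tau)$ resulting pairs. Applying the approach of \cref{fact:bsort} with base $\lceil\sqrt{n}\rceil$, each of the $\Oh(1)$ passes costs $\Oh(n/\tau+\sqrt{n})$; since $\tau\le \frac{1}{9}\log_\sigma n\le \frac{1}{9}\log n$ we have $n/\tau=\Omega(n/\log n)=\Omega(\sqrt{n})$, so the total sorting time is $\Oh(n/\tau)$. The main obstacle is arranging the $\Oh(1)$-time extraction per run and verifying that the resulting pair truly distinguishes distinct Lyndon roots; both follow from $\tau\le \frac{1}{9}\log_\sigma n$, which keeps each root comfortably within a small constant fraction of a machine word, after which everything reduces to standard packed-word manipulation and bucket sort.
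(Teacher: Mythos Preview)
The paper does not give its own proof of this fact; it simply cites \cite[Section 6.1.2]{DBLP:conf/stoc/KempaK19} and \cite[Lemma 10]{DBLP:conf/esa/Charalampopoulos21}. Your proposal is correct and is exactly the natural argument one expects those references to carry out: because $\per(R)\le\tau/3\le\frac{1}{27}\log_\sigma n$, each Lyndon root is encoded by $O(\log n)$ bits, hence can be extracted in $O(1)$ time from the packed representation and treated as an integer key in $[0\dd n)$; pairing it with the period disambiguates roots of different lengths, and a radix sort in base $\lceil\sqrt{n}\rceil$ finishes in $O(n/\tau)$ time since $n/\tau=\Omega(\sqrt{n})$.

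One tiny cosmetic point: you invoke \cref{lem:tauruns} to obtain the Lyndon positions, which is fine, but note that in the paper \cref{lem:tauruns} and \cref{fct:tauruns2} share the same two citations, so in effect you are reconstructing precisely the packaged statement the paper imports rather than offering an alternative route. Nothing is missing from your argument.
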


\begin{restatable}{lemma}{shpergroup}\label{lem:shper_group}
All runs in $T$ with periods at most $\tau$, for a given $\tau\le\frac{1}{9} \log_{\sigma} n$, can be grouped by equal Lyndon roots in $\Oh(n/\log_\sigma n)$ time. Among possibly many runs corresponding to equal substrings, at least one needs to be reported, but not necessarily all.
\end{restatable}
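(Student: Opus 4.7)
The plan is to use \cref{prp:runs} to split the runs of period at most $\tau$ into three subclasses and to process each one independently before merging them by Lyndon root. The three subclasses are the $\tau$-runs (period $\leq \tau/3$ and length $\geq 3\tau-1$); the short runs (period $\leq \tau/3$ and length $< 3\tau-1$) that \cref{lem:not_tauruns} returns implicitly as $\Oh(n/\log_\sigma n)$ clusters; and the $\Oh(n/\log_\sigma n)$ non-layer runs of period in $(\tau/3,\tau]$ obtained via \cref{fct:ESA19}.

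For the $\tau$-runs I simply apply \cref{fct:tauruns2}, obtaining them sorted by Lyndon root in $\Oh(n/\tau) = \Oh(n/\log_\sigma n)$ time. For the non-layer runs of period in $(\tau/3,\tau]$, the Lyndon root has length at most $\tau \leq \frac{1}{9}\log_\sigma n$ and therefore fits in a single machine word when packed; a table of Lyndon positions for every length-$\leq \tau$ word, precomputed in $\Oh(\sigma^\tau \cdot \tau) = o(n/\log_\sigma n)$ time, then supplies the packed Lyndon root of each such run in $\Oh(1)$ time. Together these two subclasses yield $\Oh(n/\log_\sigma n)$ (key, run) pairs.

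The cluster-represented short runs are the delicate case, because the aggregate length of the $\X(U)$-lists is $\tilde\Oh(n^{7/18})$. My plan is to extend the tabulation used in the proof of \cref{lem:not_tauruns}: for every prototype string $U$ of length $\leq 3.5\tau$ and every run $R \in \X(U)$, I additionally store the Lyndon position of $R$ within $U$ and the packed Lyndon root $\Lroot(R)$ as a single-word integer; this is valid because $|\Lroot(R)| \leq \tau/3 \leq \log_\sigma n / 27$. The extra preprocessing cost is $\tilde\Oh(n^{7/18}) = o(n/\log_\sigma n)$. When emitting a cluster $\cluster(\X(U),D(U))$, the different shifts $d \in D(U)$ of the same $R \in \X(U)$ produce equal substrings of $T$ and hence share a Lyndon root, so it is enough to report a single representative per entry of $\X(U)$, annotated with its precomputed Lyndon-root key; this adds $\tilde\Oh(n^{7/18})$ (key, run) pairs.

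In the final step I stably radix sort all $\Oh(n/\log_\sigma n)$ collected (key, run) pairs by their keys via \cref{fact:bsort} in $\Oh(n/\log_\sigma n)$ time; consecutive pairs with equal keys then form the required Lyndon-root groups. The main obstacle is that runs coming from the three different computation pathways must end up in the same group whenever their Lyndon roots coincide as substrings; this is handled uniformly by encoding every Lyndon root as its packed-integer representation, which is well-defined precisely because $\tau \leq \frac{1}{9}\log_\sigma n$ guarantees that every such root fits in a single machine word.
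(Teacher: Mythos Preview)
Your proposal is correct and close in spirit to the paper's argument, but it differs in two noteworthy respects.

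First, the grouping mechanism is different. The paper sorts the cluster representatives by merge sort, obtains the $\tau$-runs sorted via \cref{fct:tauruns2}, and then merges the two sorted lists and determines group boundaries using constant-time \emph{LCE queries} (from the packed $\LCP$ data structure of Kempa and Kociumaka). You instead exploit that every Lyndon root in question has length at most $\tau\le\tfrac{1}{9}\log_\sigma n$ and therefore fits in a single machine word, so it can serve directly as an integer sort key; radix sort via \cref{fact:bsort} then does the grouping in one shot. Your route is arguably simpler (it avoids the LCE machinery and the explicit list-merge), while the paper's route is more generic in that it does not rely on the roots fitting in one word, a property that happens to hold here but that the paper never singles out.

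Second, you introduce a third subclass (c), the non-layer runs with period in $(\tau/3,\tau]$ obtained from \cref{fct:ESA19}, whereas the paper's proof invokes only \cref{lem:tauruns,lem:not_tauruns} and hence literally produces only runs of period at most $\tau/3$. In this sense your proof is more faithful to the stated range ``period at most $\tau$'' than the paper's own proof, which seems to silently assume the smaller bound. Your class (c) still omits \emph{regular layers} with period in $(\tau/3,\tau]$, of which there can be $\Omega(n)$; neither proof captures these. This is harmless for the paper's purposes, since \cref{prp:group} explicitly excludes regular layers and all squares generated by them are special (handled via pyramids), but it is worth noting that neither argument establishes the literal statement for every run in that period range.
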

\begin{proof}
We compute all such runs using \cref{lem:tauruns,lem:not_tauruns} in $\Oh(n/\log_\sigma n)$ time.

For each cluster of runs $(\R,D)$ returned by the algorithm underlying \cref{lem:not_tauruns}, it suffices to keep one copy of each run in $\R$, say for shift $d=0$.
By \cref{lem:not_tauruns}, this results in just $\Ohtilde(n^{7/18})$ such runs.
For each such run, we compute its Lyndon root and its Lyndon position directly from the definition in $\Ohtilde(1)$ time.
Then the runs can be sorted by Lyndon roots using merge sort in $\Ohtilde(n^{7/18})$ time.

The Lyndon positions of $\tau$-runs can be computed in $\Oh(n/\log_\sigma n)$ time using \cref{fct:tauruns2}. Then we  can merge the two lists of runs sorted by Lyndon roots using longest common extension queries; each of those queries can be performed in $\cO(1)$ time after an $\cO(n/ \log_\sigma n)$-time preprocessing~\cite{DBLP:conf/stoc/KempaK19}.
Using longest common extension queries between consecutive runs in the obtained sorted list, we can also group the runs by Lyndon roots within $\cO(n/ \log_\sigma n)$ time.
\end{proof}

One issue with adapting the aforementioned approach to grouping runs with large periods is that we do not know how to compute the Lyndon positions of $\Oh(n/\log_\sigma n)$ runs in $T$ if their period is greater than $c \log_\sigma n$ for a constant $c$, in $\Oh(n/\log_\sigma n)$ time.

Using cyclic equivalence queries of Kociumaka et al.~\cite{DBLP:journals/siamcomp/KociumakaRRW24} that allow to check if two substrings of $T$ are cyclic rotations of each other, we can check if two runs have the same Lyndon root in $\Oh(1)$ time after $\Oh(n/\log_\sigma n)$-time preprocessing, but this is not sufficient for grouping the runs by Lyndon roots. 
Moreover, it is unknown whether minimal cyclic rotation queries can be implemented efficiently in the packed model.
The fastest known solution, by Kociumaka~\cite{DBLP:conf/cpm/Kociumaka16}, answers minimal cyclic rotation queries in $\Oh(1)$ time but requires $\Theta(n)$ preprocessing; improving the preprocessing time to sublinear here seems to be challenging.

Instead, we use a string synchronizing set (see \cref{fig:sLyn}), as defined by Kempa and Kociumaka~\cite{DBLP:conf/stoc/KempaK19}, to select a unique position within each long-period run (that might not be the Lyndon position) in a consistent way that allows us to group such runs by Lyndon roots.

\begin{definition}[{Synchronizing set~\cite{DBLP:conf/stoc/KempaK19}}]\label{def:sync_sets}
For a length-$n$ string $T$ and a positive integer $\tau \leq \frac12n$, a set $\Sync\subseteq [0\dd  n-2\tau]$ is a \emph{$\tau$-synchronizing set} of $T$ if it satisfies the following two conditions:

\begin{enumerate}
    \item\label{item1} Consistency: If $T[i\dd i+2\tau)=T[j\dd j+2\tau)$, then $i\in \Sync$ if and only if $j\in  \Sync$.

    \smallskip
    \item\label{item2} Density: For $i\in [0\dd  n-3\tau+1]$,\\
    $\Sync \cap [i\dd i+\tau)=\emptyset$ if and only if $\per(T[i\dd i+3\tau -2 ])\leq  \frac13\tau$.
\end{enumerate}
\end{definition}

\begin{remark}
Informally, in the simpler case that $T$ is cube-free, a $\tau$-synchronizing set of~$T$ is an $\Oh(n/\tau)$-sized set of synchronizing positions in $T$ such that each length-$\tau$ fragment of~$T$ (except for the end of the string) contains at least one synchronizing position, and the leftmost synchronizing positions within two length-$3\tau$ matching fragments of $T$ are consistent.
\end{remark}

\begin{figure}[htpb]
\centering
\begin{tikzpicture}[scale=0.4]
      \foreach \i in {0,1,...,31}{
        \draw[xshift=-1cm] (\i+1,0.7) node[above] {\tiny \i};
      }
      \foreach \i/\x in {0/a,1/b,2/a,3/b,4/a,5/a,6/a,7/b,8/a,9/b,10/a,11/a,12/a,13/a,14/b,15/a,16/b,17/a,18/a,19/a,20/b,21/a,22/b,23/a,24/a,25/a,26/a,27/a,28/a,29/a,30/a,31/b}{
        \draw (\i,0) node[above] {$\texttt{\x}$};
      }
    \foreach \x in {2, 5, 8, 10, 12, 15, 18, 21, 22, 26}{
      \draw[xshift=1cm] (\x-1,0) node {\small ---};
    }

      \begin{scope}[xshift=-1cm,thick,yshift=0.3cm]
      \clip (0.5,1) rectangle (13.5,2.5);
      \draw[xshift=0cm] (0.5,1.5) .. controls (2.5,2) and (4.5,2) .. (6.5,1.5);
      \draw[xshift=6cm] (0.5,1.5) .. controls (2.5,2) and (4.5,2) .. (6.5,1.5);
      \draw[xshift=12cm] (0.5,1.5) .. controls (2.5,2) and (4.5,2) .. (6.5,1.5);
      \end{scope}
      \begin{scope}[xshift=10cm,thick]
      \clip (0.5,-1) rectangle (15.5,0);
      \draw[xshift=0cm] (0.5,-0.2) .. controls (2.5,-0.7) and (4.5,-0.7) .. (6.5,-0.2);
      \draw[xshift=6cm] (0.5,-0.2) .. controls (2.5,-0.7) and (4.5,-0.7) .. (6.5,-0.2);
      \draw[xshift=12cm] (0.5,-0.2) .. controls (2.5,-0.7) and (4.5,-0.7) .. (6.5,-0.2);
      \end{scope}
      \draw[xshift=4cm,red,yshift=1.0cm,thick] (0.5,1.5) .. controls (2.5,2) and (4.5,2) .. (6.5,1.5);
      \draw (5,1.5) node {\textcolor{red}{*}};
      \draw[xshift=11cm,red,yshift=-0.7cm,thick] (0.5,-0.2) .. controls (2.5,-0.7) and (4.5,-0.7) .. (6.5,-0.2);
      \draw (12,1.5) node {\textcolor{red}{*}};
      \foreach \x in {22,23,...,29}{
      \draw[blue,xshift=\x cm] (0.5,1.8) .. controls (0.8,2.1) and (1.2,2.1) .. (1.5,1.8);
      }
\end{tikzpicture}
\caption{A string with its $\tau$-synchronizing set $\{2,5,8,10,12,15,18,21,22,26\}$ for $\tau=3$ (underlined letters); note that the lack of a synchronizing position in $[23 \dd 25]$ implies a $\tau$-run (in blue). Two runs with the same Lyndon root \texttt{aaabab} are shown in black. Their sparse-Lyndon positions and (equal) sparse-Lyndon roots \texttt{aababa} are shown in red. In this example, the second run generates a superset of squares generated by the first run.
}\label{fig:sLyn}
\end{figure}
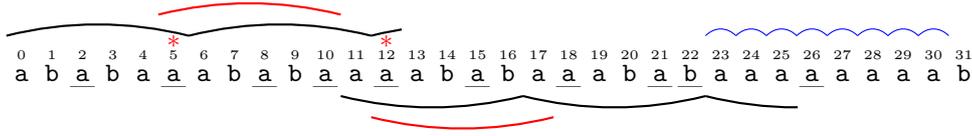

Crucially, string synchronizing sets for small values of $\tau$ can be constructed in optimal time in the packed setting.

\begin{theorem}[{\cite[Proposition 8.10, Theorem 8.11]{DBLP:conf/stoc/KempaK19}}]\label{thm:synch_packed}
For a string $T \in [0\dd \sigma)^n$ with $\sigma=n^{\Oh(1)}$ and $\tau\le \frac15 \log_\sigma n$, there exists a $\tau$-synchronizing set of size $\cO(n/\tau)$ that can be constructed in $\cO(n/\tau)$ time, if $T$ is given in a packed representation.
\end{theorem}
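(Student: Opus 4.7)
The plan is to construct $\Sync$ by a purely local rule whose outcome depends only on the length-$2\tau$ window $T[i \dd i+2\tau)$ at each position~$i$, so that Consistency (\cref{item1}) is built into the definition. Density (\cref{item2}) will then follow from the periodicity lemma applied along a chain of overlapping windows, and the $\cO(n/\tau)$ construction time will rely on tabulation over length-$2\tau$ windows, which fit in $\cO(1)$ machine words since $2\tau \le \tfrac{2}{5}\log_\sigma n$.

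The rule I would adopt is as follows: place $i \in [0 \dd n-2\tau]$ in $\Sync$ if and only if $\per(T[i \dd i+2\tau)) > \tau/3$ and the lexicographically minimal length-$\tau$ substring of $T[i \dd i+2\tau)$ (taken over the $\tau$ consecutive starting positions inside this window) has its leftmost occurrence at relative offset~$0$. Consistency is immediate because the rule reads only $T[i \dd i+2\tau)$. For Density I argue by contrapositive: if no position of $[i \dd i+\tau)$ lies in~$\Sync$, I show that $T[i \dd i+3\tau-2]$ has period at most $\tau/3$. If some $j \in [i \dd i+\tau)$ already has $\per(T[j \dd j+2\tau)) \le \tau/3$, the periodicity lemma propagates this period to $T[i \dd i+3\tau-2]$ via the overlap structure. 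Otherwise, at every such $j$ the leftmost minimum shifts strictly to the right of~$j$; iterating these shifts across $\tau$ consecutive positions produces two equal length-$\tau$ substrings within $T[i \dd i+3\tau-2]$ at a distance at most $\tau/3$, again giving a short period.

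For the $\cO(n/\tau)$-time construction, since $\sigma^{2\tau} \le n^{2/5}$, I precompute in $n^{o(1)}$ time a table that, for every length-$2\tau$ string~$W$, stores the list of relative offsets $j \in [0 \dd \tau)$ at which the local rule holds when $W$ is viewed as the window starting at that position. A traversal of $T$ in strides of~$\tau$, reading each pair of consecutive length-$\tau$ blocks as $\cO(1)$ machine words, then emits all elements of~$\Sync$ in $\cO(n/\tau)$ total time: each of the $\cO(n/\tau)$ blocks contributes $\cO(1)$ table lookups plus a number of synchronizing positions that is amortized by the density-driven bound $|\Sync| = \cO(n/\tau)$. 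The main obstacle, and the delicate combinatorial heart of the argument, is establishing the Density case analysis above with the ``leftmost minimum'' rule: one must verify that forbidding the rule along $\tau$ consecutive positions indeed forces a period of length at most $\tau/3$ in the length-$(3\tau-1)$ window. The remaining components---tabulation preprocessing, strided traversal, and the size bound---are routine once the combinatorial core is in place.
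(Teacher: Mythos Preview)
The paper does not prove this theorem at all; it is quoted as a black box from \cite{DBLP:conf/stoc/KempaK19}. Your proposal is therefore an attempted reconstruction of that external argument, and it contains genuine gaps in both the combinatorial core and the size/time bounds.

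Your local rule fails Density already on very simple inputs. Take $T=\texttt{x}\,\texttt{a}^{3\tau-2}$ with $\texttt{a}<\texttt{x}$. For $i=0$ the window $T[0\dd 2\tau)=\texttt{x}\,\texttt{a}^{2\tau-1}$ has no short period, but its minimal length-$\tau$ block is $\texttt{a}^\tau$, first occurring at offset~$1$, so $0\notin\Sync$; for each $j\in[1\dd\tau)$ the window $T[j\dd j{+}2\tau)=\texttt{a}^{2\tau}$ has period~$1$, so your condition~1 excludes~$j$. Thus $\Sync\cap[0\dd\tau)=\emptyset$ while $\per(T[0\dd 3\tau{-}2])=3\tau{-}1$. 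This simultaneously refutes your Case~1 (``a short period of some $T[j\dd j{+}2\tau)$ propagates to $T[i\dd i{+}3\tau{-}2]$''): the periodicity lemma transfers periods to \emph{substrings}, never to superstrings. Your Case~2 is equally broken: on a string of pairwise distinct, strictly decreasing characters, every window has its minimum at the rightmost offset, so $\Sync=\emptyset$ although no fragment has a short period, and the ``iteration of shifts'' you describe produces no two equal length-$\tau$ blocks whatsoever.

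Independently, Density does not give the $\cO(n/\tau)$ size bound you appeal to. On a string of strictly \emph{increasing} characters, the minimum in every window sits at offset~$0$, so your rule places \emph{every} position in $\Sync$ and $|\Sync|=\Theta(n)$; this also kills the claimed $\cO(n/\tau)$ running time, since even listing $\Sync$ then costs $\Theta(n)$. In \cite{DBLP:conf/stoc/KempaK19} the size bound is established by a separate charging argument using a carefully chosen identifier function (not lexicographic rank) and a two-sided window criterion; a one-sided ``leftmost minimum at offset~$0$'' test cannot work for any identifier. Your tabulation outline for the packed running time is reasonable, but only once the local rule is replaced by one that actually yields an $\cO(n/\tau)$-sized set satisfying both directions of Density.
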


Henceforth, we fix $\tau := \floor{\frac{1}{18} \log_\sigma n}$ and a $\tau$-synchronizing set $\Sync$ for $T$ computed in $\cO(n / \log_\sigma n)$ time using~\cref{thm:synch_packed}.
We next define \emph{sparse-Lyndon positions} (see \cref{fig:sLyn}), noting that their existence is only guaranteed for runs whose periods are long enough, and use them to group runs by Lyndon roots.

\begin{definition}[{Sparse-Lyndon position}]
Position $i$ is a \emph{sparse-Lyndon position} for a periodic fragment $U=T[a \dd b]$ with period $p$ if $T[i \dd n)$ is the lexicographically minimal string among $\{T[j \dd n)\,:\,j \in [a \dd a+p) \cap  \Sync\}$.

If a periodic fragment $U$ with period $p$ has a sparse-Lyndon position $i$, we call $T[i \dd i+p)$ the \emph{sparse-Lyndon root} of $U$ and denote it as $\sLroot(U)$.
\end{definition}

The following key lemma shows that Lyndon positions can indeed be replaced by sparse-Lyndon positions for the sake of grouping runs by Lyndon roots.

\begin{lemma}\label{lem:sparseLyndon}
Both of the following hold.
\begin{enumerate}[(a)]
\item\label{itaa} If a run $R$ has period $p \ge 2\tau-1$, then $R$ has a unique sparse-Lyndon root.
\item\label{itbb} Two runs $R_1$ and $R_2$ with period $p \ge 2\tau$ have the same Lyndon root if and only if they have the same sparse-Lyndon root.
\end{enumerate}
\end{lemma}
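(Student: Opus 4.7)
The plan is to handle the two parts separately, observing that uniqueness in (a) and the direction $(\Leftarrow)$ in (b) are essentially immediate, while existence in (a) and $(\Rightarrow)$ in (b) rest respectively on the density and consistency properties of the synchronizing set. Uniqueness in (a) follows because distinct positions $j \ne j'$ give distinct suffixes $T[j \dd n)$, so the lexicographic minimum over $[a \dd a+p) \cap \Sync$ is unique whenever this set is nonempty. For $(\Leftarrow)$ of (b), the sparse-Lyndon root $T[j \dd j+p)$ is always a cyclic rotation of the period block $T[a \dd a+p)$ of the run (using $|R|\ge 2p$), so equal sparse-Lyndon roots force the two period blocks to be cyclic rotations of a common string, hence their lex-smallest rotations -- the Lyndon roots -- agree.

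For existence in (a), the plan is to derive a contradiction from $[a\dd a+p)\cap\Sync=\emptyset$, putting aside trivial boundary cases near the ends of $T$. For every $k\in[0\dd p-\tau]$, the window $[a+k\dd a+k+\tau)$ lies in $[a \dd a+p)$ and hence avoids $\Sync$, so by density each length-$(3\tau-1)$ fragment $T[a+k\dd a+k+3\tau-2]$ has period at most $\tau/3$. Two such overlapping fragments with small periods share a common period by Fine--Wilf, so a single value $q^{\ast}\le\tau/3$ serves as a period for all of them; taking their union, the fragment $T[a\dd a+p+2\tau-2]$ has period $q^{\ast}$, and since $|R|\ge 2p\ge p+2\tau-1$ this fragment lies inside $R$. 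The remaining step, and the main obstacle, is to lift $q^{\ast}$ from this prefix to the whole of $R$: for any index $i$ in $R$ beyond $a+p+2\tau-2-q^{\ast}$, I use $R$'s period $p$ to rewrite $T[i]=T[i-p]$ and $T[i+q^{\ast}]=T[i+q^{\ast}-p]$, and then apply the already-established period $q^{\ast}$ on the prefix to equate the two shifted characters. Iterating this extension (and handling the edge case $|R|=2p$ directly, where the prefix already exhausts $R$) we conclude $\per(R)\le q^{\ast}<p$, contradicting $\per(R)=p$.

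For $(\Rightarrow)$ of (b), suppose $\Lroot(R_1)=\Lroot(R_2)$ and write $\pi_i:=T[a_i\dd a_i+p)$, so $\pi_2=\rot_c(\pi_1)$ for some $c\in[0\dd p)$. The plan is to use the consistency of $\Sync$ to see that the sync positions inside the first period are the ``same'' in $R_1$ and $R_2$ modulo this rotation. Concretely, using $|R_i|\ge 2p\ge p+2\tau$ and the periodicity of each $R_i$, one verifies that for every $k\in[0\dd p)$ we have $T[a_2+k\dd a_2+k+2\tau)=T[a_1+k'\dd a_1+k'+2\tau)$ with $k':=(k+c)\bmod p$; by consistency this yields a bijection $k\leftrightarrow k'$ between $[a_2\dd a_2+p)\cap\Sync$ and $[a_1\dd a_1+p)\cap\Sync$ under which the corresponding period blocks coincide, both equal to $\rot_{k'}(\pi_1)$. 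A short argument shows that the period block $\pi_i$ of a run is always primitive -- otherwise $R_i$ would inherit a strictly smaller period -- so all $p$ rotations of $\pi_i$ are distinct, and the lexicographic comparison of $T[j\dd n)$ and $T[j'\dd n)$ for two sync positions in the same run is decided within their first $p$ characters, i.e.\ within the two differing rotations. Thus minimizing $T[j \dd n)$ over sync positions reduces to selecting the lex-smallest rotation of $\pi_1$ from a shared index set, and this minimum -- and the resulting length-$p$ substring -- is identical for $R_1$ and $R_2$, giving $\sLroot(R_1)=\sLroot(R_2)$.
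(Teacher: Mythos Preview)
Your proof is correct and follows essentially the same approach as the paper: density of $\Sync$ plus Fine--Wilf for existence in (a), and consistency of $\Sync$ plus primitivity of the period block for the nontrivial direction of (b). The only notable difference is that in (a) you extend the small period $q^\ast$ from the prefix $T[a\dd a+p+2\tau-2]$ to all of $R$ by iteratively shifting by~$p$, whereas the paper more directly applies Fine--Wilf to this prefix (which already carries both periods $p$ and~$q^\ast$) to obtain a proper divisor of $p$ as a period of $R$---a shortcut you could adopt to streamline that step.
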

\begin{proof}
\eqref{itaa} Let $R=T[a \dd b]$. We will first show that $[a \dd a+p) \cap \Sync$ is non-empty. String $T[a \dd a+p+\tau)$ has period $p$ as $p \ge \tau$. By the periodicity lemma~\cite{fine1965uniqueness}, if $T[a \dd a+p+\tau)$ had a period at most $\frac13\tau$, then the string would have a period $p'$ that is smaller than $p$ and divides~$p$, which is not possible as this would imply that $R$ has a period $p'$. We have $p+\tau \ge 3\tau-1$. String $T[a \dd a+p+\tau)$ contains a fragment of length $3\tau-1$ with period greater than $\frac13\tau$; indeed, otherwise the periods of all such fragments would be equal by the periodicity lemma and this would imply that $T[a \dd a+p+\tau)$ has a period at most~$\frac13\tau$. Let $T[i \dd i+3\tau-2]$ be such a fragment with period greater than $\frac13\tau$. By density, $\Sync \cap [i\dd i+\tau)\ne\emptyset$. We have $a \le i$ and $i+3\tau-2 < a+p+\tau$, so $[i\dd i+\tau) \subseteq [a \dd a+p)$. Hence, indeed, $[a \dd a+p) \cap \Sync \ne \emptyset$.

This shows the existence of a sparse-Lyndon root of $R$.
As no two distinct suffixes are equal, the sparse-Lyndon root of $R$ is unique.

\eqref{itbb} By part \eqref{itaa}, the sparse-Lyndon roots of both runs are well-defined.

The implication ``$\Leftarrow$'' is obvious. As for the implication ``$\Rightarrow$'', let $R_1=T[a \dd b]$ and $R_2=T[a' \dd b']$ and assume that $i$ is the sparse-Lyndon position of $R_1$. By the assumption, there exists $i' \in [a' \dd a'+p)$ such that $T[i' \dd i'+p)=T[i \dd i+p)$. By the fact that $p \ge 2\tau$, we have $T[i' \dd i'+2\tau)=T[i \dd i+2\tau)$, so by consistency, $i' \in \Sync$.

To prove that $i'$ is the sparse-Lyndon position of $R_2$, assume to the contrary that there exists a position $j' \in ([a' \dd a'+p) \cap \Sync) \setminus \{i'\}$ such that $T[j' \dd j'+n) < T[i' \dd i'+n)$. Consequently, $T[j' \dd j'+p) < T[i' \dd i'+p)$ as no two cyclic rotations of a primitive string are equal. By the assumption, there exists $j \in [a \dd a+p)$ such that $T[j \dd j+p)=T[j' \dd j'+p)$. By the fact that $p \ge 2\tau$ and consistency, $j \in \Sync$. We have 
\[T[j \dd j+p)=T[j' \dd j'+p) < T[i' \dd i'+p)=T[i \dd i+p),\text{ so }T[j \dd j+n)<T[i \dd i+n),\]
which contradicts the assumption that $T[i \dd i+p)$ is the sparse-Lyndon root of $R_1$.
\end{proof}

\begin{definition}
The \emph{sparse-Lyndon representation} of a periodic fragment $U$ of $T$ is a quadruple $(\lambda,e,\alpha,\beta)$ such that:

\begin{itemize}
\item $\lambda=\sLroot(U)$, and
\item $U=P\lambda^e S$ with $|P|= \alpha < |\lambda|$ and $|S|=\beta<|\lambda|$.
\end{itemize}
\end{definition}

We use the next lemma to obtain the main result of this section.

\begin{lemma}[{\cite[Theorem 4.3]{DBLP:conf/stoc/KempaK19}}]\label{lem:former_claim}
Given the packed representation of a text $T \in
[0\dd\sigma)^n$ and a $t$-synchronizing set $\S$ of $T$ of size $\Oh(n/t)$ for $t = \Oh(\log_\sigma n)$,
we can compute in $\Oh(n/t)$ time the lexicographic order of all suffixes of $T$ starting at positions in $\S$.
\end{lemma}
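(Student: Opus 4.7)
The plan is to reduce sorting the suffixes $\{T[i \dd n) : i \in \S\}$ to sorting suffixes of a short ``meta-string'' over an alphabet of phrase names, and then invoke a standard linear-time suffix sorting algorithm.

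\textbf{Step 1: Phrases.} For each $i \in \S$, let $\mathsf{succ}(i)$ denote the next larger element of $\S$ (or $n$ if $i = \max \S$), and define the phrase $\phi_i := T[i \dd \mathsf{succ}(i))$. The density condition of \cref{def:sync_sets} ensures that either the gap $\mathsf{succ}(i) - i$ is at most $t$, in which case $\phi_i$ fits in $\Oh(1)$ machine words; or $T[i \dd \mathsf{succ}(i))$ is a highly periodic fragment with period $p \le \tfrac{1}{3} t$, in which case it is uniquely determined by the period, the offset into the Lyndon root, and its length. In either case $\phi_i$ admits an $\Oh(1)$-word canonical signature, readable from the packed input in $\Oh(1)$ time.

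\textbf{Step 2: Naming.} I would sort the signatures $(\phi_i)_{i \in \S}$ using \cref{fact:bsort} in $\Oh(n/t)$ time, then assign equal phrases equal integer names in $[0 \dd |\S|)$. Consistency (item~\ref{item1} of \cref{def:sync_sets}) applied to the length-$2t$ windows anchored at $i$ and $j$ guarantees that $\phi_i = \phi_j$ as text fragments whenever their signatures coincide, so the naming is faithful.

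\textbf{Step 3: Meta-string and sorting.} Enumerate $\S = \{i_1 < i_2 < \cdots < i_m\}$ and form the meta-string $M$ of length $m+1 = \Oh(n/t)$ whose $k$-th symbol is the name of $\phi_{i_k}$, terminated by a unique sentinel representing $T[i_m \dd n)$ (so that suffixes of $M$ of different lengths are never prefixes of one another). Run a linear-time integer-alphabet suffix sorting algorithm on $M$, producing the lexicographic order of its suffixes in $\Oh(|M|) = \Oh(n/t)$ time, and lift that order back to the positions $i_1,\dots,i_m$.

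\textbf{Correctness and the main obstacle.} Correctness reduces to showing that two text suffixes $T[i_k \dd n)$ and $T[i_\ell \dd n)$ compare in the same order as the $M$-suffixes they encode: if they differ, let the first differing phrase be the $r$-th; by Step~2 the names at position~$r$ disagree, and because the phrases share a common prefix, the sign of the character comparison at the first differing symbol of $T$ matches the sign of the name comparison given by our naming scheme (if we assign names by radix-sorted signatures in Step~2, the naming can be chosen order-preserving on signatures, hence on the underlying text fragments). The main obstacle is Step~1, specifically the ``long periodic'' case: one must certify that two equal text fragments straddling long periodic gaps receive identical signatures and that the naming respects lexicographic order despite the variable-length periodic content. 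This is handled by reading the first $2t$ characters of $\phi_i$ (which fix the period and alignment by consistency on $2t$-windows), recording the phrase length, and comparing two periodic phrases by lexicographically comparing first their shared Lyndon root, then the common exponent of the shorter phrase's periodic part against the corresponding prefix of the longer, and finally the tails.
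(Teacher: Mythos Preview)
The paper does not prove \cref{lem:former_claim}; it is imported verbatim as Theorem~4.3 of Kempa and Kociumaka~\cite{DBLP:conf/stoc/KempaK19} and used as a black box. Your high-level plan---decompose at synchronizing positions, name the phrases, and suffix-sort the resulting meta-string---is indeed the route taken in~\cite{DBLP:conf/stoc/KempaK19}, so there is nothing to contrast at the level of strategy.

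There is, however, a genuine gap in your correctness argument. You assume that the phrase decompositions of $T[i_k\dd n)$ and $T[i_\ell\dd n)$ align up to the first text mismatch at relative position $L$, and then compare ``the $r$-th phrases''. Consistency of $\Sync$ only forces alignment of synchronizing positions at relative offsets $m$ with $m+2t\le L$; a phrase boundary that falls in $(L-2t\dd L]$ in one suffix need not be present in the other. In that situation one of the two $r$-th phrases is a proper prefix of the other, and comparing their names does \emph{not} decide the text order---the answer depends on what follows the shorter phrase. Your non-overlapping encoding loses exactly this information. The fix used in~\cite{DBLP:conf/stoc/KempaK19} is, in effect, to let each meta-character also carry $2t$ characters past the next synchronizer (equivalently: encode the gap length together with the length-$2t$ context $T[i_k\dd i_k+2t)$), so that equal meta-characters force equal next boundaries by consistency and unequal meta-characters already witness a text mismatch.

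Your handling of long periodic phrases is also loose. Density does not make the whole phrase $T[i\dd\mathsf{succ}(i))$ periodic; it only forces the region starting at $i+1$ to have small period, so the first character may break it. More importantly, the sentence ``comparing two periodic phrases by \dots\ their shared Lyndon root'' presupposes equal roots, which is false in general (nothing prevents $\phi_i=(\texttt{ab})^{100}$ and $\phi_j=(\texttt{cd})^{100}$). A triple $(\text{period},\text{offset},\text{length})$ does not sort in text order, and a single order-preserving naming that covers both short and long phrases needs a common comparable key---for instance the packed first $\Theta(t)$ characters followed by the phrase length---rather than structurally different signatures for the two cases.
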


\begin{proposition}\label{prp:group}
All runs in $T$ computed as in \cref{prp:runs}, except for the regular layers of \flocks, can be grouped by equal Lyndon roots in $\Oh(n/\log_\sigma n)$ time. For runs with period at most $2\floor{\frac{1}{18} \log_{\sigma} n}$, we compute their Lyndon representations, and for the remaining runs, we compute their sparse-Lyndon representations.
\end{proposition}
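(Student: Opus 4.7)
The plan is to split the runs returned by \cref{prp:runs} (excluding regular layers of pyramids) by period and handle the two regimes separately, using \cref{lem:shper_group} for the short-period regime and sparse-Lyndon roots for the long-period regime. Let $\tau := \floor{\frac{1}{18}\log_\sigma n}$, as fixed in \cref{sec:grouping}, and let $\Sync$ be the $\tau$-synchronizing set of $T$ constructed there in $\Oh(n/\log_\sigma n)$ time via \cref{thm:synch_packed}.

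For runs with period at most $2\tau \le \frac{1}{9}\log_\sigma n$, I would invoke \cref{lem:shper_group} as a black box in $\Oh(n/\log_\sigma n)$ time and read off the Lyndon representation of each grouped run from its Lyndon root, period, and endpoints in $\Oh(1)$ each.

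For the remaining $\Oh(n/\log_\sigma n)$ runs, all of which have period $p>2\tau$, I would proceed in three steps. First, compute a sparse-Lyndon position for every such run: apply \cref{lem:former_claim} to obtain in $\Oh(n/\tau)$ time the lex rank of $T[j\dd n)$ for every $j \in \Sync$; store these ranks in an array indexed by $\Sync$-positions in increasing order and build a linear-time, constant-query RMQ structure on it; additionally, build a constant-time predecessor/successor oracle on $\Sync$ via a rank/select-augmented bitvector of $\Oh(n/\log n)$ words, preprocessed in $\Oh(n/\log n) = \Oh(n/\log_\sigma n)$ time. Then, for a run $R=T[a\dd b]$ of period $p$, the sparse-Lyndon position (which exists by \cref{lem:sparseLyndon}\eqref{itaa}) is the $\Sync$-position in $[a\dd a+p)$ of minimum rank, obtainable in $\Oh(1)$ time from one predecessor, one successor, and one RMQ call. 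Second, for each long-period run form the pair (period, rank of sparse-Lyndon position) and sort these $\Oh(n/\log_\sigma n)$ pairs lexicographically using \cref{fact:bsort}. Third, within each period block, runs that share a sparse-Lyndon root appear contiguously in the sorted order: if $T[i_1\dd i_1+p)=T[i_2\dd i_2+p)=\lambda$ and some $i_3 \in \Sync$ has $\mathsf{rk}(i_1)<\mathsf{rk}(i_3)<\mathsf{rk}(i_2)$, then a single disagreement within the first $p$ characters of $T[i_3\dd n)$ with $\lambda$ would push $T[i_3\dd n)$ outside the lex interval, forcing $T[i_3\dd i_3+p)=\lambda$ as well. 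I would therefore detect block boundaries by a single length-$p$ LCE comparison between consecutive sparse-Lyndon positions, which takes $\Oh(1)$ time after the $\Oh(n/\log_\sigma n)$-time preprocessing of~\cite{DBLP:conf/stoc/KempaK19}. By \cref{lem:sparseLyndon}\eqref{itbb}, the resulting blocks coincide with Lyndon-root classes, and the sparse-Lyndon representation of each long-period run is then read off in $\Oh(1)$ time from its endpoints, its period, and its sparse-Lyndon position.

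The main obstacle lies in the first of these three steps: the sparse-Lyndon position of each long-period run has to be located within an amortized $\Oh(1)$-time budget, which a naive binary search inside $\Sync$ could not deliver. The resolution is the combination of a predecessor/successor oracle on $\Sync$ with an RMQ structure on the rank array — crucially, both structures are built on objects of size $\Oh(n/\log_\sigma n)$, so their preprocessing fits inside our overall budget. A subsidiary point to verify carefully is the contiguity claim used in the third step, which is exactly what makes a single sorting pass followed by consecutive LCE checks sufficient for grouping.
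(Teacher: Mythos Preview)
Your proposal is correct and follows essentially the same approach as the paper: group short-period runs via \cref{lem:shper_group}, and for long-period runs compute sparse-Lyndon positions using the ranks from \cref{lem:former_claim} together with an RMQ structure, sort by $(p,\text{rank})$, and split the sorted list into groups with $\LCP$ queries. The only difference is cosmetic: where you locate the range of $\Sync$-positions inside $[a\dd a+p)$ per run via a rank/select bitvector, the paper does it for all runs at once by bucket-sorting the values $a$ and $a+p-1$ and merging with $\Sync$; both variants fit in the $\Oh(n/\log_\sigma n)$ budget.
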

\begin{proof}
Recall that $\tau = \floor{\frac{1}{18} \log_\sigma n}$. Runs with periods at most $2\tau$ are grouped by their Lyndon roots using \cref{lem:shper_group}. The remaining runs are grouped by their sparse-Lyndon roots, and thus by Lyndon roots due to \cref{lem:sparseLyndon}, using \cref{lem:former_claim} as follows.

\newcommand{\SRANK}{\mathsf{SparseRANK}}
Let $\Sync=\{s_1,\ldots,s_{|\Sync|}\}$, with $s_1< \dots <s_{|\Sync|}$, be a $\tau$-synchronizing set of $T$ constructed as in \cref{thm:synch_packed}. By \cref{lem:former_claim}, in $\Oh(n/\log_\sigma n)$ time we can construct an array $\SRANK[1 \dd |\Sync|]$ (``sparse RANK'' array) such that \[\SRANK[i]=|\{j \in [1 \dd |\Sync|]\,:\,T[s_j \dd n) \le T[s_i \dd n)\}|.\]
Then, in $\Oh(|\SRANK|)$ time,
we construct a data structure that can answer range minimum queries over $\SRANK$ in $\Oh(1)$ time~\cite{DBLP:conf/latin/BenderF00}.

Let $s_0=-1$ and $s_{|\Sync|+1}=n$ be sentinels.
Let $\Pi$ denote the set of all runs with period $p \ge 2\tau$ that are not regular layers of any pyramid. By \cref{fct:ESA19}, set $\Pi$ can be computed in $\Oh(n/\log_\sigma n)$ time.
For each run $T[a \dd b] \in \Pi$, we need to compute an interval $[u \dd v]$ such that $s_{u-1} < a \le s_u$ and $s_v < a+p \le s_{v+1}$.
By \cref{lem:sparseLyndon}\eqref{ita}, this interval is not empty and hence $u \le v$.
The sparse-Lyndon position of each such run can then be computed in $\Oh(1)$ time as the argmin of a range minimum query over $\SRANK[u \dd v]$.
The positions $u$ and~$v$ are computed for all runs simultaneously in $\Oh(n/\log_\sigma n)$ time by bucket sorting the set
$\{x : x=a \text{ or } x=a+p-1 \text{ for a  run } T[a\dd b] \in \Pi\}$
and merging the obtained sorted list with the synchronizing set $\Sync$ in a merge-sort fashion.

The remainder of the algorithm mimics steps \ref{it2} and \ref{it3}; see the discussion after \cref{obs:sq_from_runs}.
Namely, in $\Oh(n/\log_\sigma n)$ time, we bucket sort the runs with large periods by pairs $(p,\SRANK[i])$, where $i$ is the sparse-Lyndon position and $p$ is the period of the run.
Runs with equal sparse-Lyndon roots form consecutive sublists of the sorted lists.
The equality of sparse-Lyndon roots of consecutive runs in the sorted list can be checked in $\cO(1)$ time using longest common extension queries after an $\Oh(n/\log_\sigma n)$-time preprocessing~\cite[Theorem 5.4]{DBLP:conf/stoc/KempaK19}.
Thus, the grouping is performed in $\Oh(n/\log_\sigma n)$ time.
\end{proof}

\section{Squares Generated by Pyramids}\label{sec:anatomy}
We show that a special square (that is, a square with a primitive and highly periodic half) is always generated by a layer of a pyramid. The proof of the lemma uses the assumption of at least 4 occurrences of the period in a special square half.

\begin{lemma}\label{lem:specialsq_pyramids}
Let $U^2$ be a fragment of $T$. Then $U^2$ is a special square if and only if there exists a pyramid $\Pyr(F,F')$ in $T$ and a layer $R$ such that $U^2 \in 
\fragsquares(R \cap (F \cup F'))$ and $\per(U)=\per(F)$.
\end{lemma}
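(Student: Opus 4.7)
The plan is to prove the two implications separately, leaning on the structural apparatus already established for pyramids.

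For the ``if'' direction, assume that a pyramid $\Pyr(F,F')$ and a layer $R$ exist with $U^2\in\fragsquares(R\cap(F\cup F'))$ and $\per(U)=\per(F)=p$. By the definition of $\fragsquares$, $\per(U^2)=\per(R\cap(F\cup F'))=\per(R)$, and since $R$ is a layer of $\Pyr(F,F')$ we have $\subper(R)=p$ and hence $\per(R)\ge 4p$. Because $|U|$ is trivially a period of $U^2$, we get $|U|\ge\per(U^2)\ge 4p=4\per(U)$, so $U$ is highly periodic. For primitivity I would argue by contradiction: if $U=V^m$ with $m\ge 2$ and $V$ primitive, then $\per(U^2)=\per(V^{2m})=|V|=\per(U)=p$, which contradicts $\per(U^2)=\per(R)\ge 4p$. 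Hence $U$ is primitive and $U^2$ is special.

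For the ``only if'' direction, start from a special square $U^2$ and set $p:=\per(U)\le|U|/4$. Let $F$ and $F'$ be the maximal runs of period $p$ containing the first and the second half of $U^2$, respectively; both exist because each half has period $p$ and length $|U|\ge 4p$. The two runs meet at the midpoint of $U^2$, so they are neighboring. They also share a Lyndon root: both of their period-$p$ roots are conjugate to $w:=U[0\dd p)$ (the first $p$ characters of each half of $U^2$), so both Lyndon roots equal the lexicographically smallest rotation of $w$. The runs must be distinct: otherwise $U^2$ would have period $p$, and the wrap-around condition $U[|U|-p+i]=U[i]$ for $i\in[0\dd p)$ would force $w$ to be invariant under the rotation by $|U|\bmod p$, which is impossible since $w$ is primitive of length $p$ and (using primitivity of $U$ once more) $|U|\bmod p\in[1\dd p)$. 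Hence $\Pyr(F,F')$ is well defined.

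Next, let $R$ be the unique run with $\per(R)=\per(U^2)=|U|$ containing $U^2$, whose existence follows from primitivity of $U$. To see that $R$ is a layer of $\Pyr(F,F')$, observe that $R\cap(F\cup F')\supseteq U^2$ has length $\ge 2\per(R)$, so it is periodic with period $\per(R)$; and $U^2\in\fragsquares(R)$ with $\per(U)=p\le\per(R)/4$ shows $\subper(R)\le p$. Combined with $\per(U)=\per(F)=p$, the only remaining obligation from \cref{pyramid} is the matching lower bound $\subper(R)\ge p$.

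This lower bound is the main obstacle. Suppose for contradiction that some $X^2\in\fragsquares(R)$ satisfies $\per(X)<p$. Since $\per(X^2)=\per(R)=|U|$, the length $|X|$ is a multiple of $|U|$, so the length-$|U|$ prefix of $X$ is a length-$\per(R)$ substring of $R$, i.e., a cyclic rotation of the primitive root of $R$ (which is conjugate to $U$) inheriting a period at most $\per(X)<p$. A Fine--Wilf argument exploiting $|U|\ge 4p$ and primitivity of $U$ then propagates a common period strictly below $p$ back to $U$ itself, contradicting $\per(U)=p$ and completing the proof.
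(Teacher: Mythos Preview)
Your plan is correct and follows the same skeleton as the paper: find the period-$p$ runs $F,F'$ containing the two halves of $U^2$, take the run $R$ generating $U^2$, and verify the layer conditions. Two sub-arguments differ.

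In the ``if'' direction you bypass \cref{lem:util}: rather than placing the halves of $U^2$ inside $F$ and $F'$ and deducing $p\nmid|U|$, you get primitivity of $U$ directly from $\per(U^2)=\per(R)\ge 4p$. This is shorter than the paper's route.

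For the lower bound $\subper(R)\ge p$, the paper argues positionally inside $T$: a witness $X^2\in\fragsquares(R)$ with $\per(X)=p'<p$ yields a run $G$ of period $p'$ that overlaps $F$ (or $F'$) on at least $|U|/2>p+p'$ positions, and Fine--Wilf on that overlap forces $\per(F)<p$. Your route is string-theoretic---the length-$|U|$ prefix of $X$ is a rotation $U'=\rot_c(U)$ with a period $q<p$---but the last sentence is too compressed to stand as written. A rotation of $U$ having a small period does \emph{not} by itself force $\per(U)<p$ (e.g.\ $\per(\texttt{aab})=3$ while $\per(\texttt{aba})=2$). What makes it work here is that $\rot_c(U)=U[c\dd|U|)\cdot U[0\dd c)$, so the longer of the two pieces is a genuine substring of $U$ of length at least $|U|/2\ge 2p>p+q$; that piece carries period $p$ (as a substring of $U$) and period $q$ (as a prefix/suffix of $U'$), Fine--Wilf gives it period $d=\gcd(p,q)\mid p$, and since it contains a full length-$p$ block of $U$ this propagates to $\per(U)\le d<p$. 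Spell this out, and drop the appeal to ``primitivity of $U$'' in this step---you already spent primitivity on $\per(R)=|U|$ and it plays no further role here.
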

\begin{proof}
($\Rightarrow$) Let $U^2$ be a special square fragment of $T$ and $p=\per(U)$. Let $F$ and $F'$ be runs with period $p=\per(U)$ that contain the first and the second half of the considered occurrence of $U^2$ in $T$, respectively. We have $F \ne F'$, as otherwise $U^2$ would have period $p$ and, by the periodicity lemma, $U$ would not be primitive.

By \cref{obs:square_run}, there exists a run $R$ in $T$ such that $U^2 \in \fragsquares(R)$. By definition, we have $4p<|U|=\per(R)$. Moreover, $R$ is a subperiodic run with $\per(R)= |U|$ and $\subper(R) \le p$. If we had $\subper(R)=p'<p$, then there would exist a run $G$ in $T$ with period $p'$ that overlaps $F$ or $F'$---say, $F$---on at least $|U|/2$ positions. The overlap length would be greater than $p+p'$, so by the periodicity lemma, the overlap would have period $q:=\gcd(p,p')<p$ that divides~$p$, so $F$ would have period $q$; a contradiction.

Clearly, the runs $F,F'$ are neighboring. We have  $R \in \Pyr(F,F')$ or $R$ is a max-layer of some  pyramid $\Pyr(F'',F''')$ with  $\subper(R) <p$. In either case, $U^2 \subseteq F \cup F'$ and $U^2 \in \fragsquares(R \cap (F \cup F'))$, as required. 

($\Leftarrow$) Let $R$ be a layer of some pyramid with \[\per(F)=\per(F')=\per(U)=p\ \text{and}\ 
U^2 \in \fragsquares(R \cap (F \cup F')).\]
We have $|U| \ge \per(R) > 4p$ since $R$ is subperiodic. By \cref{lem:util},  one half of $U^2$ is contained in $F$ and the other in $F'$.

Period $p$ does not divide $|U|$ as otherwise we would have $F=F'$. Moreover, by the periodicity lemma, $U$ does not have a period $q$ that would divide $U$. Thus, $U$ is primitive and highly periodic, which means that $U^2$ is a special square.
\end{proof}

\renewcommand{\ov}{\mathit{ov}}
\begin{definition}[Pyramid type]
Let $F,F'$ be neighboring runs with period $p$ in $T$. We define the \emph{type} of the pyramid $\Pyr(F,F')$ as a triad $\type(F,F')=(\ov,X,Y)$ where (see \cref{fig:col}):
\[\ov=|F \cap F'|,\quad X=F[|F|-p \dd |F|),\quad Y=F'[0 \dd p).\]
\end{definition}

\begin{remark}
The strings
$X$ and $Y$ are cyclically equivalent if $\Pyr(F,F')$ is non-empty.
\end{remark}

\begin{example} Let $T=T[0\dd 60]=(\mathtt{aaaab})^5\mathtt{a}(\mathtt{aaaab})^7$, $F=(\mathtt{aaaab})^{5}\mathtt{aaaa}=T[0\dd 28]$ and $F'=(\mathtt{aaaab})^7=T[26\dd 60]$. Then 
$\type(F,F')\,=\, (3, \mathtt{baaaa}, \mathtt{aaaab})$.
\end{example}

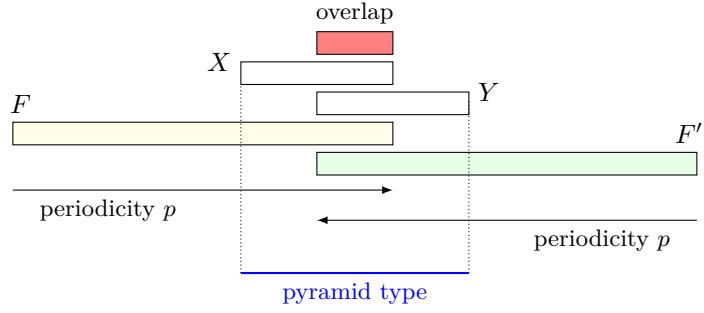
\begin{figure}[htpb]
    \centering
    \begin{tikzpicture}
\coordinate (ff) at (4,-0.4);
\coordinate (ffr) at ($(ff)+(5, 0.3)$);
\coordinate (f) at (0,0);
\coordinate (fr) at ($(f)+(5, 0.3)$);

\draw[fill=yellow!10!white] (f) rectangle (fr) node[above,xshift=-4.9cm] {$F$};
\draw[fill=green!10!white] (ff) rectangle (ffr) node[above,xshift=-0.1cm] {$F'$};

\draw ($(ff)+(0, 0.8)$) rectangle +(2, 0.3) node[right] {$Y$};
\draw ($(fr)+(0, 0.5)$) rectangle +(-2, 0.3) node[left] {$X$};
\draw[fill=red!50!white] ($(fr)+(0, 0.9)$) rectangle +(-1, 0.3) node[midway,yshift=0.4cm] {\small overlap};

\draw[-latex] ($(f)+(0, -0.6cm)$) -- +(5, 0) node[near start,below] {\small periodicity $p$};
\draw[-latex] ($(ffr)+(0, -0.9cm)$) -- +(-5cm, 0) node[near start,below] {\small periodicity $p$};

\draw[thick,blue] ($(ff)+(-1cm, -1.3cm)$) -- +(3, 0) node[midway,below] {\small pyramid type};

\draw[densely dotted] ($(ff)+(-1cm, -1.3cm)$) -- ($(ff)+(-1cm, 1.2cm)$);
\draw[densely dotted] ($(ff)+(2cm, -1.3cm)$) -- ($(ff)+(2cm, 1cm)$);
\end{tikzpicture}
    \caption{Illustration of $\type(F,F')=(\ov,X,Y)$, for two runs 
    $F,F'$ with the same period $p$. We have $|X|=|Y|=p$. 
    }\label{fig:col}
\end{figure}

We extend the notation $\fragsquares$ to pyramids as follows.
\begin{definition}[Special squares generated by pyramids]
\[\fragsquares(\Pyr(F,F'))\,:=\,\bigcup_{R\in \Pyr(F,F')}\; \fragsquares(R\cap\, (F\cup F')).\]
We say that the elements of $\fragsquares(\Pyr(F,F'))$ are \emph{generated} by the pyramid $\Pyr(F,F')$.
\end{definition}

\begin{lemma}\label{lem:key}
The sets of squares generated by two pyramids of different types are disjoint.
\end{lemma}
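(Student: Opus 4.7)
The plan is to show that the type $(\ov, X, Y)$ of any pyramid $\Pyr(F, F')$ that generates a given square $U^2$ is determined by the string $U^2$ alone; the lemma then follows immediately, since two pyramids sharing a generated square would have to agree in all three coordinates. I shall focus on the case of $U$ primitive: if $U = V^m$ with $m \ge 2$ and $V$ primitive, then $V^2$ is a prefix of $U^2$ and belongs to the same $\fragsquares(R \cap (F \cup F'))$, so the general case reduces to the primitive one.

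Fix a primitive square $U^2 \in \fragsquares(\Pyr(F, F'))$, say $U^2 = T[s \dd s + 2|U|)$, and put $p := \per(F) = \per(F')$. The primitivity of $F[0 \dd p)$ combined with a Fine--Wilf argument (in the spirit of the proof of \cref{lem:specialsq_pyramids}) forces $\per(U) = p$, so $p$ is extractable from $U^2$ alone. By \cref{lem:util}, the first half of $U^2$ lies in $F = T[a \dd b]$ and the second in $F' = T[a' \dd b']$.

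Periodicity of $F$ and $F'$ yields $T[i] = T[i + p]$ throughout $[a \dd b - p] \cup [a' \dd b' - p]$, while right maximality of $F$ and left maximality of $F'$ force the two period breaks $T[b + 1 - p] \ne T[b + 1]$ and $T[a' - 1] \ne T[a' - 1 + p]$. Because $F, F'$ are neighboring runs of equal period, $\ov = b - a' + 1 \le p - 1$, so $b + 1 - p \le a' - 1$; moreover, since $|U| \ge 4p$, both absolute positions sit within $[s \dd s + 2|U| - 1 - p]$, which is itself contained in $[a \dd b' - p]$.

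Consequently, the leftmost and rightmost indices $r_1 \le r_2$ in $[0 \dd 2|U| - 1 - p]$ with $U^2[r_i] \ne U^2[r_i + p]$---both computable from $U^2$---must equal $b + 1 - p - s$ and $a' - 1 - s$, because the decomposition $[a \dd b - p] \cup [b + 1 - p \dd a' - 1] \cup [a' \dd b' - p]$ of $[a \dd b' - p]$ admits no period breaks in its first or third piece. Simple arithmetic then recovers $\ov = p - 1 - (r_2 - r_1)$, $X = U^2[r_1 \dd r_1 + p)$, and $Y = U^2[r_2 + 1 \dd r_2 + 1 + p)$, all expressed purely in terms of $U^2$. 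The one subtlety to dispatch is that the middle interval $[b + 1 - p \dd a' - 1]$ may harbour additional interior breaks, but this is harmless: extracting only the extremal break positions always pinpoints the two canonical ones, and hence the type.
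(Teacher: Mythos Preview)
Your proof is correct and takes essentially the same approach as the paper: both arguments show that the type $(\ov,X,Y)$ is recoverable from $U^2$ alone by locating where the period-$p$ structure breaks. The paper phrases this via the smallest $i$ with $\per(U^2[0\dd i])>p$ and the largest $j$ with $\per(U^2[j\dd |U^2|))>p$, whereas you use the leftmost and rightmost positions $r_1,r_2$ with $U^2[r]\ne U^2[r+p]$; these are equivalent (your $r_1,r_2$ are the paper's $i-p$ and $j$), and the recovered values of $\ov$, $X$, $Y$ coincide.
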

\begin{proof}
Let $F_1,F'_1$ and $F_2,F'_2$ be pairs of neighboring runs with equal periods such that $\type(F_1,F'_1) \ne \type(F_2,F'_2)$.
We will show that the sets $\fragsquares(\Pyr(F_1,F'_1))$ and \linebreak$\fragsquares(\Pyr(F_2,F_2'))$ are disjoint.

Assume there exists $U^2  \in \fragsquares(\Pyr(F_1,F'_1)) \cap \fragsquares(\Pyr(F_2,F'_2))$. We have
\[U^2 \in \fragsquares(R_1 \cap (F_1\cup F_1')) \cap \, \fragsquares(R_2 \cap (F_2\cup F_2')),\] 
for some runs $R_1\in \Pyr(F_1,F_1')$ and $R_2\in \Pyr(F_2,F_2')$.
Let $\type(F_1,F'_1)=(\ov_1,X_1,Y_1)$ and $\type(F_2,F'_2)=(\ov_2,X_2,Y_2)$. 
By \cref{lem:specialsq_pyramids}, $U^2$ is a special square with
$\per(U)=\per(F_1)=\per(F'_1)=\per(F_2)=\per(F'_2).$
 Let $p=\per(U)$. 
 Square $U^2$ does not have period $p$ (as $U$ is primitive). 
Hence, we can define
\begin{itemize}
    \item 
$i$ as the smallest position in $U^2$ such that $\per(U^2[0 \dd i])>p$;
\item 
$j$ as the largest position in $U^2$ such that $\per(U^2[j \dd |U^2|))>p$.
\end{itemize} 
We have $j < |U| \le i$. 
Then, we have \[X_1=U^2[i-p \dd i-1]=X_2,\ Y_1=U^2[j+1 \dd j+p]=Y_2,\ \text{and}\ \ov_1=i-j-1=\ov_2,\] so $\type(F_1,F'_1)=\type(F_2,F'_2)$. This contradiction concludes the proof.
\end{proof}

By $\fragsquares(\RF(F,F'))$ we denote the set of (special) squares generated by regular layers of $\RF(F,F')$.
For an illustration of the next lemma, see \cref{collider}.

\begin{figure}[htpb]
    \centering\includegraphics[width=12cm]{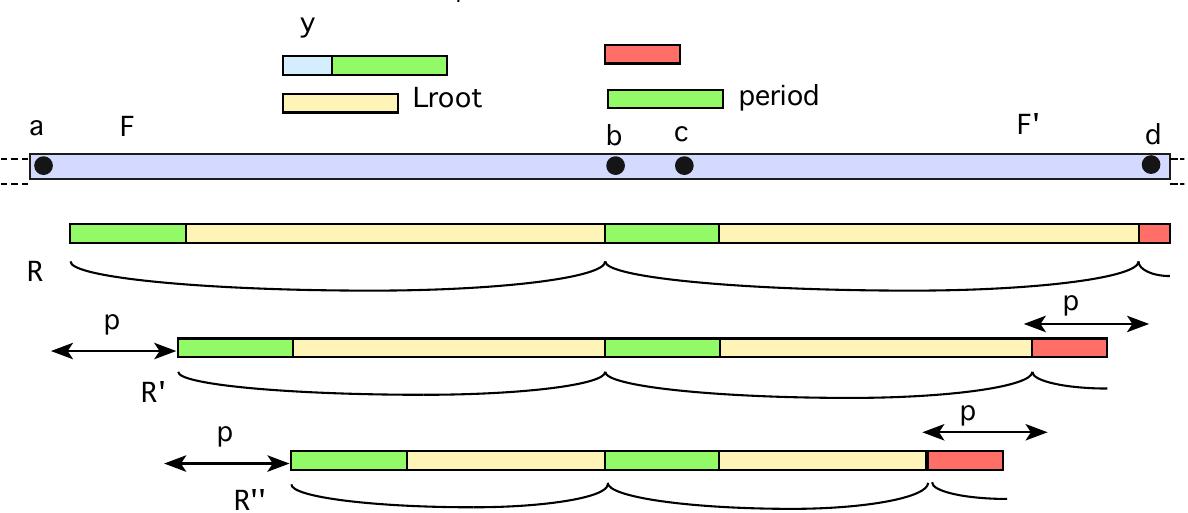}
    \caption{Illustration for the squares generated by the regular layers of a pyramid.
    Consider two neighboring runs $F=T[a\dd c]$ and $F'=T[b\dd d]$ with the same period $p$.
    $\Pyr(F,F')$ consists of the three shown layers $R,R',R''$ with periods at least $4p$ and subperiod $p$; the two shortest layers are regular.
    Let the starting positions of $R'$ and $R''$ in $T$ be $i'$ and $i''$, respectively.
    These positions 
    correspond to occurrences of the factor $Y=F'[0 \dd p)$, shown in green.
    The offsets $b-i'$ and $b-i''$ are equivalent modulo $p$ and are equal to $\per(R')$ and $\per(R'')$, respectively.
    We have that
    $\fragsquares(R') = \{T[i'+x \dd i' + 2(b-i')+x) : x \in [0\dd \ov] \}$
    and
    $\fragsquares(R'') = \{T[i''+x \dd i'' + 2(b-i'')+x) : x \in [0\dd \ov] \}$
    where $\ov=c-b+1$.
    }\label{collider}
\end{figure}

\begin{lemma}\label{lem:spare_key}
Both of the following hold:
\begin{enumerate}[(a)]
\item\label{itb} If $\Pyr(F,F')$ is a pyramid, then \[|\fragsquares(\RF(F,F'))|=|\RF(F,F')|\cdot (|F\cap F'|+1).\]
\item\label{ita} If $\Pyr(F_1,F_1')$ and $\Pyr(F_2,F'_2)$ are pyramids such that $\type(F_1,F'_1)=\type(F_2,F'_2)$, then \[|\RF(F_1,F'_1)| < |\RF(F_2,F'_2)|\ \ \Rightarrow\ \ 
\fragsquares(\Pyr(F_1,F'_1)) \subset \fragsquares(\Pyr(F_2,F'_2)).\]
\end{enumerate}
\end{lemma}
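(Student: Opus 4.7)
For part~(a), I would invoke \cref{rem:nonemptyfs} to get $R \cap (F \cup F') = R$ for each regular layer $R$; by \cref{lem:contained}, $|R| = \ov + 2\per(R)$, so $R$ contains exactly $\ov + 1$ length-$2\per(R)$ windows, each a square. These $\ov + 1$ squares are pairwise distinct: a coincidence of two windows at shift $d \le \ov < \per(R)$ would, via the periodicity lemma, give $R$ a period $\gcd(d, \per(R)) < \per(R)$, contradicting the minimality of $\per(R)$. Since distinct regular layers have distinct periods (arithmetic progression, \cref{lem:contained}), their squares have pairwise distinct lengths, so the contributions are disjoint and sum to $|\RF(F, F')| \cdot (\ov + 1)$.

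For part~(b), I would fix the common type $(\ov, X, Y)$, set $p := |X|$, and proceed in three steps. \emph{Step~1:} Show $\delta$ is determined by the type. Writing $r_{F_j} := |F_j| \bmod p$ and $L_j := a_j' - a_j = |F_j| - \ov$, one gets the key identity $L_j - r_{F_j} \equiv -\ov \pmod p$. Computing the Lyndon positions from $F_j[0,p) = \rot_{-r_{F_j}}(X)$ and $F_j'[0,p) = Y$, one obtains $\delta = (s_\lambda^Y - s_\lambda - \ov) \bmod p$, where $s_\lambda, s_\lambda^Y$ are the rotations sending $X, Y$ to the common Lyndon root. \emph{Step~2:} Parameterize each square $U^2$ of length $2(kp + \delta)$ generated by $\Pyr(F_j, F_j')$ by a pair $(k, t)$, where $t := s - a_j'$ is the offset of the start $s$ of $U^2$ from $a_j'$; using the key identity, the first character of $U^2$'s first half is $X[(t - \ov) \bmod p]$ and of its second half is $Y[(t + \delta) \bmod p]$, so the string $U^2$ depends only on the type and $(k, t)$, independent of $j$.

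\emph{Step~3:} Show every valid $(k, t)$-pair for $\Pyr(F_1, F_1')$ is valid for $\Pyr(F_2, F_2')$, and that the inclusion is strict. For $(k, t)$ coming from a regular layer of pyramid~1, $q := kp + \delta \le \min(L_1, R_1)$ automatically, the valid $t$-range is $[-q, \ov - q]$ (the $\ov + 1$ positions where both halves fit in $F_1$ and $F_1'$), and $k \le |\RF(F_1, F_1')| + 3 < |\RF(F_2, F_2')| + 3$, so pyramid~2 has a regular layer with matching period and $t$-range, generating the same $\ov + 1$ squares. The main obstacle is the max-layer of pyramid~1: if it is internal (contained in $F_1 \cup F_1'$), its period is covered by a regular layer of pyramid~2 by the same counting; if it is external, its period is the unique value in $[L_1, L_1 + \ov - 1]$ congruent to $\delta \bmod p$, and a case analysis using $|\RF(F_1, F_1')| < |\RF(F_2, F_2')|$ to constrain $L_1, R_1$ versus $L_2, R_2$ shows that pyramid~2 still has a layer generating exactly the same squares. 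Strict inclusion follows since pyramid~2 has at least one extra regular layer, generating $\ov + 1$ squares of a length unique to its arithmetic progression.
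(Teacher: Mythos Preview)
Your overall strategy matches the paper's: for (a) you derive $|R|=\ov+2\per(R)$ from \cref{lem:contained} and count $\ov+1$ distinct squares per regular layer, with distinctness across layers coming from distinct periods; for (b) you parameterize layers by $k$ (period $kp+\delta$), argue that the type determines $\delta$ and the string content of each layer, and then compare $k$-ranges. Your Steps~1--2 make explicit what the paper compresses into the phrase ``equal runs for subsequent values of~$k$''; this is fine and arguably clearer.

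The genuine gap is in Step~3's strict-inclusion argument. The max-layer of pyramid~1 always has $k^*=|\RF(F_1,F_1')|+4$: in the internal case this is immediate, and in the external case the bounds $q^*\ge\min(L_1,R_1)>M_1p+\delta$ together with $q^*\le\min(L_1,R_1)+\ov-1<(M_1+2)p+\delta$ force $k^*=M_1+1$ (this observation also replaces your hand-waved ``case analysis''). Consequently, when $|\RF(F_2,F_2')|=|\RF(F_1,F_1')|+1$, the single ``extra'' regular layer of pyramid~2 has exactly the period of pyramid~1's max-layer, and pyramid~1 may already generate all $\ov+1$ squares of that length---so your extra regular layer need not contribute anything new, and your argument fails. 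The paper instead derives strict inclusion from the \emph{max-layer of pyramid~2}: its period $(|\RF(F_2,F_2')|+4)p+\delta$ strictly exceeds every period occurring in pyramid~1 and therefore supplies at least one genuinely new square.
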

\begin{proof}
Let $F=T[a\dd b]$, $F'=T[a' \dd b']$ be neighboring runs with period $p$ and $a<a'$.
Due to \cref{lem:maxlayer}, the runs in $\RF(F,F')$ are all layers in the set $\mathcal{R} := \{T[x \dd y] \in \Pyr(F,F'): x,y \in (a \dd b')\}$ defined in \cref{lem:contained}, apart, possibly, from the one with the largest period.

\textbf{Proof of \eqref{itb}.}
Each run $R \in \RF(F,F')$ generates $|R|-2\cdot\per(R)+1$ squares. By \cref{lem:contained}, for run $R=T[a'_k \dd b_k]$, this number of squares equals
\[b_k-a'_k+2-2\cdot\per(R)=b-a'+2+2kp+2\delta-2\cdot\per(R)=b-a'+2=|F\cap F'|+1.\]

\textbf{Proof of \eqref{ita}.}
An application of \cref{lem:contained} to $(F_1,F'_1)$ and for $(F_2,F'_2)$ produces equal runs for subsequent values of $k$ if $\type(F_1,F'_1)=\type(F_2,F'_2)$.
Thus, if $|\RF(F_1,F'_1)| \le |\RF(F_2,F'_2)|$, then for each run in $\RF(F_1,F'_1)$, an equal run is present in $\RF(F_2,F'_2)$.
For the max-layer $R_1$ of $\Pyr(F_1,F'_1)$ and the regular layer $R_2 \in \RF(F_2,F'_2)$ with the same period, we have $\fragsquares(R_1 \cap (F_1 \cup F'_1)) \subseteq \fragsquares(R_2 \cap (F_2 \cup F'_2))$.
Runs in a pyramid have different periods, so they generate disjoint sets of squares. The max-layer of $\Pyr(F_2,F'_2)$ thus generates a special square that is not generated by $\Pyr(F_1,F'_1)$.
\end{proof}

\section{Counting Squares}\label{sec:counting}
For brevity, primitively rooted squares are called \emph{p-squares} and non-primitively rooted squares are called \emph{np-squares} (see \cite{DBLP:journals/ejc/KubicaRRW13}).
We note that special squares are, in particular, p-squares.

\subsection{Counting Plain Squares}\label{subsec:counting1}
Recall that a square is plain unless it is special; that is, $U^2$ is plain if it is an np-square or it is not highly periodic.
The next lemma follows from \cite{DBLP:journals/tcs/CrochemoreIKRRW14} (and \cref{fact:bsort}); we give its proof here for completeness.

\begin{restatable}[see {\cite[Theorem 13]{DBLP:journals/tcs/CrochemoreIKRRW14}}]{lem}{lemExtracting}\label{lem:Extracting}
Assume we are given $r$ periodic fragments in $T$ grouped by their Lyndon roots and that the Lyndon representations of all these periodic fragments are available. The numbers of distinct p-squares and distinct np-squares generated by these periodic fragments can be computed in $\Oh(r+\sqrt{n})$ time. 

The same conclusion holds if we are given $r$ periodic fragments in $T$ grouped by their sparse-Lyndon roots and that their sparse-Lyndon representations are available.

In each case, any $k$ distinct corresponding squares can be reported in $\Oh(k+r+\sqrt{n})$ time.
\end{restatable}
\newcommand{\F}{\mathcal{F}}
\begin{proof}
First we show how to count distinct np-squares. For each group of periodic fragments $\F$ with Lyndon root $\lambda$, we compute $\beta=\max_{F \in \F} \floor{|F|/(2p)}$. By \cite[Lemma 22]{DBLP:journals/tcs/CrochemoreIKRRW14}, the group generates all cyclic rotations of np-squares $\lambda^{2\alpha}$, for all $2 \le \alpha < \beta$. This way we obtain $|\lambda|\cdot (\beta-1)$ distinct squares.
Each periodic fragment $F \in \F$ such that $\floor{|F|/(2p)}=\beta$ generates all cyclic rotations $\rot_c(\lambda^{2\beta})$, for $c$ in a union of at most two subintervals of $[0 \dd \lambda)$.
Thus if $\beta>1$, the problem reduces to computing the size of the union of at most $|\F|$ intervals.
This problem can be solved by sorting the endpoints of the intervals and processing them in an increasing order. The sorting can be performed off-line for all groups of fragments using bucket sort in $\Oh(r+\sqrt{n})$ time, as we can treat a number in $[0 \dd n)$ as a pair of numbers in $[0 \dd \floor{\sqrt{n}})$.

The same algorithm can be used for $\lambda$ being the sparse-Lyndon root.

We proceed to counting p-squares. For a group of periodic fragments, if $\beta>1$, we simply have $|\lambda|$ p-squares being rotations of $\lambda^2$; and if $\beta=1$, we reduce the problem to union of intervals (corresponding to rotations of $\lambda^{2\beta}=\lambda^2$) as in the case of np-squares.
\end{proof}

\begin{lemma}\label{lem:np}
    The number of np-squares in $T$ can be computed in $\Oh(n/\log_\sigma n)$ time.
\end{lemma}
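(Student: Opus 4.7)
The plan is to apply \cref{lem:Extracting} to the runs returned by \cref{prp:runs}, after grouping them by Lyndon roots via \cref{prp:group}. That representation has total size $\Oh(n/\log_\sigma n)$ but encodes potentially $\Omega(n)$ regular layers inside pyramids; since \cref{lem:Extracting} processes a family of $r$ fragments in $\Oh(r+\sqrt{n})$ time, explicit iteration over regular layers is unaffordable. The main obstacle, and the crux of the proof, is to show that \emph{regular layers of pyramids generate no np-squares}, so they may simply be discarded.

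To establish this, I consider a regular layer $R\in\RF(F,F')$ of a pyramid $\Pyr(F,F')$. Using the notation of \cref{lem:contained}, $R = T[a'_k \dd b_k]$ for some $k\ge 4$, with $\per(R) = kp+\delta$ and $|R| = (b-a'+1) + 2kp + 2\delta$. As $F,F'$ are neighboring runs with period $p$, $b-a'+1 = |F\cap F'| \le p-1$. Any np-square generated by $R$ must be of the form $\lambda^{2\alpha}$ with $\alpha\ge 2$ and $|\lambda| = \per(R)$, so $R$ would need $|R| \ge 4\per(R)$, i.e.\ $b-a'+1 \ge 2kp+2\delta \ge 8p$, contradicting $b-a'+1 < p$.

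With regular layers removed, the remaining runs (the single runs plus the representatives inside clusters of runs) are $\Oh(n/\log_\sigma n)$ in number. I invoke \cref{prp:group} to group them by Lyndon roots in $\Oh(n/\log_\sigma n)$ time, obtaining Lyndon representations for runs of period at most $2\floor{\tfrac{1}{18}\log_\sigma n}$ and sparse-Lyndon representations for the rest. Since the period of every np-square equals that of any run generating it, the short-period and long-period families contribute disjoint sets of np-squares. Applying \cref{lem:Extracting} separately to each family counts distinct np-squares in $\Oh(n/\log_\sigma n + \sqrt{n}) = \Oh(n/\log_\sigma n)$ time (the $\sqrt{n}$ term is dominated since $\log_\sigma n \le \sqrt{n}$ for sufficiently large $n$); summing the two counts yields the total number of np-squares in $T$.
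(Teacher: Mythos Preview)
Your proof is correct and follows essentially the same approach as the paper: discard the regular layers, group the remaining $\Oh(n/\log_\sigma n)$ runs by (sparse-)Lyndon root via \cref{prp:group}, and invoke \cref{lem:Extracting}. The paper's proof of \cref{lem:np} is terser and leaves implicit the key point you make explicit, namely that a regular layer $R$ satisfies $|R|=|F\cap F'|+2\per(R)<p+2\per(R)<3\per(R)$ and hence cannot contain any fragment of length $4\per(R)$, so it generates no np-square; the paper only states this fact later, in the remarks on higher powers.
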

\begin{proof}
We use \cref{lem:Extracting} for counting np-squares generated by all runs that are not regular layers, grouped as in \cref{prp:group}. For runs with small periods, we use Lyndon representations, and for the remaining runs we use sparse-Lyndon representations. There are $\Oh(n/\log_\sigma n)$ such runs, so np-squares are counted in $\Oh(n/\log_\sigma n)$ time.
\end{proof}

\begin{lemma}\label{lem:pplain}
    The number of plain p-squares in $T$ can be computed in $\Oh(n/\log_\sigma n)$ time.
\end{lemma}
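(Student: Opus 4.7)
The plan is to adapt the approach used for \cref{lem:np}: apply the p-square counting procedure of \cref{lem:Extracting} to the $\Oh(n/\log_\sigma n)$ non-regular-layer runs --- namely, the set $\X$ from \cref{fct:ESA19} combined with the short-period runs obtained through \cref{lem:tauruns,lem:not_tauruns} --- grouped by (sparse-)Lyndon roots via \cref{prp:group}. This procedure overcounts plain p-squares by including some special ones (those generated by max-layers), which I correct for on a per-group basis using the pyramid structure.

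I first argue that every plain p-square is generated by some non-regular-layer run. Suppose, for contradiction, that a plain p-square $X^2$ is generated by a regular layer $R \in \RF(F,F')$ with $p := \per(F)$. By \cref{rem:nonemptyfs}, $R \subseteq F \cup F'$, and \cref{lem:util} then implies that each half of $X^2 \subseteq R$ lies in one of $F, F'$, so $\per(X) \le p$. Since $R$ is a layer, $\per(R) \ge 4p$, and since $X^2$ is primitively rooted, $\per(R) = |X|$; altogether $\per(X) \le p \le |X|/4$, contradicting that $X$ is not highly periodic. Therefore, every plain p-square is indeed generated by a run processed above and is captured by \cref{lem:Extracting}.

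For each group with root $\lambda$, \cref{lem:Extracting} returns $N_\lambda = |S_\lambda|$, where $S_\lambda \subseteq [0, |\lambda|)$ consists of the rotations $c$ for which $(\rot_c(\lambda))^2$ is generated in the group. A counted square $(\rot_c(\lambda))^2$ is special iff $\rot_c(\lambda)$ is highly periodic, equivalently iff the length-$|\lambda|$ window $\lambda^2[c\dd c+|\lambda|)$ has period at most $|\lambda|/4$. Such a window corresponds to a subperiodic structure of a pyramid in which a run of the group participates as a (max-)layer; crucially, by \cref{lem:spare_key}\eqref{itb} each layer of a pyramid $\Pyr(F,F')$ generates exactly $|F\cap F'|+1$ special squares. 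Using the $\Oh(n/\log_\sigma n)$-sized canonical pyramid representation from \cref{fct:ESA19} (see \cref{cor:canonrep}), I read off, for each group, the number of bad rotations $c \in S_\lambda$ and subtract it from $N_\lambda$.

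The main obstacle will be ensuring that the total filtering time across all groups is $\Oh(n/\log_\sigma n)$. I plan to charge each bad rotation to its originating pyramid, of which there are only $\Oh(n/\log_\sigma n)$, and extract the required count from its canonical representation in amortized constant time per pyramid; on the tabulation side, the short-period component is immediate since all the relevant information for strings of length $\Oh(\log_\sigma n)$ fits into precomputed tables. Summing the plain counts over all groups gives the number of distinct plain p-squares in $T$ within the claimed time.
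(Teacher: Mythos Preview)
Your opening argument---that every plain p-square is generated by some non-regular-layer run---is correct and matches what the paper uses. The divergence is in how you remove the special squares that max-layers contribute.

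The subtraction step has a genuine gap. You invoke \cref{lem:spare_key}\eqref{itb} for ``each layer'', but that lemma concerns \emph{regular} layers only; its formula $|F\cap F'|+1$ comes from the fact that a regular layer lies entirely inside $F\cup F'$ (\cref{lem:contained}). A max-layer need not: it can protrude beyond $F\cup F'$ and, more importantly, can be the max-layer of several consecutive pyramids simultaneously (cf.\ \cref{fig:flock2}). Thus a single max-layer $R$ may generate special squares coming from $\Pyr(R_0,R_1),\Pyr(R_1,R_2),\dots,\Pyr(R_{g-1},R_g)$, and these sets overlap. Summing $|F\cap F'|+1$ over the pyramids in which $R$ appears would double-count; taking any one pyramid would undercount. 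You also rely on the pyramid collection from \cref{fct:ESA19}, but that collection is only guaranteed to cover the \emph{regular} layers of large-period runs---it is not a list of all pyramids for which a given max-layer is the top. Nothing in your plan explains how, in $\Oh(1)$ amortized time per max-layer, you recover the exact set of ``bad'' rotations $c\in S_\lambda$ (those with $\rot_c(\lambda)$ highly periodic) that $R$ actually realizes.

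The paper sidesteps all of this by a different, constructive idea: instead of subtracting, it \emph{replaces} each max-layer $R$ by at most three periodic fragments $Y_0,Y_1,Y_{g-2}$ of $R$ (obtained from the chain of period-$\subper(R)$ runs that $R$ straddles) which together generate exactly the plain p-squares of $R$ and no special squares. These fragments inherit the (sparse-)Lyndon representation of $R$ in $\Oh(1)$ time, so they can be fed directly into \cref{lem:Extracting}. This trimming is the missing technical ingredient; without it, your inclusion--exclusion does not go through in the stated time.
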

\begin{proof}
By \cref{lem:specialsq_pyramids}, plain p-squares are generated by runs grouped as in \cref{prp:group}.
For each run computed in \cref{prp:group}, we check if it is also reported as a max-layer in \cref{prp:runs}. This can be checked globally for all runs in $\Oh(n/\log_\sigma n)$ time using bucket sort.
The runs that turned out to be max-layers are cut into smaller periodic fragments that generate plain p-squares (to avoid counting of special squares) as shown below.

Consider a max-\bird $R = T(x \dd y)$ with $\subper(R)=p$.
Let $R_0, \ldots, R_g$ be the sequence of runs with period $p$, sorted with respect to their starting positions, such that $R$ is a max-layer of $\Pyr(R_i,R_{i+1})$ for all $i \in [0 \dd g)$.
Further, let $R_i = T[x_i \dd y_i]$ for each $i \in [0 \dd g]$. 

For convenience, for all $i \in \mathbb{Z} \setminus [0 \dd g]$, set $x_i=\infty$ and $y_i=-\infty$.
For $i \in [0 \dd g]$, let us denote $Y_i := T(\max\{x, y_i - \per(R) +1\} \dd \min\{x_{i+2} + \per(R) - 1, y\})$.
Due to the periodicity of $R$, for any $i,j \in [1\dd g-3]$, we have $Y_i=Y_j$.
Additionally, any occurrence of a plain p-square generated by $R$ in $R$ is contained in some $Y_i$.
Further, each $Y_i$ does not generate any special square of length $2\cdot \per(R)$, as it only contains a single maximal periodic fragment with period $\subper(R)$ that is of length at least $\per(R)$.

Hence, it suffices to use the strings among $Y_0$, $Y_1$, and $Y_{g-2}$ that are of length at least $2\cdot \per(R)$ instead of $R$.
Those strings are periodic and their (sparse-) Lyndon representations can be inferred in $\Oh(1)$ time from the (sparse-) Lyndon representation of the max-layer~$R$.

We obtain $\Oh(n/\log_\sigma n)$ runs that are not max-layers from \cref{prp:group} and  $\Oh(n/\log_\sigma n)$ periodic fragments constructed as described above from max-layers ($\Oh(1)$ periodic fragments from each max-layer).
By \cref{lem:Extracting}, plain p-squares can be counted in $\Oh(n/\log_\sigma n)$ time.
\end{proof}

\subsection{Counting Special Squares}\label{subsec:counting2}
By \cref{lem:specialsq_pyramids}, special squares are only generated by layers of pyramids.

\begin{lemma}\label{lem:group_pyr}
All pyramids $\Pyr(F,F')$ can be grouped by their types in $\Oh(n/\log_\sigma n)$ time.
\end{lemma}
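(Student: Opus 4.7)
My plan is to reduce the grouping of the $\Oh(n/\log_\sigma n)$ pyramids, available from \cref{fct:ESA19} and \cref{cor:canonrep}, to a bucket sort on constant-arity tuples of integers in $[0 \dd n)$; this will then terminate in $\Oh(n/\log_\sigma n)$ time by \cref{fact:bsort}. The key observation is that, for any pyramid $\Pyr(F,F')$ of period $p$, the strings $X = F[|F|-p\dd |F|)$ and $Y = F'[0\dd p)$ appearing in $\type(F,F')$ are both cyclic rotations of the (sparse-)Lyndon root $\rho$ shared by $F$ and $F'$, so each can be recorded by a single integer offset relative to $\rho$.

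By \cref{rem:nonemptyfs} the runs $F$ and $F'$ are never regular layers, so they fall under \cref{prp:group}: within the promised $\Oh(n/\log_\sigma n)$ preprocessing I may assume that for every pyramid I have at hand the (sparse-)Lyndon representations $\Lrepr(F)=(\rho,e,\alpha,\beta)$ and $\Lrepr(F')=(\rho,e',\alpha',\beta')$ together with a common integer group identifier $\mathrm{id}(\rho)\in[0\dd n)$ inherited from the grouping. Unpacking the representation, $F = P \rho^e S$ with $|S|=\beta$ gives $X = \rho[\beta \dd p)\cdot \rho[0\dd \beta)$, so $X$ is the rotation of $\rho$ by $r_X := \beta$; symmetrically $Y$ is the rotation by $r_Y := (p-\alpha')\bmod p$. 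The overlap $\ov = |F\cap F'|$ is read off in $\Oh(1)$ from the endpoints of $F$ and $F'$.

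I then encode each pyramid type by the 5-tuple
\[
\kappa(F,F') \;:=\; \bigl(p,\ \ov,\ \mathrm{id}(\rho),\ r_X,\ r_Y\bigr) \;\in\; [0\dd n)^5.
\]
Because distinct cyclic rotations of a primitive word are themselves distinct, equality of $\kappa$ is equivalent to equality of $\type(F,F')$, so a single application of \cref{fact:bsort} stably sorts the $\Oh(n/\log_\sigma n)$ tuples and yields the desired grouping within the same time bound.

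The one subtlety I anticipate lies in the cross-regime consistency of $\mathrm{id}(\rho)$: for small-period runs ($p\le 2\tau$) \cref{prp:group} groups by Lyndon roots, whereas for larger-period runs it groups by sparse-Lyndon roots. Including $p$ as the first coordinate of $\kappa$ sidesteps this entirely, since the choice of regime depends only on whether $p \le 2\tau$, so no two matching tuples can straddle the boundary, and within each regime matching identifiers faithfully encode matching actual Lyndon roots---trivially for the Lyndon case, and by \cref{lem:sparseLyndon}\eqref{itbb} for the sparse-Lyndon case.
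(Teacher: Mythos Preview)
Your approach is essentially identical to the paper's: both encode $\type(F,F')=(\ov,X,Y)$ by replacing $X,Y$ with their rotation offsets $c_X,c_Y$ relative to the common (sparse-)Lyndon root and then invoke \cref{fact:bsort}; the paper's quadruple $(\lambda,\ov,c_X,c_Y)$ and your $5$-tuple $(p,\ov,\mathrm{id}(\rho),r_X,r_Y)$ carry the same information, and your handling of the small-period versus large-period regimes mirrors the paper's final paragraph.

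One concrete glitch: your appeal to \cref{rem:nonemptyfs} to argue that ``$F$ and $F'$ are never regular layers'' is a mis-citation---that remark only says that layers of $\Pyr(F,F')$ are contained in $F\cup F'$, which says nothing about whether the base runs $F,F'$ themselves are regular layers of \emph{other} pyramids. The paper also simply asserts that \cref{prp:group} supplies the (sparse-)Lyndon data for $F,F'$ without spelling out this point; in either write-up the easy patch is to observe that there are only $\Oh(n/\log_\sigma n)$ such base runs and feed them through the same sorting machinery that underlies \cref{prp:group}.
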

\begin{proof}
Recall that the type of a pyramid $\Pyr(F,F')$ is $\type(F,F')=(ov,X,Y)$ where $ov=|F \cap F'|$, $X$ is a length-$p$ suffix of run $F$, $Y$ is a length-$p$ prefix of run $F'$ and $p=\per(F)=\per(F')$.
By \cref{prp:group}, if $p \le 2\floor{\frac{1}{18} \log_\sigma n}$, we know the Lyndon roots of $F,F'$, and otherwise, we know their sparse-Lyndon roots.

The Lyndon roots of $F$ and $F'$ are the same. We have $X=\rot_{c_X}(\lambda)$ and $Y=\rot_{c_Y}(\lambda)$ for the common Lyndon root $\lambda$ and some values $c_X,c_Y$ that can be computed from the Lyndon representations of $F,F'$ in $\Oh(1)$ time. Instead of grouping pyramids by triads $(ov,X,Y)$, it suffices to group them by quadruples $(\lambda,ov,c_X,c_Y)$. Grouping by Lyndon roots $\lambda$ is performed in \cref{prp:group}. The remaining elements of quadruples are integers in $[0 \dd n)$, so we can bucket sort the quadruples by them in $\Oh(n/\log_\sigma n)$ time in a stable way (so that we do not break the grouping by Lyndon roots) using \cref{fact:bsort}.

The same argument, with sparse-Lyndon roots instead of Lyndon roots, applies for grouping pyramids by types in case the period of runs $F$, $F'$ is greater than $2\floor{\frac{1}{18} \log_\sigma n}$.
\end{proof}

\begin{lemma}\label{lem:pspecial}
    The number of special squares in $T$ can be computed in $\Oh(n/\log_\sigma n)$ time.
\end{lemma}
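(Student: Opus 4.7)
The plan combines the structural results of \cref{sec:anatomy} with the grouping of \cref{lem:group_pyr}. By \cref{lem:specialsq_pyramids} every special square of $T$ is generated by some pyramid, and by \cref{lem:key} pyramids of different types generate disjoint sets of special squares; hence the number of special squares equals $\sum_t \bigl|\bigcup_{\Pyr \text{ of type } t} \fragsquares(\Pyr)\bigr|$, with the outer sum disjoint. First, partition the $\Oh(n/\log_\sigma n)$ pyramids by type in $\Oh(n/\log_\sigma n)$ time using \cref{lem:group_pyr}.

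Fix a type $t = (\ov,X,Y)$ with common Lyndon root $\lambda$ of length $p$, and let $r_{\max}(t) := \max_{\Pyr \text{ of type } t}|\RF(\Pyr)|$. I claim that across all pyramids of type $t$, the regular layer of period $kp + \delta$, for $k \in [4 \dd r_{\max}(t)+3]$, always realises the same length-$(\ov + 2(kp+\delta))$ string, because the type fixes the offsets $(\ell - a)\bmod p$ and $(\ell' - a') \bmod p$ relative to the $\lambda$-pattern. Combined with \cref{lem:spare_key}(b), this yields that the regular layers of pyramids of type $t$ contribute exactly $r_{\max}(t)\cdot(\ov+1)$ distinct special squares. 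By \cref{lem:spare_key}(a), any pyramid of type $t$ with $|\RF|<r_{\max}(t)$ has its max-layer at a period $\le (r_{\max}(t)+3)p+\delta$, already covered by these regular layers, so it contributes nothing new. The only potentially new contribution is from the max-layers of pyramids tied at $|\RF| = r_{\max}(t)$, all of the common period $q^* = (r_{\max}(t)+4)p+\delta$.

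For this remaining contribution, I collect the tied max-layer fragments $R_{\max}\cap(F\cup F')$ across all types: at most one per pyramid, totalling $\Oh(n/\log_\sigma n)$. From the canonical representation (\cref{fct:ESA19}) and the sparse-Lyndon data already produced in \cref{prp:group}, I compute the (sparse-)Lyndon representation of each such fragment in $\Oh(1)$, group these fragments by (sparse-)Lyndon root exactly as in \cref{prp:group}, and invoke \cref{lem:Extracting} to count the distinct primitively rooted squares of period $q^*$ that they generate; by \cref{lem:specialsq_pyramids} those squares are automatically special, because the first half of each such square lies inside $F$ and therefore inherits period $p\le q^*/4$. Adding this global total to $\sum_t r_{\max}(t)\cdot(\ov_t+1)$, produced by one sweep over the type groups, completes the count in $\Oh(n/\log_\sigma n)$ total time.

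The main obstacle I expect is establishing the string-equality claim in the second paragraph rigorously; it is crucial because it lets us collapse each type's regular-layer contribution into the single product $r_{\max}(t)\cdot(\ov+1)$, rather than iterating over the $\Omega(n)$-many regular layers that can occur across all pyramids.
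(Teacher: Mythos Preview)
Your plan is essentially the paper's own proof: group pyramids by type (\cref{lem:group_pyr}), discard sub-maximal pyramids via \cref{lem:spare_key}\eqref{ita}, count regular-layer squares by the product formula of \cref{lem:spare_key}\eqref{itb}, and finish the max-layers of the surviving pyramids with \cref{lem:Extracting}. Two remarks:

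\begin{itemize}
\item The ``string-equality claim'' you flag as the main obstacle is exactly what is established in the proof of \cref{lem:spare_key}: applying \cref{lem:contained} to two pyramids of the same type yields equal layer strings for equal values of $k$. So this is not an obstacle; you can simply cite \cref{lem:spare_key}.
\item Your assertion that all tied max-layers of a given type share a common period $q^{*}=(r_{\max}(t)+4)p+\delta$ is not correct in general. When the max-layer is the run supplied by \cref{lem:maxlayer} (one that contains the endpoint $a$ or $b'$ of $F\cup F'$), its period depends on $|F|$ or $|F'|$, which the type does not determine; two tied pyramids of the same type can therefore have max-layers of different periods. Fortunately your algorithm does not actually use this claim: you feed the fragments $R_{\max}\cap(F\cup F')$ into \cref{lem:Extracting}, which groups them by (sparse-)Lyndon root and counts distinct p-squares regardless of their periods. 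Since every such p-square has period strictly larger than $(r_{\max}(t)+3)p+\delta$ (the max-layer's period exceeds all regular-layer periods of a maximal pyramid), these squares are disjoint from the regular-layer count, and \cref{lem:key} guarantees disjointness across types. So drop the $q^{*}$ claim and the argument goes through.
\end{itemize}

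The paper phrases the max-layer step slightly differently (``partitioning each max-layer into periodic fragments generating only special squares as in the proof of \cref{lem:pplain}''), but your choice of the single fragment $R_{\max}\cap(F\cup F')$ per surviving pyramid is equivalent and arguably more direct: by the definition of $\fragsquares(\Pyr(F,F'))$ this fragment generates precisely the max-layer's contribution of special squares, and there are $\Oh(n/\log_\sigma n)$ such fragments in total.
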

\begin{proof}
We group the pyramids by their types using \cref{lem:group_pyr}. By \cref{lem:key}, special squares generated by layers from each group can be considered separately. By \cref{lem:spare_key}\eqref{ita}, we can remove all pyramids of the same type that are not of maximal size (in terms of the number of layers).
Among the remaining pyramids, all special squares generated by regular layers are counted using \cref{lem:spare_key}\eqref{itb}.
Special squares generated by max-layers are counted by partitioning each max-layer into periodic fragments generating only special squares as in the proof of \cref{lem:pplain} and then counting all squares generated by such periodic fragments using \cref{lem:Extracting}.
\end{proof}

A combination of \cref{lem:np,lem:pplain,lem:pspecial} implies our main result that we restate for convenience.

\mainthm*

\subsection{Further Results}
Any $k$ squares, for positive $k$ up to the number of distinct squares in $T$, can be listed as follows. \cref{lem:Extracting} allows to report subsequent squares. For special squares, we list squares generated by subsequent runs in the tallest pyramid of each type.
\begin{theorem}
    Given a string $T$ of length $n$ over alphabet $[0\dd \sigma)$ in packed form and integer $1 < k\le |\squares(T)|$, we can output $k$ distinct squares in $T$ in 
    $\Oh(n/\log_\sigma n+k)$ time.
\end{theorem}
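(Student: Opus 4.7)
The plan is to retrace the counting algorithm but, at each stage where we computed a cardinality, produce an explicit list of squares. We keep a running counter and stop as soon as $k$ distinct squares have been emitted; the preprocessing cost (grouping of runs, computation of canonical representations of pyramids, sorting by types) is $\Oh(n/\log_\sigma n)$ by \cref{prp:runs,prp:group,lem:group_pyr}, and the extra per-square cost must be $\Oh(1)$.

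First, I would report np-squares and plain p-squares. \cref{lem:Extracting} is explicitly stated to allow listing any $k'$ distinct squares in $\Oh(k'+r+\sqrt{n})$ time from the input group of $r$ periodic fragments (with their Lyndon or sparse-Lyndon representations). So I run it on exactly the same set of periodic fragments used in \cref{lem:np} (for np-squares) and \cref{lem:pplain} (for plain p-squares, after chopping each max-layer into $\Oh(1)$ periodic fragments as in the proof of \cref{lem:pplain}). This reports every np-square and every plain p-square exactly once, one after another, in total time $\Oh(n/\log_\sigma n+k_1)$ where $k_1$ is the number of plain squares emitted before we stop.

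If we still need more squares, we turn to special squares. Using \cref{lem:group_pyr} we group the pyramids by their types; by \cref{lem:key} different types produce disjoint square-sets, and by \cref{lem:spare_key}\eqref{ita} within each type only the tallest pyramid (in terms of number of regular layers) needs to be processed, as all its squares already cover the special squares generated by the shorter pyramids of the same type. For each retained pyramid $\Pyr(F,F')$ I iterate through its regular layers in the canonical representation of \cref{lem:contained}; by \cref{lem:spare_key}\eqref{itb} each such layer $R=T[a'_k\dd b_k]$ contributes exactly $|F\cap F'|+1$ new special squares, which are the fragments $T[a'_k+x\dd a'_k+2(kp+\delta)+x)$ for $x \in [0\dd |F\cap F'|]$ (see \cref{collider}). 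These can be enumerated in $\Oh(1)$ per reported square just by incrementing $x$ and, when $x$ exceeds the range, moving to the next layer. Max-layers are handled as in \cref{lem:pspecial}, by reducing them to $\Oh(1)$ periodic fragments each and feeding them to \cref{lem:Extracting}.

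The only subtlety, and the one I would treat as the main obstacle, is avoiding double reporting across the three phases and across distinct pyramids of the same type. Double reporting inside one invocation of \cref{lem:Extracting} is already precluded by the lemma; between the plain and the special phase it is precluded by \cref{lem:specialsq_pyramids} (special squares come only from layer-runs, whereas the plain phase processes non-layer runs and the chopped max-layer fragments, which generate no special square of the relevant length); and within the special phase it is precluded by \cref{lem:key,lem:spare_key}\eqref{ita} together with the fact that inside one pyramid distinct regular layers have distinct periods and hence disjoint square-sets (the halves of the emitted squares sit strictly between $a$ and $b'$ by \cref{lem:util,lem:contained}). We stop as soon as the global counter hits $k$, giving overall time $\Oh(n/\log_\sigma n+k)$.
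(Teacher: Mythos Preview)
Your proposal is correct and follows essentially the same approach as the paper: invoke the reporting clause of \cref{lem:Extracting} for plain squares, and for special squares enumerate the regular layers of the tallest pyramid of each type (handling max-layers again via \cref{lem:Extracting}). The paper's own justification is only a two-sentence sketch; you have supplied the details it omits, in particular the explicit per-layer enumeration formula and the argument that no square is reported twice across the three phases.
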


Our algorithms generalize to powers with any given exponent $t > 2$ in the same time complexity. In this case, we do not need to consider regular layers, as they generate no powers of exponent greater than 2. Thus an analogue of \cref{lem:Extracting} suffices.
\begin{theorem}
Given a string $T$ of length $n$ over alphabet $[0\dd \sigma)$ in packed form and integer $t>2$, we can compute in $\Oh(n/\log_\sigma n)$ time
    the number of distinct $t$-th powers in~$T$.
\end{theorem}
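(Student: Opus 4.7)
The plan is to follow the same pipeline as for squares, skipping the pyramid-based special-square machinery of \cref{sec:anatomy,subsec:counting2} entirely. The structural reason this simplification works is that, for $t \geq 3$, no regular layer of any pyramid can generate a $t$-th power, so regular layers may be discarded outright.

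Concretely, I would first invoke \cref{prp:runs} to obtain the $\Oh(n/\log_\sigma n)$-sized representation of $\RUNS(T)$ as a disjoint union of single runs (including max-layers of pyramids), regular layers of pyramids, and clusters of short-period runs; I would then apply \cref{prp:group} to group all runs other than the regular layers by their Lyndon roots (respectively sparse-Lyndon roots, for long-period runs) in $\Oh(n/\log_\sigma n)$ time, obtaining the corresponding (sparse-)Lyndon representations.

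The combinatorial core---and the only genuinely new step---is the claim that no regular layer generates a $t$-th power for $t \geq 3$. By \cref{lem:contained}, each $R \in \RF(F,F')$ satisfies $|R| = 2\per(R) + |F \cap F'|$. Subperiodicity forces $\per(F) = \subper(R) \leq \per(R)/4$, while $|F \cap F'| \leq \per(F)-1$ for neighboring runs of equal period; combining these two bounds yields $|R| < 3\per(R)$. Hence $R$ cannot contain any fragment of the form $V^t$ with $|V| = \per(R)$ and $t \geq 3$. Since each $t$-th power is generated by a unique run---the maximal fragment of matching period that contains it---every distinct $t$-th power in $T$ is generated by one of the $\Oh(n/\log_\sigma n)$ periodic fragments produced above.

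It then remains to prove the obvious analogue of \cref{lem:Extracting} for $t$-th powers: given $r$ periodic fragments grouped by their (sparse-)Lyndon roots together with their representations, the distinct $t$-th powers they generate can be counted in $\Oh(r + \sqrt{n})$ time. Within a group of Lyndon root $\lambda$ of length $p$, the generated $t$-th powers are precisely the distinct cyclic rotations of $\lambda^{t\alpha}$, for $\alpha \geq 1$, that occur as a fragment of some run in the group; setting $\beta := \max_R \floor{|R|/(tp)}$, all $p$ rotations of $\lambda^{t\alpha}$ occur for $1 \leq \alpha < \beta$, and the rotations occurring at $\alpha = \beta$ are computed via the union-of-intervals argument from the proof of \cref{lem:Extracting}, globally bucket-sorted via \cref{fact:bsort}. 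Moreover, once regular layers are ruled out we also avoid the max-layer splitting trick of \cref{lem:pplain}, since no two distinct runs generate the same $t$-th power. Applying the analogue of \cref{lem:Extracting} with $r = \Oh(n/\log_\sigma n)$ yields the claimed $\Oh(n/\log_\sigma n)$-time bound.
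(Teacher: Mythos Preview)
Your proposal is correct and takes essentially the same approach as the paper, which gives only a two-sentence sketch (``we do not need to consider regular layers, as they generate no powers of exponent greater than $2$; thus an analogue of \cref{lem:Extracting} suffices''). Your explicit bound $|R| = 2\per(R) + |F\cap F'| \le 2\per(R) + p - 1 < 3\per(R)$ for regular layers is exactly the computation the paper leaves implicit.

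One caveat: your stated reason for skipping the max-layer splitting of \cref{lem:pplain}---``no two distinct runs generate the same $t$-th power''---is false (two disjoint occurrences of $(\mathtt{ab})^3$ in $T$ are distinct runs generating the same cube). The correct reason is that for $t\ge 3$ there is no plain/special dichotomy to maintain: max-layers can be fed wholesale into the analogue of \cref{lem:Extracting}, and duplicates across runs in the same (sparse-)Lyndon-root group are already handled there by the union-of-intervals step. This does not affect the correctness of your algorithm, only the wording of the justification.
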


\section{Proof of \cref{fct:ESA19}}\label{app:fctESA}

We use the \pillar model~\cite{DBLP:conf/focs/Charalampopoulos20} that relies on the following primitive operations. The argument strings of the primitives are fragments of strings in a given collection~$\mathcal{X}$.
\smallskip
\begin{itemize}
\item $\mathsf{Extract}(S, \ell, r)$: Retrieve string $S[\ell\dd r)$.
\smallskip
\item $\LCP(X, Y),\, \LCPR(X, Y)$: Compute the length of the longest common prefix/suffix of $X$ and $Y$.
\smallskip
\item $\IPM(X, Y)$: Assuming that $|Y| < 2|X|$, compute the starting positions of all exact occurrences of $X$ in $Y$, expressed as an arithmetic progression with difference $\per(X)$.
\smallskip
\item $\mathsf{Access}(S, i)$: Retrieve the character $S[i]$;
\smallskip
\item $\mathsf{Length}(S)$: Compute the length $|S|$ of the string $S$.
\end{itemize}
\smallskip
The running time of algorithms in this model can be expressed in terms of the number of primitive \pillar operations (and additional operations not performed on the strings themselves).
The \pillar model admits an optimal implementation in the packed setting.

\begin{fact}[\cite{DBLP:conf/stoc/KempaK19,DBLP:journals/siamcomp/KociumakaRRW24}]\label{fact:pillar_packed}
A string of length $n$ given in a packed representation of size $\cO(n/\log_\sigma n)$ can be preprocessed in $\cO(n/\log_\sigma n)$ time so that any \pillar operation on its substrings can be performed in $\cO(1)$ time.
\end{fact}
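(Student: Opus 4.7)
\textbf{Proof Plan for \cref{fact:pillar_packed}.} The statement is a consolidation of results from \cite{DBLP:conf/stoc/KempaK19,DBLP:journals/siamcomp/KociumakaRRW24}, so the plan is to explain how each \pillar primitive is supported in constant time after $\Oh(n/\log_\sigma n)$ preprocessing, relying on known data structures in the packed setting. The trivial operations come first: $\mathsf{Length}$ is answered by storing the endpoints of each fragment explicitly (each fragment is a triple (string id, left, right)); $\mathsf{Access}(S,i)$ reduces to extracting a single character from the machine word that encodes positions around $i$, which costs $\Oh(1)$ by bit-masking; and $\mathsf{Extract}(S,\ell,r)$ simply returns the representation $(S,\ell,r)$ and does not require any actual copying of characters.

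The main work is in supporting $\LCP$/$\LCPR$ queries. The plan is to build, for parameter $\tau=\Theta(\log_\sigma n)$, a $\tau$-synchronizing set $\Sync$ of size $\Oh(n/\tau)$ using \cref{thm:synch_packed}, and then compute the sparse suffix array and sparse LCP array restricted to $\Sync$ in $\Oh(n/\tau)$ time via \cref{lem:former_claim}. Equipping the sparse LCP array with a constant-time range-minimum-query structure~\cite{DBLP:conf/latin/BenderF00} yields $\Oh(1)$-time longest common extension between any two synchronizing positions. To extend an LCE query between arbitrary positions $i,j$ of $T$ to $\Oh(1)$ time, I would distinguish two cases based on the density property of~$\Sync$: either the fragment starting at $i$ contains a synchronizing position within $\tau$ characters—so the query reduces to a $\tau$-bounded comparison (handled by a single machine-word XOR and a lookup into a precomputed table of logarithms of low-order bits) followed by one sparse LCE query—or the density property forces the relevant window to have period at most $\tfrac13\tau$, in which case the LCE is governed by a $\tau$-run and can be computed from the precomputed data for $\tau$-runs described in \cref{lem:tauruns}. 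Handling $\LCPR$ is symmetric after applying the same construction to the reversed string.

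For the $\IPM$ primitive, the plan is to invoke the packed-setting internal pattern matching machinery of~\cite{DBLP:journals/siamcomp/KociumakaRRW24}: under the assumption $|Y|<2|X|$, the set of occurrences of $X$ in $Y$ forms an arithmetic progression with difference $\per(X)$ (by the periodicity lemma), and one can locate the first occurrence and its period in $\Oh(1)$ time after $\Oh(n/\log_\sigma n)$-time preprocessing. The main non-trivial ingredient here is that constant-time $\LCP$ queries, together with the synchronizing set constructed above, allow one to align any occurrence of $X$ in $Y$ to a unique distinguished position in $X$, turning the search for occurrences into $\Oh(1)$ sparse-suffix-array lookups.

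The hard part is not conceptual but rather combining the pieces correctly: the $\tau$-synchronizing set exists only for $\tau$-sparse regions and special handling is required around highly periodic windows, so the key obstacle is showing that every LCE (resp.\ IPM) query reduces in $\Oh(1)$ steps to either a packed word comparison, a sparse LCE query, or a $\tau$-run-based query. All three cases take $\Oh(1)$ time, and the preprocessing—synchronizing set construction, sparse suffix array, RMQ structure, $\tau$-runs, and a word-size lookup table for bit operations—fits in $\Oh(n/\log_\sigma n)$ total time, yielding the claimed bound.
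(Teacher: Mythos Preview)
The paper does not prove this statement at all: it is stated as a black-box fact with citations to \cite{DBLP:conf/stoc/KempaK19} and \cite{DBLP:journals/siamcomp/KociumakaRRW24}, and no proof is given. Your proposal goes well beyond what the paper does by actually sketching how the cited constructions support each \pillar primitive in the packed setting; the sketch is accurate in its broad strokes (trivial primitives directly, $\LCP/\LCPR$ via synchronizing sets plus sparse suffix array and RMQ with a case split on density, and $\IPM$ via the machinery of \cite{DBLP:journals/siamcomp/KociumakaRRW24}). Since the paper simply cites the result, there is nothing to compare at the level of proof strategy; your write-up is essentially an exposition of the cited references rather than an alternative proof.
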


A 2-period query for a string $X$, checks if $X$ is periodic and, if so, returns its smallest period. Such a query can be used to compute the period of a given run if it is unknown.

\begin{fact}[{\cite[Corollary 1.10]{DBLP:journals/siamcomp/KociumakaRRW24}}]\label{fct:2per}
A 2-period query can be answered in $\Oh(1)$ time in the \pillar model.
\end{fact}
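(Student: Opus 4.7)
The plan is to reduce a 2-period query on a string $X$ of length $n$ (given in the \pillar model) to a constant number of $\IPM$, $\LCP$, $\mathsf{Extract}$, and $\mathsf{Length}$ operations. Set $m=\floor{n/2}$ and extract the prefix $P=X[0\dd m+1)$ and the suffix $Q=X[n-m-1\dd n)$, both of length $m+1>n/2$. Since $n\le 2m+1<2(m+1)$, the precondition of $\IPM$ with pattern $P$ (respectively $Q$) against text $X$ is met, so each invocation returns, in $\cO(1)$ time, an arithmetic progression describing all occurrences of that pattern in $X$; call these progressions $A_P$ and $A_Q$.

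The key equivalence to establish is: $\per(X)\le m$ if and only if there exists $p\in[1\dd m]$ with $p\in A_P$ and $n-m-1-p\in A_Q$, and the smallest such $p$ equals $\per(X)$. The forward direction is direct: if $p=\per(X)\le m$, then $P$ is shifted by $p$ (this is a legal occurrence since $p+|P|\le n$), and, symmetrically, $Q$ by $-p$. For the converse, the two occurrence conditions jointly imply $X[i]=X[i+p]$ on $[0\dd m+1)\cup[n-m-1-p\dd n-p)$; the inequality $m+1\ge n-m-1-p$, which holds whenever $n\le 2m+1$ and $p\ge 0$, ensures the union equals $[0\dd n-p)$, certifying that $p$ is a period of $X$. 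Finding the smallest valid $p$ reduces, after an affine change of variable, to intersecting two arithmetic progressions both passing through $0$---the smallest positive common element is simply the least common multiple of the two differences, computable in $\cO(1)$ arithmetic operations.

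I expect the main obstacle to be the two-sided analysis: a single $\IPM$ query using only $P$ cannot certify periodicity of $X$, since $P$ occurring at position $p$ witnesses only that $p$ is a period of $X[0\dd p+|P|)$, not of all of $X$. The symmetric query with $Q$ propagates periodicity from the opposite end, and the chosen pattern length $m+1$ is precisely what guarantees the two certified intervals meet so that their union covers $[0\dd n-p)$. A handful of boundary cases (for instance, even $n$ with candidate $p=n/2$, or $P$ having no nontrivial occurrence) are dispatched with a single direct $\LCP$ verification and can be absorbed into the $\cO(1)$ budget. By \cref{fact:pillar_packed}, each \pillar primitive costs $\cO(1)$ time after preprocessing, and the conclusion follows.
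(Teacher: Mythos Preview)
The paper does not prove this fact; it is imported as \cite[Corollary~1.10]{DBLP:journals/siamcomp/KociumakaRRW24}, so there is no in-paper argument to compare against. Your reduction to a constant number of \pillar primitives is the right shape, and the two-sided equivalence you state (a positive integer $p\le \min(m,n{-}m{-}1)$ is a period of $X$ iff $p\in A_P$ and $n{-}m{-}1{-}p\in A_Q$) is correct.

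The gap is the sentence ``the smallest positive common element is simply the least common multiple of the two differences, computable in $\cO(1)$ arithmetic operations.'' Computing $\mathrm{lcm}(d_1,d_2)$ requires $\gcd(d_1,d_2)$, and $\gcd$ of two $\Theta(\log n)$-bit integers is \emph{not} a unit-cost operation in the word-RAM model underlying \pillar; the Euclidean algorithm costs $\Theta(\log n)$ steps, which would defeat the $\cO(1)$ claim. Fortunately the $\mathrm{lcm}$ is never needed. A Fine--Wilf argument shows that whenever $\per(X)=p$ lies in the non-boundary range, the smallest positive element $d_1$ of $A_P$ already equals $p$: we have $d_1\le p$ since $p\in A_P$, and $X[0\dd d_1+m+1)$ has periods $d_1$ and $p$ with $p\le m+1$, so $\gcd(d_1,p)$ is a period of $X[0\dd p)$ and hence of all of $X$, forcing $d_1=p$. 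By symmetry $d_2=p$ as well, so one may simply test $d_1=d_2$; even simpler, drop the second $\IPM$ query entirely, take $d_1$ (the difference of $A_P$, when $|A_P|\ge 2$), and verify it directly via $\LCP(X,X[d_1\dd n))\ge n-d_1$. The residual cases ($|A_P|=1$, and the even-$n$ boundary $p=n/2$) are handled by one more $\LCP$ query, as you noted. With this correction the argument genuinely runs in $\cO(1)$ \pillar operations.
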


For a pair of integers $i\ge 0,\,p>0$, we denote by $\gamma(i,p)$
the run with period $p$ containing fragment $T[i\dd i+p)$, if it exists.
\begin{fact}\label{fact:findrun}
    Given a fragment $F= T[i\dd i+p)$ of string $T$, in $\cO(1)$ time in the \pillar model, one can decide if
    $\gamma(i,p)$ exists, and if so, return its value.
\end{fact}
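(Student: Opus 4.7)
The plan is to compute $\gamma(i,p)$ in two stages: first extend $T[i\dd i+p)$ to the inclusion-maximal fragment of $T$ that admits $p$ as a period, and then verify that $p$ is actually the \emph{smallest} period of this extended fragment. Each stage will cost $\cO(1)$ \pillar operations, with the verification step relying on \cref{fct:2per}.

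For the first stage, I would compute the right endpoint of the maximal extension via a single $\LCP$ query. Setting $\ell := \LCP(T[i\dd n),\, T[i+p\dd n))$, the equality $T[j]=T[j+p]$ holds precisely for $j \in [i\dd i+\ell-1]$, so the maximal right extension ends at $b := i+p-1+\ell$. Symmetrically, a single $\LCPR$ query on the prefixes ending at positions $i$ and $i+p$ returns some length $\ell'$, and the maximal left extension starts at $a := i-\ell'$ (with the obvious clamping at the boundaries of $T$). By construction, $T[a\dd b]$ is the unique inclusion-maximal fragment of $T$ that contains $T[i\dd i+p)$ and admits $p$ as a period, so any candidate for $\gamma(i,p)$ must coincide with $T[a\dd b]$.

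In the second stage I would rule out two failure modes. If $b-a+1 < 2p$, then $T[a\dd b]$ is too short to be a run of period $p$ (by the definition of a run), so $\gamma(i,p)$ does not exist. Otherwise, by the periodicity lemma the smallest period $p^\star$ of $T[a\dd b]$ divides $p$; if $p^\star < p$, then $T[a\dd b]$ is actually a run of period $p^\star \ne p$, and $\gamma(i,p)$ still does not exist. To distinguish the two sub-cases I would invoke the 2-period query of \cref{fct:2per} on $T[a\dd b]$, which returns $p^\star$ in $\cO(1)$ time; we return $T[a\dd b]$ if $p^\star = p$ and report failure otherwise.

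The main obstacle is the second stage: without the 2-period primitive we could only certify that $p$ is \emph{some} period of $T[a\dd b]$ but not that it is the minimum one, and the distinction matters because a proper divisor of $p$ might already witness a shorter period. With \cref{fct:2per} in hand the test becomes a single $\cO(1)$ query, and since each of the two $\LCP$/$\LCPR$ extensions and the length comparison is also $\cO(1)$ in the \pillar model, the total running time is $\cO(1)$ \pillar operations, as claimed.
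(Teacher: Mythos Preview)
Your proposal is correct and follows essentially the same approach as the paper's proof: extend $T[i\dd i+p)$ maximally to the left and right via one $\LCPR$ and one $\LCP$ query, check that the resulting fragment has length at least $2p$, and then invoke the 2-period query of \cref{fct:2per} to verify that $p$ is indeed the smallest period. The only cosmetic difference is that you spell out explicitly why a smaller period $p^\star$ rules out $\gamma(i,p)$, whereas the paper leaves this implicit.
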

\begin{proof} Let $F' = T[i-\ell \dd i+p+r)$, where
$r = \LCP(T[i\dd |T|), T[i+p\dd |T|))$ and
$\ell = \LCPR(T[0 \dd i+p], T[0\dd i])$.
It then suffices to check if $2p \le |F'|$ and if $F'$ does not have a smaller period.
The latter can be done in $\cO(1)$ time in the \pillar model with \cref{fct:2per}.
\end{proof}

Recall that a canonical representation of $\Pyr(F,F')$ consists of the (endpoints of) runs $F$ and~$F'$, its max-layer, and arithmetic progressions with difference $\per(F)$ specifying the starting positions, ending positions, and periods of its regular layers.
We next show that such a canonical representation can be efficiently computed in the \pillar model for any non-empty pyramid.
This next algorithmic lemma complements the combinatorial \cref{lem:contained,lem:maxlayer}.

\begin{lemma}\label{lem:pyr_can}
Given neighboring runs $F$ and $F'$ of a string $T$ such that $\Pyr(F,F')$ is non-empty, we can compute a canonical representation of $\Pyr(F,F')$ in $\cO(1)$ time in the \pillar model.
\end{lemma}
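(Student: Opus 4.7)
Let $F = T[a \dd b]$ and $F' = T[a' \dd b']$ with $a < a'$, and let $p = \per(F) = \per(F')$, obtainable in $\cO(1)$ time via \cref{fct:2per} if not already stored. By \cref{lem:contained,lem:maxlayer}, a canonical representation of $\Pyr(F,F')$ is fully determined by $a, b, a', b', p$, the shift $\delta$ of \cref{lem:contained}, and at most one exceptional max-layer that passes through position $a$ or position $b'$. The plan is thus to (i) compute $\delta$ via a single $\IPM$ call, (ii) derive the arithmetic progressions describing the regular layers directly from the closed-form description in \cref{lem:contained}, and (iii) locate the max-layer via at most two applications of \cref{fact:findrun}.

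The crux is computing $\delta$ in constant time without knowing the Lyndon positions of $F$ and $F'$. Set $X := T[a \dd a+p)$ and $Y := T[a' \dd a'+p)$: since $F$ and $F'$ share a Lyndon root, $X$ and $Y$ are rotations of the same primitive length-$p$ string, and there is a unique $c \in [0 \dd p)$ with $Y = \rot_c(X)$. Expanding the identity $T[a'_k \dd b] = T[a' \dd b_k]$ from the proof of \cref{lem:contained} using the periodicities of $F$ and $F'$ shows that $c = (a'-a-\delta) \bmod p$, whence $\delta = (a'-a-c) \bmod p$. Because $|F| \geq 2p$, the fragment $T[a \dd a+2p-1)$ is a length-$(2p-1)$ prefix of $XX$; a single call $\IPM(Y, T[a \dd a+2p-1))$ (whose text length $2p-1$ is strictly less than $2|Y| = 2p$) returns the unique occurrence of the primitive pattern $Y$ in this prefix, yielding $c$ and therefore $\delta$ in $\cO(1)$ time.

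With $\delta$ known, the regular layers are indexed by $K = [4 \dd k_{\max}]$ where $k_{\max} = \min\bigl(\lfloor (a'-a-\delta-1)/p \rfloor,\, \lfloor (b'-b-\delta-1)/p \rfloor\bigr)$, and their start positions $a'_k = a'-kp-\delta$, end positions $b_k = b+kp+\delta$, and periods $kp+\delta$ for $k \in K$ form three arithmetic progressions with common difference $\pm p$, matching the canonical representation format. For the max-layer, \cref{lem:maxlayer} leaves at most one candidate through $a$ and at most one through $b'$; their periods are forced to be the unique $y_1 \in [a'-a \dd b-a]$ with $y_1 \equiv \delta \pmod p$ and the unique $y_2 \in [b'-b \dd b'-a']$ with $y_2 \equiv \delta \pmod p$, respectively (when such values exist). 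Each candidate will be verified in $\cO(1)$ \pillar time by applying \cref{fact:findrun} to an appropriate length-$y_i$ anchor, namely $T[a \dd a+y_1)$ for the candidate through $a$ and $T[b'-y_2+1 \dd b'+1)$ for the candidate through $b'$. The max-layer is then whichever of the verified exceptional candidates has the larger period, and if neither exists, the regular layer with $k = k_{\max}$.

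The main difficulty this plan overcomes is that the Lyndon positions of long-period runs cannot be obtained in constant \pillar time, so $\delta$ is not accessible via its definition $(\ell'-\ell) \bmod p$. The $\IPM$-on-doubled-prefix trick sidesteps this by reducing $\delta$ to the unique rotation alignment between $X$ and $Y$; every remaining step is a routine constant-time arithmetic computation or a single \pillar primitive call.
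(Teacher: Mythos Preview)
Your proposal is correct and follows essentially the same approach as the paper: compute $\delta$ via a single $\IPM$ call, read off the arithmetic progressions from \cref{lem:contained}, and recover the exceptional max-layer (if any) via \cref{fact:findrun} on the two candidate periods determined by \cref{lem:maxlayer}. The only notable difference is the specific $\IPM$ call: the paper searches the length-$p$ suffix of $F$ in the length-$2p$ prefix of $F'$, whereas you search the length-$p$ prefix of $F'$ in the length-$(2p-1)$ prefix of $F$; both extract the same rotation offset, and your choice even respects the strict $|Y|<2|X|$ constraint of $\IPM$ more carefully.
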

\begin{proof}
    We use the notation from the aforementioned combinatorial lemmas. That is, $F=T[a\dd b]$, $F'=T[a'\dd b']$, and $p= \per(F)$, assuming without loss of generality that $a < a'$.
    The Lyndon positions of $F$ and $F'$ are $\ell$ and $\ell'$, respectively, and $\delta = (\ell' - \ell) \bmod p$.
    For each $k\in \mathbb{Z}$, we denote $a'_k := a' - k\cdot p - \delta$ and $b_k = b+k\cdot p +\delta$.

    The integer $\delta$ can be computed in $\cO(1)$ time from the output of query $\IPM(T[b-p+1 \dd b], T[a' \dd a'+2p))$.
    Using \cref{lem:contained}, given $\delta$, we compute the endpoints and periods of all runs $T[x \dd y] \in \Pyr(F,F')$ such that $x,y \in (a \dd b')$ as arithmetic progressions with difference $p$ in $\cO(1)$ time.

    For the only non-regular layer, as in the proof of \cref{lem:maxlayer}, it suffices to consider a run containing position $a$ and a run containing position $b'$. For each of them, the period can be determined uniquely in $\Oh(1)$ time based on $\delta$.
    Using \cref{fact:findrun}, we can thus compute the respective run, if it exists, in $\cO(1)$ time in the \pillar model.
    By the lemma, this way at most one run will be computed.

    If \cref{lem:maxlayer} returns a run, we designate it as the max-layer, and return the arithmetic progressions representing $\R$ as $\RF(F,F')$.
    Otherwise, we return the longest run in~$\R$ as a max-layer, and return the arithmetic progressions trimmed by one element as $\RF(F,F')$.
\end{proof}

We filter the explicit runs to avoid double-reporting using the following lemma.

\begin{lemma}\label{claim:filter}
We can check whether a given run $R$ is a regular layer of any pyramid in $\cO(1)$ time in the \pillar model and if so, return the canonical representation of its pyramid.
\end{lemma}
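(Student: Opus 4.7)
The plan is to reverse-engineer, from $R$ alone, the unique candidate pair $(F,F')$ of runs that could witness $R$ as a regular layer, verify all defining properties of such a pyramid, and finally confirm that $R$ actually sits in the regular part $\RF(F,F')$. Every step is to be performed in $\cO(1)$ \pillar time.

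Write $R=T[\alpha\dd\beta]$. First, using a 2-period query (\cref{fct:2per}), decide whether $R$ is periodic and, if so, compute $q:=\per(R)$; if $R$ is aperiodic, it is not a layer, so we reject. Since a regular layer is subperiodic with $\subper(R)=p=\per(F)=\per(F')$ and $\per(R)\ge 4p$, the prefix $T[\alpha\dd\alpha+q)$ of $R$ of length $q$ must itself have period $p$, and since $q\ge 4p\ge 2p$, this smallest period $p$ can be recovered by another 2-period query on $T[\alpha\dd\alpha+q)$. If this fails, or if the resulting $p$ does not satisfy $q\ge 4p$, reject. The same period $p$ should show up as the period of the length-$q$ suffix $T[\beta-q+1\dd\beta]$, which is a symmetric sanity check.

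Next, I would recover $F$ and $F'$. The prefix $T[\alpha\dd\alpha+p)$ and the suffix $T[\beta-p+1\dd\beta]$ are fragments with period $p$, so applying \cref{fact:findrun} to each of them produces in $\cO(1)$ \pillar time the two unique runs $F$ and $F'$ of period $p$ containing them, if they exist. Reject if either call fails. Now I verify the remaining structural requirements of \cref{pyramid}: (i) $F\ne F'$; (ii) $F$ and $F'$ are neighboring, i.e.\ their one-character expansions intersect; (iii) they share a Lyndon root, which is equivalent to $F[0\dd p)$ occurring in $F'[0\dd p)\cdot F'[0\dd p)$ and can be tested with a single $\IPM$ query. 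Reject on any failure.

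At this point $(F,F')$ is the only possible host pyramid for $R$. Invoke \cref{lem:pyr_can} on $(F,F')$ to obtain, in $\cO(1)$ \pillar time, a canonical representation of $\Pyr(F,F')$: its endpoints, its max-layer, and the three arithmetic progressions (start positions, end positions, periods of regular layers) with common difference $p$. Finally, I check whether $R$ matches one of the entries of these progressions: it suffices to test that $q-\delta$ is a positive multiple of $p$ for the parameter $\delta$ produced by \cref{lem:pyr_can}, and that the corresponding index $k=(q-\delta)/p$ satisfies $k\ge 4$, $\alpha=a'_k$, $\beta=b_k$, and $R$ is not the max-layer. If all of these hold, return the canonical representation; otherwise reject.

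The main delicate point, and the reason this works in constant time, is that the whole filtering chain relies only on $\cO(1)$ many \pillar primitives and invocations of \cref{fct:2per}, \cref{fact:findrun}, and \cref{lem:pyr_can}; the combinatorial content needed for correctness — that $p$ is forced by the prefix of $R$, that $F$ and $F'$ are uniquely determined by $R$, and that membership of $R$ in $\RF(F,F')$ is captured by the arithmetic progressions of \cref{lem:contained} — is exactly what the earlier structural lemmas of \cref{sec:anatomy} provide.
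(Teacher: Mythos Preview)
Your proof is correct and follows essentially the same approach as the paper: recover $\per(R)$ and the candidate subperiod $p$ via 2-period queries on $R$ and its length-$\per(R)$ prefix/suffix, extend to the unique candidate runs $F,F'$ with \cref{fact:findrun}, invoke \cref{lem:pyr_can}, and verify membership in $\RF(F,F')$. The paper's version is slightly terser (it omits the explicit Lyndon-root check, which is in fact redundant once the length-$\per(R)$ prefix and suffix of $R$ are equal period-$p$ strings lying in $F$ and $F'$), and the structural lemmas you rely on are those of Section~3 rather than \cref{sec:anatomy}, but these are cosmetic differences.
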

\begin{proof}
We first compute the period of $R= T[a\dd b]$ in $\cO(1)$ time using \cref{fct:2per}.
Then, we similarly check if $R$'s $\per(R)$-length prefix and $\per(R)$-length suffix are also periodic.
If they are not, $R$ cannot be a regular layer of any pyramid (cf.\ \cref{lem:util}), so let us assume that they are.
Let $F$ (respectively, $F'$) be the run containing the $\per(R)$-length prefix of~$R$ (respectively, the $\per(R)$-length suffix of $R$) with period $\per(F)$ (respectively, $\per(F')$) returned by \cref{fact:findrun}.
If $F$ and $F'$ are different, $\per(F) = \per(F')$, and $\per(R) \geq 4\cdot\per(F)$, then~$R$ is a layer-run.
Moreover, we can compute $\Pyr(F,F')$ in $\cO(1)$ time in the \pillar model using \cref{lem:pyr_can} and check whether $R$ is a regular layer.
\end{proof}

We denote $I_k\,=\, \{i \in [0 \dd n)\,:\, i \equiv 0 \pmod{k}\}$.
\begin{observation}\label{obs:anchors}
    If $R=T[a\dd b] \in \RUNS(T)$ and $\per(R) \geq 2k$, then there exists a position $i \in I_k$ such that $i,i+k-1\in [a \dd a+\per(R))$.
\end{observation}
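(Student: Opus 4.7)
\textbf{Proof plan for \cref{obs:anchors}.}
The statement is a simple pigeonhole fact: any interval of length at least $2k$ contains a full block $[jk \dd (j+1)k)$ for some integer $j$. I would make this explicit by choosing $i$ to be the smallest multiple of $k$ with $i \ge a$, that is, $i = k \cdot \lceil a/k \rceil$. Since consecutive multiples of $k$ differ by exactly $k$, this choice guarantees $a \le i \le a + k - 1$.

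Next, I would verify the two membership conditions. Clearly $i \in I_k$: we have $i \ge a \ge 0$, and $i \le a + k - 1 < a + \per(R) \le b+1 \le n$, so $i \in [0 \dd n)$ and $i \equiv 0 \pmod k$. For the endpoints, $i \ge a$ is immediate from the choice of $i$, and the bound $\per(R) \ge 2k$ gives
\[i + k - 1 \,\le\, (a + k - 1) + k - 1 \,=\, a + 2k - 2 \,\le\, a + \per(R) - 2 \,<\, a + \per(R),\]
so both $i$ and $i + k - 1$ lie in $[a \dd a + \per(R))$, as required.

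There is essentially no obstacle here; the only thing worth double-checking is the off-by-one in the final inequality, which the bound $\per(R) \ge 2k$ handles with a position to spare. The observation is stated in a form ready to be used as an ``anchor'' argument: every long-period run contains a $k$-aligned length-$k$ window inside its first period, which can be exploited to locate runs via $\IPM$-style queries on the $\Oh(n/k)$ candidate anchors from $I_k$.
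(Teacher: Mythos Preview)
Your argument is correct; the paper states \cref{obs:anchors} without proof (treating it as self-evident), and your pigeonhole choice $i = k\cdot\lceil a/k\rceil$ is exactly the natural way to justify it.
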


We next prove a version of \cref{fct:ESA19} in the \pillar model, closely following \cite{DBLP:conf/esa/AmirBCK19,Panos}.

\begin{lemma}[see {\cite{DBLP:conf/esa/AmirBCK19,Panos}}]\label{fct:pillarESA19}
Consider a string $T$ of length $n$ and an integer $q \in [1 \dd n]$.
In time $\Oh(n/q)$ in the \pillar model, we can compute a multiset $\X$ of runs such that none of them is a regular layer of any pyramid of $T$
and a multiset $\Y$ of pyramids represented by their canonical representations, such that $|\X|,|\Y|=\Oh(n/q)$, and, for $\Z\,:=\,\bigcup_{(F,F')\in \Y}\, \RF(F,F')$, we have that
\begin{itemize}
\item $\X\cup \Z$
is a superset of all runs in $T$ of period at least $q$, and 
\item $\X \cap \Z=\emptyset$.
\end{itemize}
\end{lemma}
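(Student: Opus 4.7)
The plan is to follow the anchor-based runs enumeration of Amir et al.~\cite{DBLP:conf/esa/AmirBCK19, Panos} in the \pillar model, augmenting it with the pyramid detection machinery developed in \cref{sec:anatomy}. Set $q' := \lceil q/2 \rceil$, so the anchor set $I_{q'}$ has size $\Oh(n/q)$. By \cref{obs:anchors}, every run $R = T[a \dd b]$ of period at least $q = 2q'$ admits at least one anchor $i \in I_{q'}$ with $i, i+q'-1 \in [a \dd a + \per(R))$; we will associate $R$ with the smallest such anchor, called its \emph{canonical anchor}.

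For each anchor $i \in I_{q'}$, we use a constant number of $\IPM$, $\LCP$, and $\LCPR$ queries, combined with \cref{fact:findrun}, to produce a constant-sized list $C_i$ of candidate runs of period at least $q$ whose canonical anchor is $i$. The implementation closely mirrors~\cite{DBLP:conf/esa/AmirBCK19, Panos}: we inspect the periodic structure around position $i$ by running $\IPM$ with short patterns in appropriately chosen windows, extend the detected periodicities using $\LCP$/$\LCPR$, and filter out runs whose canonical anchor is not $i$ (a simple range check). Then, for each $R \in C_i$, we apply \cref{claim:filter} to decide in $\Oh(1)$ \pillar operations whether $R$ is a regular layer of some pyramid $\Pyr(F,F')$. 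If so, we add the canonical representation of $\Pyr(F,F')$ (which \cref{claim:filter} already produces via \cref{lem:pyr_can}) to the multiset $\Y$; otherwise we add $R$ to the multiset $\X$.

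Each anchor contributes $\Oh(1)$ entries using $\Oh(1)$ \pillar operations, so $|\X|, |\Y| = \Oh(n/q)$ and the total running time is $\Oh(n/q)$. By construction, $\X$ will contain no regular layer of any pyramid, so $\X \cap \Z = \emptyset$. For coverage: any run $R$ of period at least $q$ is discovered at its canonical anchor and ends up either in $\X$ (if it is not a regular layer of any pyramid) or in $\Z$ (as a regular layer of the pyramid added to $\Y$); hence $\X \cup \Z$ is a superset of all runs in $T$ of period at least $q$, as required.

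The hard part is the candidate-enumeration step: for each anchor $i$, producing an $\Oh(1)$-sized set of runs of period at least $q$ canonically assigned to $i$, while using only $\Oh(1)$ \pillar operations. A single position can lie in the first period of many runs of different periods, so without the canonical-anchor filter one would risk enumerating superlinearly many candidates. Implementing this filter via $\IPM$ in carefully chosen windows of length $\Theta(q')$, following~\cite{DBLP:conf/esa/AmirBCK19, Panos}, is the technical core of the argument; we expect to restate this construction in the full version and combine it in a black-box manner with \cref{claim:filter,lem:pyr_can} to yield the claimed bounds.
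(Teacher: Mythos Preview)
Your high-level plan (anchors, then \cref{claim:filter} and \cref{lem:pyr_can} to route each candidate either to $\X$ or to $\Y$) matches the paper's, but there is a genuine gap in the candidate-enumeration step. You propose a \emph{single} anchor grid $I_{q'}$ with $q'=\lceil q/2\rceil$ and ``carefully chosen windows of length $\Theta(q')$''. An $\IPM$ query with pattern $P_i=T[i\dd i+q')$ in a window of length $\Theta(q')$ can only reveal candidate periods up to $\Theta(q')$: the occurrence $j$ that witnesses period $j-i$ must lie inside the window. Consequently, runs whose period is much larger than $q$ (anything in $[\omega(q)\dd n]$) are simply not detected at their canonical anchor, so your coverage claim fails. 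Widening the window to $\Theta(n)$ would fix detection but then a single anchor can yield $\Omega(n/q)$ occurrences of $P_i$, destroying the $\Oh(1)$-per-anchor budget. Note also that the cited works~\cite{DBLP:conf/esa/AmirBCK19,Panos} do \emph{not} use a single scale; they already perform a multi-scale pass.

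The paper's proof resolves exactly this issue via a geometric decomposition: for each $x\in[\lfloor\log q\rfloor\dd\lfloor\log n\rfloor]$ it handles runs with period in $[2^x\dd 2^{x+1})$ separately, using anchors spaced $k=2^{x-1}$ apart and windows of length $5k$ (so window size and period are comparable). In this regime a constant number of $\IPM$ queries suffices, the number of arithmetic progressions per anchor is $\Oh(1)$, and non-trivial progressions are processed not element-by-element but via the two runs $F,F'$ of period $\per(P_i)$ and a single call to \cref{lem:pyr_can}. The total cost is $\sum_x \Oh(n/2^x)=\Oh(n/q)$. Your ``canonical-anchor filter'' does not substitute for this: even if one could argue that only $\Oh(1)$ runs have $i$ as their canonical anchor (which is itself unclear, since runs at geometrically growing periods can all start in $(i-q'\dd i]$), you have no way to \emph{find} them in $\Oh(1)$ \pillar operations without the per-scale windowing. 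The fix is to adopt the multi-scale loop and, at each scale, treat singleton and non-trivial progressions as in the paper (including the Case~I/II split and Steps~1--2 for non-trivial progressions).
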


\begin{proof}
We compute the sought runs using a logarithmic number of calls to the algorithm encapsulated by the following claim.

Denote by $\mathbf{R}_x$ the set of all runs in $T$ with period in $[2^x \dd 2^{x+1})$.

\newcommand{\OCC}{\mathit{Occ}}
\begin{claim}\label{at last}
Consider a string $T$ of length $n$ and integer $x \ge 0$.
There exists an algorithm that runs in time $\cO(n/2^x)$ in the \pillar model and returns a representation of all runs in $\mathbf{R}_x$.
The representation consists of $\cO(n/2^x)$ explicitly represented runs (possibly with duplicates), such that none of them is a regular layer of any pyramid, and a collection of $\cO(n/2^x)$ pyramids (possibly with duplicates), each via its canonical representation, whose layers include (among other runs) the remaining runs in $\mathbf{R}_x$.
\end{claim}
\begin{claimproof}
Let $k:=2^{x-1}$ ($k=1$ for $x=0$) and for each $i\in I_k$  denote
\[P_{i} := T[i \dd i+k), \ W_{i} := T[i \dd \min\{i+5k,n\}).\]
Note that $|W_{i}|/|P_i| \leq 5$.
Let $\OCC_i$ be the set of occurrences of $P_{i}$ in $T$ starting at a position in $[i \dd \min\{i+4k+1,n\})$.

 \cref{obs:anchors} implies 
\[\mathbf{R}_x\,=\, \{\,\gamma(i,j-i)\,:\, i\in I_k,\ j\in \OCC_i, \, 2^x \le j-i < 2^{x+1} \ \text{and}\  \gamma(i,j-i)\ \text{exists}\,\}.\]
The set $\OCC_i$ can be as large as $|W_i|$, so we cannot treat each 
$j \in \OCC_i$ separately.
Fortunately $\OCC_i$ consists of only $\cO(1)$ arithmetic progressions, each with difference $\per(P_i)$; $\per(P_i)$ can be computed in $\cO(1)$ time in the \pillar model if $P_i$ is periodic using \cref{fct:2per}, while $\OCC_i$ can be computed using~$\cO(1)$ $\IPM$ queries.
If $P_i$ is periodic, we repeatedly merge any of the computed arithmetic progressions returned by the $\IPM$ queries that contain elements that are $\per(P_i)$ positions apart, that is, maintaining the property that all of our arithmetic progressions have difference $\per(P_i)$.
In the complementary case, that is, when $P_i$ is not periodic, we replace the collection of arithmetic progressions by single elements.

For each $i \in I_k$, we initialize sets $A_i:=\emptyset$ and $B_i:=\emptyset$ that will store explicit runs and pyramids, respectively.

\proofsubparagraph{Processing singleton progressions.}
Consider a trivial progression $\{j\}$ in $\OCC_i$. If $2^x \le j-i < 2^{x+1}$, we use \cref{fact:findrun} to check whether a run $R$ with period $\per(R)= j-i$ exists (then $R=\gamma(i,j-i)$).
We check whether $R$ is a regular layer of any pyramid using \cref{claim:filter}; if not, we insert $R$ to $A_i$, and otherwise we insert to $B_i$ the pyramid of which $R$ is a regular layer, computed using \cref{claim:filter}.

\proofsubparagraph{Processing a single nontrivial arithmetic progression in $\OCC_i$.}
In this case $P_i$ is periodic.
Consider a non-trivial arithmetic progression in $\OCC_i$  
starting with $j$ and ending with~$j'$. We use \cref{fact:findrun} to compute the run $F'= T[a'\dd b']:= \gamma(j,\per(P_i)),$
noting that $\per(F') = \per(P_i)$ due to the periodicity lemma~\cite{fine1965uniqueness}.
We distinguish between two cases: \smallskip
\begin{itemize}
    \item {\bf Case I:} $i\in [a'\dd b']$.\
    In this case, we have $\per(F') = \per(P_i) \leq |P_i| = k$.
    Run $F'$ does not need to be returned as its period does not satisfy the constraints.
    Note that all other elements $j''$ of the considered arithmetic progression are in $[a'\dd b']$ and are contained within~$F'$ and, therefore, they generate run $F'$ and not a run with period $j''-i$.
    Hence, we can discard all the elements of the considered arithmetic progression.
\vskip 0.1cm
    \item {\bf Case II:} $i\not\in [a'\dd b']$.\
    In this case, we first consider $P_i$ as a sample and check whether a run $R$ with period $\per(R) =j-i$ exists.
    If such run exists we report it after checking the constraints as above. Secondly, instead of checking every element of the arithmetic progression  between $j$ and $j'$ (including~$j'$), which would be expensive, we consider run~$F'$ instead.
    We use \cref{fact:findrun} to compute  the run $F = T[a \dd b]:=\gamma(i,\per(P_i))$; see \cref{fig:apr}.
    Note that this run is well-defined as $P_i$ is periodic.
    The computation is then split into two steps.
    \begin{center}
    \begin{figure}[htpb]
        \centering
        \begin{tikzpicture}
\tikzset{
    ncbar angle/.initial=90,
    ncbar/.style={
        to path=(\tikztostart)
        -- ($(\tikztostart)!#1!\pgfkeysvalueof{/tikz/ncbar angle}:(\tikztotarget)$)
        -- ($(\tikztotarget)!($(\tikztostart)!#1!\pgfkeysvalueof{/tikz/ncbar angle}:(\tikztotarget)$)!\pgfkeysvalueof{/tikz/ncbar angle}:(\tikztostart)$)
        -- (\tikztotarget)
    },
    ncbar/.default=0.5cm,
}
\tikzset{square right brace/.style={ncbar=-0.5cm}}

\node at (7,1) {\small arithmetic progression};
\draw (0,0) rectangle (10, 0.3);

\node (a) at (0.2,0) [below] {$a$};
\node (aa) at (5.2,0) [below] {$a'$};
\node (i) at (1.0,0.3) [above] {$i$};
\draw[fill=green!30!white] (1.0,0) rectangle +(2, 0.3) node[midway,below,yshift=-0.2cm] {$P_i$};

\node (j) at (5.8,0.3) [above] {$j$};
\node (jj) at (8.8,0) [below] {$j'$};
\node (b) at (6.5,0) [below] {$b$};
\node (bb) at (9.5,0) [below] {$b'$};
\foreach \x in {5.8, 6.8, 7.8, 8.8} {
  \filldraw[fill=red!30!white] (\x,0.15) circle (0.1cm);
}

\coordinate (p) at ($(a)+(0,-0.3cm)$);
\draw (p) to [ncbar=-0.3cm] (b|-p);
\node at ($(a)!0.5!(b)+(0,-0.8cm)$) {$F$};

\coordinate (pp) at ($(aa)+(0,-0.3cm)$);
\draw (pp) to [ncbar=-0.6cm] (bb|-pp);
\node at ($(aa)!0.5!(bb)+(0,-1.1cm)$) {$F'$};

\end{tikzpicture}
        \caption{Processing a 4-elements progression of consecutive occurrences of $P_i$. Here, the shown runs $F=\gamma(i,\per(P_i))=T[a\dd b]$ and $F'=\gamma(j,\per(P_i))=T[a'\dd b']$ are neighbors (we thus have a non-empty pyramid $\Pyr(F,F')$).
        (If $F,F'$ were not neighbors, the considered progression would not imply a pyramid.)
        We have $2^x\le  j-i,\, j'-i < 2^{x+1}$.
        The shown progression implies 
        that the two fragments $T[i\dd i+k)$ and $T[j\dd j'+k)$ are periodic (their common period is $\per(P_i)$).}\label{fig:apr}
    \end{figure}
    \end{center}
    \vspace*{-1.cm}
    \begin{description}
        \item{\bf Step 1.} 
        Here, we compute all runs $R$ that are not subperiodic and are equal to $\gamma(z,y-z)$ for a pair of positions $z\in[a\dd b]$ and  $y\in [a'\dd b']$.
        The non-subperiodicity of such runs means that they are not contained in $F \cup F'$.
        The maximality of runs $F$ and $F'$ then implies that for any such run $R$, either the starting positions of $F$ and $F'$ or the ending positions of $F$ and $F'$ must be contained in the run and be $\per(R)$ positions apart.
        We can then compute such runs, if they exist, using two calls to \cref{fact:findrun}, namely, to compute $\gamma(a,a'-a)$ and $\gamma(b,b'-b)$.
        If these runs exist then for each of these runs that is in $\mathbf{R}_x$, we check whether it is a regular layer of any pyramid using \cref{claim:filter} in $\cO(1)$ time in the \pillar model; if it is, we insert the corresponding pyramid to $B_i$, otherwise we insert the run to $A_i$.
        
        \smallskip
        \item{\bf Step 2.}
        Additionally, if $F$ and $F'$ are neighboring runs, we invoke \cref{lem:pyr_can} and insert the resulting pyramid to $B_i$ if it contains some run in $\mathbf{R}_x$ and the pyramid's max-layer to $A_i$
        if it  is in $\mathbf{R}_x$.
        We can check whether the periodicity constraints are satisfied in $\cO(1)$ time using the canonical representation of the pyramid.
    \end{description}
\end{itemize}

\smallskip\noindent In the algorithm, we might insert the same run or pyramid multiple times. However, the total number of insertions is as required.
The final result is a pair of two multisets: the union of all computed $A_i$,
and the union of all $B_i$, over $i\in I_k$.

\proofsubparagraph{Time complexity.}
    For each $i\in I_k$, using a constant number of $\IPM$ queries, we can compute a representation of $\OCC_i$ as a constant number of arithmetic progressions, each with difference $\per(P_i)$.
    We then spend $\cO(1)$ time in the \pillar model to process $P_i$ and the set $\OCC_i$ of its occurrences in $W_i$.
    Over all $i\in I_k$, the performed computations require total time $\cO(n/k) = \cO(n/2^x)$ in the \pillar model.
\end{claimproof}

We call \cref{at last} for each $x \in [\lfloor \log q \rfloor \dd \lfloor \log n \rfloor]$.
This takes total time at most
\[\sum_{x=\lfloor \log q \rfloor}^{\infty} \cO(n/2^x) = \cO(n/q).\]
Finally, we filter the computed sets, in constant time per element, by removing all runs whose period is in $[2^{\lfloor \log q \rfloor} \dd q)$ and all pyramids all of whose regular layers have period at most $q-1$ in constant time per element; we use \cref{fct:2per} for explicitly represented runs.
This concludes the proof of this lemma.
\end{proof}

A combination of the above lemma (with $q=\ceil{c \log_\sigma n}$) with the optimal implementation of the \pillar model in the packed setting (see \cref{fact:pillar_packed}), and \cref{fact:bsort} to remove duplicates, yields \cref{fct:ESA19}, restated here for convenience.

\fctESA*
   
\bibliographystyle{plainurl}
\bibliography{references}

\begin{thebibliography}{10}

\bibitem{DBLP:conf/esa/AmirBCK19}
Amihood Amir, Itai Boneh, Panagiotis Charalampopoulos, and Eitan Kondratovsky.
\newblock Repetition detection in a dynamic string.
\newblock In {\em 27th Annual European Symposium on Algorithms, {ESA} 2019},
  volume 144 of {\em LIPIcs}, pages 5:1--5:18. Schloss Dagstuhl -
  Leibniz-Zentrum f{\"{u}}r Informatik, 2019.
\newblock \href {https://doi.org/10.4230/LIPICS.ESA.2019.5}
  {\path{doi:10.4230/LIPICS.ESA.2019.5}}.

\bibitem{DBLP:journals/algorithmica/Apostolico92}
Alberto Apostolico.
\newblock Optimal parallel detection of squares in strings.
\newblock {\em Algorithmica}, 8(4):285--319, 1992.
\newblock \href {https://doi.org/10.1007/BF01758848}
  {\path{doi:10.1007/BF01758848}}.

\bibitem{DBLP:journals/siamcomp/ApostolicoB96}
Alberto Apostolico and Dany Breslauer.
\newblock An optimal {O}(log log n)-time parallel algorithm for detecting all
  squares in a string.
\newblock {\em {SIAM} Journal on Computing}, 25(6):1318--1331, 1996.
\newblock \href {https://doi.org/10.1137/S0097539793260404}
  {\path{doi:10.1137/S0097539793260404}}.

\bibitem{DBLP:conf/esa/BannaiE23}
Hideo Bannai and Jonas Ellert.
\newblock Lyndon arrays in sublinear time.
\newblock In {\em 31st Annual European Symposium on Algorithms, {ESA} 2023},
  volume 274 of {\em LIPIcs}, pages 14:1--14:16. Schloss Dagstuhl -
  Leibniz-Zentrum f{\"{u}}r Informatik, 2023.
\newblock \href {https://doi.org/10.4230/LIPICS.ESA.2023.14}
  {\path{doi:10.4230/LIPICS.ESA.2023.14}}.

\bibitem{DBLP:journals/siamcomp/BannaiIINTT17}
Hideo Bannai, Tomohiro I, Shunsuke Inenaga, Yuto Nakashima, Masayuki Takeda,
  and Kazuya Tsuruta.
\newblock The ``runs'' theorem.
\newblock {\em {SIAM} Journal on Computing}, 46(5):1501--1514, 2017.
\newblock \href {https://doi.org/10.1137/15M1011032}
  {\path{doi:10.1137/15M1011032}}.

\bibitem{DBLP:conf/cpm/BannaiIK17}
Hideo Bannai, Shunsuke Inenaga, and Dominik K{\"{o}}ppl.
\newblock Computing all distinct squares in linear time for integer alphabets.
\newblock In {\em 28th Annual Symposium on Combinatorial Pattern Matching,
  {CPM} 2017}, volume~78 of {\em LIPIcs}, pages 22:1--22:18. Schloss Dagstuhl -
  Leibniz-Zentrum f{\"{u}}r Informatik, 2017.
\newblock \href {https://doi.org/10.4230/LIPICS.CPM.2017.22}
  {\path{doi:10.4230/LIPICS.CPM.2017.22}}.

\bibitem{DBLP:journals/tcs/Ben-KikiBBGGW14}
Oren Ben{-}Kiki, Philip Bille, Dany Breslauer, Leszek Gasieniec, Roberto
  Grossi, and Oren Weimann.
\newblock Towards optimal packed string matching.
\newblock {\em Theoretical Computer Science}, 525:111--129, 2014.
\newblock \href {https://doi.org/10.1016/J.TCS.2013.06.013}
  {\path{doi:10.1016/J.TCS.2013.06.013}}.

\bibitem{DBLP:conf/stoc/Ben-Or83}
Michael Ben{-}Or.
\newblock Lower bounds for algebraic computation trees (preliminary report).
\newblock In {\em Proceedings of the 15th Annual {ACM} Symposium on Theory of
  Computing, {STOC 1983}}, pages 80--86. {ACM}, 1983.
\newblock \href {https://doi.org/10.1145/800061.808735}
  {\path{doi:10.1145/800061.808735}}.

\bibitem{DBLP:conf/latin/BenderF00}
Michael~A. Bender and Martin Farach{-}Colton.
\newblock The {LCA} problem revisited.
\newblock In {\em {LATIN} 2000: Theoretical Informatics, 4th Latin American
  Symposium}, volume 1776 of {\em Lecture Notes in Computer Science}, pages
  88--94. Springer, 2000.
\newblock \href {https://doi.org/10.1007/10719839\_9}
  {\path{doi:10.1007/10719839\_9}}.

\bibitem{DBLP:conf/cwords/BrlekL23}
Srečko Brlek and Shuo Li.
\newblock On the number of distinct squares in finite sequences: Some old and
  new results.
\newblock In {\em Combinatorics on Words - 14th International Conference,
  {WORDS} 2023}, volume 13899 of {\em Lecture Notes in Computer Science}, pages
  35--44. Springer, 2023.
\newblock \href {https://doi.org/10.1007/978-3-031-33180-0\_3}
  {\path{doi:10.1007/978-3-031-33180-0\_3}}.

\bibitem{brlek2022numbersquaresfiniteword}
Srečko Brlek and Shuo Li.
\newblock On the number of squares in a finite word.
\newblock {\em Combinatorial Theory}, 5(1), 2025.
\newblock \href {https://doi.org/10.5070/C65165014}
  {\path{doi:10.5070/C65165014}}.

\bibitem{Panos}
Panagiotis Charalampopoulos.
\newblock {\em Data structures for strings in the internal and dynamic
  settings}.
\newblock PhD thesis, King's College London, UK, 2020.
\newblock URL:
  \url{https://kclpure.kcl.ac.uk/ws/portalfiles/portal/155221105/2021_Charalampopoulos_Panagiotis_1559341_ethesis.pdf}.

\bibitem{DBLP:conf/esa/Charalampopoulos21}
Panagiotis Charalampopoulos, Tomasz Kociumaka, Solon~P. Pissis, and Jakub
  Radoszewski.
\newblock Faster algorithms for longest common substring.
\newblock In {\em 29th Annual European Symposium on Algorithms, {ESA} 2021},
  volume 204 of {\em LIPIcs}, pages 30:1--30:17. Schloss Dagstuhl -
  Leibniz-Zentrum f{\"{u}}r Informatik, 2021.
\newblock \href {https://doi.org/10.4230/LIPICS.ESA.2021.30}
  {\path{doi:10.4230/LIPICS.ESA.2021.30}}.

\bibitem{DBLP:conf/spire/Charalampopoulos20}
Panagiotis Charalampopoulos, Tomasz Kociumaka, Jakub Radoszewski, Wojciech
  Rytter, Tomasz Waleń, and Wiktor Zuba.
\newblock Efficient enumeration of distinct factors using package
  representations.
\newblock In {\em String Processing and Information Retrieval - 27th
  International Symposium, {SPIRE} 2020}, volume 12303 of {\em Lecture Notes in
  Computer Science}, pages 247--261. Springer, 2020.
\newblock \href {https://doi.org/10.1007/978-3-030-59212-7\_18}
  {\path{doi:10.1007/978-3-030-59212-7\_18}}.

\bibitem{DBLP:conf/focs/Charalampopoulos20}
Panagiotis Charalampopoulos, Tomasz Kociumaka, and Philip Wellnitz.
\newblock Faster approximate pattern matching: {A} unified approach.
\newblock In {\em 61st {IEEE} Annual Symposium on Foundations of Computer
  Science, {FOCS} 2020}, pages 978--989. {IEEE}, 2020.
\newblock \href {https://doi.org/10.1109/FOCS46700.2020.00095}
  {\path{doi:10.1109/FOCS46700.2020.00095}}.

\bibitem{DBLP:conf/cpm/Charalampopoulos22}
Panagiotis Charalampopoulos, Solon~P. Pissis, and Jakub Radoszewski.
\newblock Longest palindromic substring in sublinear time.
\newblock In {\em 33rd Annual Symposium on Combinatorial Pattern Matching,
  {CPM} 2022}, volume 223 of {\em LIPIcs}, pages 20:1--20:9. Schloss Dagstuhl -
  Leibniz-Zentrum f{\"{u}}r Informatik, 2022.
\newblock \href {https://doi.org/10.4230/LIPICS.CPM.2022.20}
  {\path{doi:10.4230/LIPICS.CPM.2022.20}}.

\bibitem{DBLP:journals/ipl/Crochemore81}
Maxime Crochemore.
\newblock An optimal algorithm for computing the repetitions in a word.
\newblock {\em Information Processing Letters}, 12(5):244--250, 1981.
\newblock \href {https://doi.org/10.1016/0020-0190(81)90024-7}
  {\path{doi:10.1016/0020-0190(81)90024-7}}.

\bibitem{DBLP:journals/tcs/Crochemore86}
Maxime Crochemore.
\newblock Transducers and repetitions.
\newblock {\em Theoretical Computer Science}, 45(1):63--86, 1986.
\newblock \href {https://doi.org/10.1016/0304-3975(86)90041-1}
  {\path{doi:10.1016/0304-3975(86)90041-1}}.

\bibitem{DBLP:journals/tcs/CrochemoreIKRRW14}
Maxime Crochemore, Costas~S. Iliopoulos, Marcin Kubica, Jakub Radoszewski,
  Wojciech Rytter, and Tomasz Waleń.
\newblock Extracting powers and periods in a word from its runs structure.
\newblock {\em Theoretical Computer Science}, 521:29--41, 2014.
\newblock \href {https://doi.org/10.1016/J.TCS.2013.11.018}
  {\path{doi:10.1016/J.TCS.2013.11.018}}.

\bibitem{DBLP:journals/ipl/CrochemoreR91}
Maxime Crochemore and Wojciech Rytter.
\newblock Efficient parallel algorithms to test square-freeness and factorize
  strings.
\newblock {\em Information Processing Letters}, 38(2):57--60, 1991.
\newblock \href {https://doi.org/10.1016/0020-0190(91)90223-5}
  {\path{doi:10.1016/0020-0190(91)90223-5}}.

\bibitem{DBLP:journals/algorithmica/CrochemoreR95}
Maxime Crochemore and Wojciech Rytter.
\newblock Squares, cubes, and time-space efficient string searching.
\newblock {\em Algorithmica}, 13(5):405--425, 1995.
\newblock \href {https://doi.org/10.1007/BF01190846}
  {\path{doi:10.1007/BF01190846}}.

\bibitem{DBLP:conf/spire/Ellert23}
Jonas Ellert.
\newblock Sublinear time {Lempel-Ziv} {(LZ77)} factorization.
\newblock In {\em String Processing and Information Retrieval - 30th
  International Symposium, {SPIRE} 2023}, volume 14240 of {\em Lecture Notes in
  Computer Science}, pages 171--187. Springer, 2023.
\newblock \href {https://doi.org/10.1007/978-3-031-43980-3\_14}
  {\path{doi:10.1007/978-3-031-43980-3\_14}}.

\bibitem{DBLP:conf/icalp/Ellert021}
Jonas Ellert and Johannes Fischer.
\newblock Linear time runs over general ordered alphabets.
\newblock In {\em 48th International Colloquium on Automata, Languages, and
  Programming, {ICALP} 2021}, volume 198 of {\em LIPIcs}, pages 63:1--63:16.
  Schloss Dagstuhl - Leibniz-Zentrum f{\"{u}}r Informatik, 2021.
\newblock \href {https://doi.org/10.4230/LIPICS.ICALP.2021.63}
  {\path{doi:10.4230/LIPICS.ICALP.2021.63}}.

\bibitem{DBLP:conf/soda/EllertGG23}
Jonas Ellert, Paweł Gawrychowski, and Garance Gourdel.
\newblock Optimal square detection over general alphabets.
\newblock In {\em Proceedings of the 2023 {ACM-SIAM} Symposium on Discrete
  Algorithms, {SODA} 2023}, pages 5220--5242. {SIAM}, 2023.
\newblock \href {https://doi.org/10.1137/1.9781611977554.CH189}
  {\path{doi:10.1137/1.9781611977554.CH189}}.

\bibitem{fine1965uniqueness}
Nathan~J. Fine and Herbert~S. Wilf.
\newblock Uniqueness theorems for periodic functions.
\newblock {\em Proceedings of the American Mathematical Society},
  16(1):109--114, 1965.
\newblock \href {https://doi.org/10.2307/2034009} {\path{doi:10.2307/2034009}}.

\bibitem{DBLP:journals/jct/FraenkelS98}
Aviezri~S. Fraenkel and Jamie Simpson.
\newblock How many squares can a string contain?
\newblock {\em Journal of Combinatorial Theory {A}}, 82(1):112--120, 1998.
\newblock \href {https://doi.org/10.1006/JCTA.1997.2843}
  {\path{doi:10.1006/JCTA.1997.2843}}.

\bibitem{DBLP:books/cu/Gusfield1997}
Dan Gusfield.
\newblock {\em Algorithms on Strings, Trees, and Sequences - Computer Science
  and Computational Biology}.
\newblock Cambridge University Press, 1997.
\newblock \href {https://doi.org/10.1017/CBO9780511574931}
  {\path{doi:10.1017/CBO9780511574931}}.

\bibitem{DBLP:journals/jcss/GusfieldS04}
Dan Gusfield and Jens Stoye.
\newblock Linear time algorithms for finding and representing all the tandem
  repeats in a string.
\newblock {\em Journal of Computer and System Sciences}, 69(4):525--546, 2004.
\newblock \href {https://doi.org/10.1016/J.JCSS.2004.03.004}
  {\path{doi:10.1016/J.JCSS.2004.03.004}}.

\bibitem{DBLP:journals/tcs/HongC08}
Jin{-}Ju Hong and Gen{-}Huey Chen.
\newblock Efficient on-line repetition detection.
\newblock {\em Theoretical Computer Science}, 407(1-3):554--563, 2008.
\newblock \href {https://doi.org/10.1016/J.TCS.2008.08.038}
  {\path{doi:10.1016/J.TCS.2008.08.038}}.

\bibitem{DBLP:conf/soda/Kempa19}
Dominik Kempa.
\newblock Optimal construction of compressed indexes for highly repetitive
  texts.
\newblock In {\em Proceedings of the Thirtieth Annual {ACM-SIAM} Symposium on
  Discrete Algorithms, {SODA} 2019}, pages 1344--1357. {SIAM}, 2019.
\newblock \href {https://doi.org/10.1137/1.9781611975482.82}
  {\path{doi:10.1137/1.9781611975482.82}}.

\bibitem{DBLP:conf/stoc/KempaK19}
Dominik Kempa and Tomasz Kociumaka.
\newblock String synchronizing sets: sublinear-time {BWT} construction and
  optimal {LCE} data structure.
\newblock In {\em Proceedings of the 51st Annual {ACM} {SIGACT} Symposium on
  Theory of Computing, {STOC} 2019}, pages 756--767. {ACM}, 2019.
\newblock \href {https://doi.org/10.1145/3313276.3316368}
  {\path{doi:10.1145/3313276.3316368}}.

\bibitem{DBLP:conf/focs/KempaK24}
Dominik Kempa and Tomasz Kociumaka.
\newblock {Lempel-Ziv} {(LZ77)} factorization in sublinear time.
\newblock In {\em 65th {IEEE} Annual Symposium on Foundations of Computer
  Science, {FOCS} 2024}, pages 2045--2055. {IEEE}, 2024.
\newblock \href {https://doi.org/10.1109/FOCS61266.2024.00122}
  {\path{doi:10.1109/FOCS61266.2024.00122}}.

\bibitem{kk2025packed}
Dominik Kempa and Tomasz Kociumaka.
\newblock On the hardness hierarchy for the {O}(n{\(\surd\)}log n) complexity
  in the word {RAM}.
\newblock In {\em Proceedings of the 57th Annual {ACM} Symposium on Theory of
  Computing, {STOC} 2025}, pages 290--300. {ACM}, 2025.
\newblock \href {https://doi.org/10.1145/3717823.3718291}
  {\path{doi:10.1145/3717823.3718291}}.

\bibitem{DBLP:conf/cpm/Kociumaka16}
Tomasz Kociumaka.
\newblock Minimal suffix and rotation of a substring in optimal time.
\newblock In {\em 27th Annual Symposium on Combinatorial Pattern Matching,
  {CPM} 2016}, volume~54 of {\em LIPIcs}, pages 28:1--28:12. Schloss Dagstuhl -
  Leibniz-Zentrum f{\"{u}}r Informatik, 2016.
\newblock \href {https://doi.org/10.4230/LIPICS.CPM.2016.28}
  {\path{doi:10.4230/LIPICS.CPM.2016.28}}.

\bibitem{DBLP:journals/siamcomp/KociumakaRRW24}
Tomasz Kociumaka, Jakub Radoszewski, Wojciech Rytter, and Tomasz Waleń.
\newblock Internal pattern matching queries in a text and applications.
\newblock {\em {SIAM} Journal on Computing}, 53(5):1524--1577, 2024.
\newblock \href {https://doi.org/10.1137/23M1567618}
  {\path{doi:10.1137/23M1567618}}.

\bibitem{DBLP:conf/focs/KolpakovK99}
Roman~M. Kolpakov and Gregory Kucherov.
\newblock Finding maximal repetitions in a word in linear time.
\newblock In {\em 40th Annual Symposium on Foundations of Computer Science,
  {FOCS} 1999}, pages 596--604. {IEEE} Computer Society, 1999.
\newblock \href {https://doi.org/10.1109/SFFCS.1999.814634}
  {\path{doi:10.1109/SFFCS.1999.814634}}.

\bibitem{DBLP:conf/cpm/Kosolobov15}
Dmitry Kosolobov.
\newblock Online detection of repetitions with backtracking.
\newblock In {\em Combinatorial Pattern Matching - 26th Annual Symposium, {CPM}
  2015}, volume 9133 of {\em Lecture Notes in Computer Science}, pages
  295--306. Springer, 2015.
\newblock \href {https://doi.org/10.1007/978-3-319-19929-0\_25}
  {\path{doi:10.1007/978-3-319-19929-0\_25}}.

\bibitem{DBLP:journals/ejc/KubicaRRW13}
Marcin Kubica, Jakub Radoszewski, Wojciech Rytter, and Tomasz Waleń.
\newblock On the maximum number of cubic subwords in a word.
\newblock {\em European Journal of Combinatorics}, 34(1):27--37, 2013.
\newblock \href {https://doi.org/10.1016/J.EJC.2012.07.012}
  {\path{doi:10.1016/J.EJC.2012.07.012}}.

\bibitem{DBLP:journals/combinatorics/LiPR24}
Shuo Li, Jakub Pachocki, and Jakub Radoszewski.
\newblock A note on the maximum number of k-powers in a finite word.
\newblock {\em Electronic Journal of Combinatorics}, 31(3), 2024.
\newblock \href {https://doi.org/10.37236/11270} {\path{doi:10.37236/11270}}.

\bibitem{DBLP:journals/jal/MainL84}
Michael~G. Main and Richard~J. Lorentz.
\newblock An ${O}(n \log n)$ algorithm for finding all repetitions in a string.
\newblock {\em Journal of Algorithms}, 5(3):422--432, 1984.
\newblock \href {https://doi.org/10.1016/0196-6774(84)90021-X}
  {\path{doi:10.1016/0196-6774(84)90021-X}}.

\bibitem{DBLP:conf/cpm/MunroNN20}
J.~Ian Munro, Gonzalo Navarro, and Yakov Nekrich.
\newblock Text indexing and searching in sublinear time.
\newblock In {\em 31st Annual Symposium on Combinatorial Pattern Matching,
  {CPM} 2020}, volume 161 of {\em LIPIcs}, pages 24:1--24:15. Schloss Dagstuhl
  - Leibniz-Zentrum f{\"{u}}r Informatik, 2020.
\newblock \href {https://doi.org/10.4230/LIPICS.CPM.2020.24}
  {\path{doi:10.4230/LIPICS.CPM.2020.24}}.

\bibitem{DBLP:conf/spire/RadoszewskiZ24}
Jakub Radoszewski and Wiktor Zuba.
\newblock Computing string covers in sublinear time.
\newblock In {\em String Processing and Information Retrieval - 31st
  International Symposium, {SPIRE} 2024}, volume 14899 of {\em Lecture Notes in
  Computer Science}, pages 272--288. Springer, 2024.
\newblock \href {https://doi.org/10.1007/978-3-031-72200-4\_21}
  {\path{doi:10.1007/978-3-031-72200-4\_21}}.

\bibitem{DBLP:journals/tcs/StoyeG02}
Jens Stoye and Dan Gusfield.
\newblock Simple and flexible detection of contiguous repeats using a suffix
  tree.
\newblock {\em Theoretical Computer Science}, 270(1-2):843--856, 2002.
\newblock \href {https://doi.org/10.1016/S0304-3975(01)00121-9}
  {\path{doi:10.1016/S0304-3975(01)00121-9}}.

\bibitem{Thue}
A.~Thue.
\newblock Über unendliche {Zeichenreihen}.
\newblock {\em Norske Videnskabers Selskabs Skrifter Mat.-Nat. Kl.}, 7:1--22,
  1906.

\end{thebibliography}

\end{document}